\documentclass{lmcs}
\usepackage{amssymb}
\usepackage{amsmath}
\usepackage[mathscr]{eucal}
\usepackage{mathrsfs}
\usepackage{xspace}
\usepackage[latin1]{inputenc}
\usepackage[english]{babel}
\usepackage[all,2cell]{xy}
\usepackage{xcolor}
\usepackage{graphicx}
\usepackage{adjustbox}
\usepackage{wasysym}
\usepackage{stmaryrd}
\usepackage{marvosym}
\UseAllTwocells

\newcommand{\concat}{\ensuremath{\text{\Large $\curlywedge$}}}
\newcommand{\bb}[1]{\ensuremath{\lvert #1 \rvert}}

\newcommand{\sbs}[2]{\left(\!\begin{smallmatrix}#1\\ #2\end{smallmatrix}\!\right)}

\numberwithin{equation}{section}

\begin{document}

\title{A Kleene theorem for free many-sorted algebras}

\author[L.~Gong]{L\"{u} Gong
\lmcsorcid{0000-0002-2819-2820}}[a]
\address{Nantong University, School of Mathematics and Statistics.}
\email{lieningzai1917@126.com}

\author[R.~Ruiz Mora]{Ra\'{u}l Ruiz Mora
\lmcsorcid{0009-0004-3739-7811}}[b]
\address{Universitat de Val\`{e}ncia, Departament de Matem\`{a}tiques.}
\email{raul.ruiz-mora@uv.es}
\email{enric.cosme@uv.es\,\Letter}

\author[N.~Sanmart\'{i}n Vich]{Nofre Sanmart\'{i}n Vich
\lmcsorcid{0000-0003-2606-9295}}[c]
\address{Analog Devices Inc.\ and Universitat Jaume I, Visual Engineering Group.}
\email{nofre.sanmartinvich@analog.com}

\author[E.~Cosme Ll\'{o}pez]{Enric Cosme Ll\'{o}pez
\lmcsorcid{0000-0001-8618-7328}}[b,a]

\subjclass{Theory of computation~\textemdash~Formal languages and automata theory~\textemdash~Algebraic language theory; Theory of computation~\textemdash~Formal languages and automata theory~\textemdash~Tree languages; Theory of computation~\textemdash~Formal languages and automata theory~\textemdash~Regular languages}
\keywords{Free many-sorted algebra, languages, recognizable, regular, Kleene theorem}

\begin{abstract}
In this work, we generalize Kleene's theorem from free single-sorted algebras to free many-sorted algebras. Our main result establishes that, under appropriate finitary assumptions, a language of a given sort in a free many-sorted algebra is recognizable if and only if it is regular.
\end{abstract}

\maketitle

\section{Introduction}
Kleene's theorem~\cite{Kle56, MY60} is among the foundational results of formal language theory. It establishes, for the regular languages, the coincidence of two a priori distinct descriptions: those denoted by the regular expressions, built from the alphabet symbols by means of union, concatenation, and the Kleene star, and those recognized by the finite automata. Equivalent algebraic characterizations were subsequently obtained by Myhill~\cite{Myh57}, in terms of congruences of finite index, and by Nerode~\cite{Ner58}, in terms of right-invariant equivalences of finite index, and were recast in automata-theoretic terms by Rabin and Scott~\cite{RS59}.

A first step towards abstraction was taken by Mezei and Wright~\cite{MW67}, who detached the notion of recognizability from that of string: a subset of an arbitrary single-sorted abstract algebra is said to be recognizable if it is the inverse image, under a homomorphism, of a subset of a finite algebra or, equivalently, if it is saturated by a congruence of finite index. When strings are replaced by terms, and hence by finite ordered trees labelled by function symbols, one is led to the theory of recognizable tree languages and finite tree automata, developed by G\'{e}cseg and Steinby~\cite{GS84, GS97}. In that setting the Kleene--Myhill--Nerode correspondence reappears: the recognizable tree languages are precisely those denoted by the regular tree expressions, built from the finite languages by means of union, tree concatenation, and an iteration operation that plays the role of the Kleene star.

On the algebraic side, the correspondence has been extended in several directions. Sch\"{u}tzenberger~\cite{Sch61} replaced languages by formal power series over a semiring, thereby identifying the series recognized by the weighted automata with the rational ones---the Kleene--Sch\"{u}tzenberger theorem---a quantitative theory whose systematic treatment is due to Berstel and Reutenauer~\cite{BR88}. A purely equational account was given by Conway~\cite{Con71}, whose regular algebras axiomatize the star operation and thus place iteration at the centre of the theory. In all of these settings the underlying structure is, as in the classical case, free or nearly so; and this is no accident, since already for the recognizable subsets of an arbitrary monoid regularity and recognizability diverge, as the counterexamples in the non-free case show~\cite{Eil74}. It is precisely this circumstance that singles out the free many-sorted algebras as the natural setting in which to seek an exact Kleene correspondence.

Beyond the algebraic setting, Kleene-type theorems have been pursued in a variety of computational frameworks, each one seeking the operations that exactly capture the finite-state behaviour. Cruchten~\cite{Cru25} obtained such a theorem for the lasso languages, and hence for the $\omega$-languages, where the ultimately periodic words are presented as pairs and an algebraic theory of rational lasso expressions is developed to match the saturated lasso automata; Bonsangue, Rutten, and Silva~\cite{BRS09} obtained one for the polynomial coalgebras, by means of a uniform language of expressions that subsumes several automata models at once; and Fahrenberg, Johansen, Struth, and Ziemia\'{n}ski~\cite{FJSZ24} obtained one for the higher-dimensional automata, where concurrency and choice give rise to regular expressions over pomset-like structures.

The origin of the present work lies at the intersection of these two traditions, the algebraic and the automata-theoretic, in the field of many-sorted universal algebra. Many of the structures that arise in computation are inherently many-sorted: from the typed lambda calculi and the structured programming languages to the models of concurrent processes and the XML schemas, the data and the operations upon them are organized into distinct types, or sorts. The framework is that of the many-sorted algebras~\cite{Mat76, Mat79}, in which a signature assigns to each function symbol the sorts of its arguments and that of its value, the variables themselves carry sorts, and a language is no longer a single set of terms but an $S$-sorted family of such sets. Here recognizability is witnessed by the homomorphisms onto the finite many-sorted algebras, and the operations out of which the regular expressions are built, in particular substitution and iteration, are performed sortwise.

We build on~\cite{CVCL20}, in which the preservation, and where applicable the reflection, of recognizability was established for several operators on the terms over a free many-sorted algebra---substitution, iteration, quotient, inverse image under a tree homomorphism, and direct image under a linear one---by way of the characterization in terms of finite-index congruences. The question of the regular languages, and of whether they exhaust the recognizable ones, was there left open. It is this question that the present work answers: we introduce the many-sorted regular expressions, we organize the regular languages into a regular algebra, and we prove that the two notions coincide.

The inclusion of the regular languages among the recognizable ones follows from the closure properties established in~\cite{CVCL20}. The converse inclusion is the harder one, and it is here that the substance of the work lies. Rather than appealing to congruences, we give for it a constructive and syntactic argument: from a homomorphism recognizing a language we build, by induction on the structure of the terms, a regular expression that denotes it. This generalizes Lemma~2.5.7 of G\'{e}cseg and Steinby~\cite{GS84}, which is itself an adaptation, to trees, of the proof given by McNaughton and Yamada~\cite{MY60} for the free monoid. The passage to the many-sorted setting is where the difficulty resides: whereas in the single-sorted case the elimination of states proceeds along a single dimension, here each step is indexed by a sort, and the inductive invariant---the family of languages introduced in the proof of the main result, in Section~4---must record, for every sort, how many iteration steps remain admissible.

Our main theorem states that, for every finite set of sorts $S$, every finite $S$-sorted signature $\Sigma$, every finite $S$-sorted set $X$, and every sort $s\in S$, the set $\mathrm{Rec}_{s}(\mathbf{T}_{\Sigma}(X))$ of $s$-recognizable languages in the free many-sorted algebra $\mathbf{T}_{\Sigma}(X)$ coincides with the set $\mathrm{Reg}_{s}(\mathbf{T}_{\Sigma}(X))$ of $s$-regular languages, that is, those denoted by a regular expression over $\Sigma$ built from the operations of $\Sigma$ together with the many-sorted operations of empty language, union, substitution, and iteration.

The paper is organized as follows. Section~2 provides the background on many-sorted sets and algebras and introduces the notion of recognizability, including recognizability at a given sort. Section~3 introduces the free many-sorted algebra together with the operations of substitution and iteration at a sort, which are shown to be compatible with recognizability. Section~4 defines the regular languages at a sort and establishes the main result, the many-sorted Kleene theorem.

Our underlying set theory is $\mathbf{ZFSk}$, Zermelo-Fraenkel-Skolem set theory (also known as $\mathbf{ZFC}$, i.e., Zermelo-Fraenkel set theory with the axiom of choice) plus the existence of a Grothendieck universe $\boldsymbol{\mathcal{U}}$, fixed once and for all (see~\cite{Mac98}, pp.~21--24). We recall that the elements of $\boldsymbol{\mathcal{U}}$ are called $\boldsymbol{\mathcal{U}}$-small sets and the subsets of $\boldsymbol{\mathcal{U}}$ are called $\boldsymbol{\mathcal{U}}$-large sets or classes. Moreover, from now on $\mathsf{Set}$ stands for the category of sets, i.e., the category whose set of objects is $\boldsymbol{\mathcal{U}}$ and whose set of morphisms is the set of all mappings between $\boldsymbol{\mathcal{U}}$-small sets.

In all that follows we use standard concepts and constructions from category theory, see e.g., \cite{HS73,Mac98,Rie16}, universal algebra, see e.g., \cite{Ber15,BS81,GTW78,Gra08,Mat76,Wec92}, and set theory, see e.g., \cite{Bou70,End77}. Nevertheless, regarding set theory, we have adopted the following conventions. An \emph{ordinal} $\alpha$ is a transitive set that is well-ordered by $\in$, thus $\alpha = \{\,\beta\mid \beta\in \alpha\,\}$. The first transfinite ordinal $\omega_{0}$ will be denoted by $\mathbb{N}$, which is the set of all \emph{natural numbers}, and, from what we have just said about the ordinals, for every $n\in \mathbb{N}$, $n = \{0, \ldots,n-1\}$. 

We will denote by $\mathrm{Fnc}(A,B)$ the set of all functions from $A$ to $B$. We recall that a function from $A$ to $B$ is a subset $F$ of $A\times B$ that satisfies the functional condition, i.e., such that for every $x\in A$, there exists a unique $y\in B$ such that $(x,y)\in F$. A function $F$ from $A$ to $B$ is usually denoted by $(F_{x})_{x\in A}$. We will denote by $\mathrm{Hom}(A,B)$ (and, sometimes, also by $B^{A}$) the set of all mappings from $A$ to $B$. We recall that a mapping from $A$ to $B$ is an ordered triple $f = (A,F,B)$, denoted by $f\colon A\longrightarrow B$, in which $F$ is a function from $A$ to $B$. Therefore $\mathrm{Hom}(A,B) = \{A\}\times \mathrm{Fnc}(A,B)\times\{B\}$. 

\section{Preliminaries}

In this section we collect the basic facts, mostly without proofs, about many-sorted sets, many-sorted algebras, and recognizability for arbitrary many-sorted algebras, that we will need in this work.

\subsection{Many-sorted sets}
\begin{asm}
From now on $S$ stands for a set of sorts in $\boldsymbol{\mathcal{U}}$, fixed once and for all.
\end{asm}

\begin{defi}
An $S$-\emph{sorted set} is a mapping $A = (A_{s})_{s\in S}$ from $S$ to $\boldsymbol{\mathcal{U}}$. If $A$ and $B$ are $S$-sorted sets, an $S$-\emph{sorted mapping from} $A$ \emph{to} $B$ is an $S$-indexed family $f = (f_{s})_{s\in S}$, where, for every $s$ in $S$, $f_{s}$ is a mapping from $A_{s}$ to $B_{s}$. Thus, an $S$-sorted mapping from $A$ to $B$ is an element of $\prod_{s\in S}\mathrm{Hom}(A_{s}, B_{s})$. We will denote by $\mathrm{Hom}(A,B)$ the set of all $S$-sorted mappings from $A$ to $B$. An $S$-sorted mapping from $A$ to $B$ will simply be denoted by $f\colon A \longrightarrow B$.
An $S$-sorted mapping $f\colon A \longrightarrow B$ will be called \emph{injective} if, for every $s\in S$, $f_{s}\colon A_{s}\longrightarrow B_{s}$ is injective. Similarly, it will be called \emph{surjective} if, for every $s\in S$, $f_{s}\colon A_{s}\longrightarrow B_{s}$ is surjective.

$S$-sorted sets and $S$-sorted mappings form a category which we will denote henceforth by $\mathsf{Set}^{S}$. 
\end{defi}

\begin{defi}
Let $I$ be a set in $\boldsymbol{\mathcal{U}}$ and $(A^{i})_{i\in I}$ an $I$-indexed family of $S$-sorted sets. Then the \emph{product} of $(A^{i})_{i\in I}$, denoted by $\prod_{i\in I}A^{i}$, is the $S$-sorted set defined, for every $s\in S$, as $\left(\prod\nolimits_{i\in I}A^{i}\right)_{s} = \prod\nolimits_{i\in I}A^{i}_{s}$, where
$$
\textstyle
\prod_{i\in I}A^{i}_{s} = \left\{(a_{i})_{i\in I}\in\mathrm{Fnc}(I,\bigcup_{i\in I}A^{i}_{s})\mid \forall\,i\in I\,(a_{i}\in A^{i}_{s})\right\}.
$$
For every $i\in I$, the \emph{$i$-th canonical projection}, $\mathrm{pr}^{i} = (\mathrm{pr}^{i}_{s})_{s\in S}$, is the $S$-sorted mapping from $\prod_{i\in I}A^{i}$ to $A^{i}$ that, for every $s\in S$, sends $(a_{i})_{i\in I}$ in $\prod_{i\in I}A^{i}_{s}$ to $a_{i}$ in $A^{i}_{s}$. 

The \emph{coproduct} of $(A^{i})_{i\in I}$, denoted by $\coprod_{i\in I}A^{i}$, is the $S$-sorted set defined, for every $s\in S$, as $\left(\coprod\nolimits_{i\in I}A^{i}\right)_{s} = \coprod\nolimits_{i\in I}A^{i}_{s}$, where
$$
\textstyle
\coprod_{i\in I}A^{i}_{s} = \bigcup_{i\in I}(A^{i}_{s}\times\{i\}).
$$
For every $i\in I$, the \emph{$i$-th canonical injection}, $\mathrm{in}^{i}$, is the $S$-sorted mapping from $A^{i}$ to $\coprod_{i\in I}A^{i}$ that, for every $s\in S$, sends $a$ in $A^{i}_{s}$ to $(a,i)$ in $\coprod_{i\in I}A^{i}_{s}$. 

The remaining set-theoretic operations on $S$-sorted sets: $\times$ (binary product), $\amalg$ (binary coproduct), $\bigcup$ (union), $\cup$ (binary union), $\bigcap$ (intersection), $\cap$ (binary intersection), $-$ (difference), and $\complement_{A}$ (complement of an $S$-sorted set in a fixed $S$-sorted $A$), are defined in a similar way, i.e., componentwise.
\end{defi}

\begin{defi}
We will denote by $1^{S}$ the (standard) final $S$-sorted set of $\mathsf{Set}^{S}$, which is $1^{S} = (1)_{s\in S}$, and by $\varnothing^{S}$ the initial $S$-sorted set, which is $\varnothing^{S} = (\varnothing)_{s\in S}$. We shall abbreviate $1^{S}$ to $1$ and $\varnothing^{S}$ to $\varnothing$ when this is unlikely to cause confusion.
\end{defi}

\begin{defi}
If $A$ and $X$ are $S$-sorted sets, then we will say that $X$ is a \emph{subset} of $A$, denoted by $X\subseteq A$, if, for every $s\in S$, $X_{s}\subseteq A_{s}$. We will denote by $\mathrm{Sub}(A)$ the set of all $S$-sorted sets $X$ such that $X\subseteq A$.
\end{defi}

\begin{defi}
Let $\delta$ be the mapping from $S\times \boldsymbol{\mathcal{U}}$ to $\boldsymbol{\mathcal{U}}^{S}$ that sends $(t,X)$ in $S\times \boldsymbol{\mathcal{U}}$ to the $S$-sorted set $\delta^{t,X} = (\delta^{t,X}_{s})_{s\in S}$ defined, for every $s\in S$, as follows: 
\begin{equation*}
\delta^{t,X}_{s} =
\begin{cases}
X, &\text{if $s = t$;}\\
\varnothing, & \text{if $s \neq t$.}
\end{cases}
\end{equation*}
We will call the value of $\delta$ at $(t,X)$ the \emph{delta of Kronecker associated to} $(t,X)$. If $X = \{x\}$, then, for simplicity of notation, we will write $\delta^{t,x}$ instead of $\delta^{t,\{x\}}$. Moreover, for a sort $t$ in $S$, $\delta^{t,1}$, the delta of Kronecker associated to $(t,1)$, will be denoted by $\delta^{t}$ and called \emph{delta of Kronecker}.
\end{defi}

We next define for an $S$-sorted mapping the associated mappings of direct and inverse image formation.

\begin{defi}
Let $f\colon A\longrightarrow B$ be an $S$-sorted mapping. Then the mapping 
$f[\cdot]\colon\mathrm{Sub}(A)\to\mathrm{Sub}(B)$,
of $f$-\emph{direct image formation}, sends $X \in \mathrm{Sub}(A)$ to $f[X] = (f_{s}[X_{s}])_{s\in S} \in \mathrm{Sub}(B)$, and the mapping 
$f^{-1}[\cdot]\colon\mathrm{Sub}(B)\longrightarrow\mathrm{Sub}(A)$,
of $f$-\emph{inverse image formation}, sends $Y \in \mathrm{Sub}(B)$ to $f^{-1}[Y] = (f_{s}^{-1}[Y_{s}] )_{s\in S} \in \mathrm{Sub}(A)$.
\end{defi}

We next define, for an $S$-sorted set $A$, the power object of $A$ in $\mathsf{Set}^{S}$. 

\begin{defi}
For an $S$-sorted set $A$, the $S$-sorted set $(\mathrm{Sub}(A_{s}))_{s\in S}$, usually denoted by $A^{\wp}$, is the \emph{power} set of $A$ in $\mathsf{Set}^{S}$. With the notation $A^{\wp}$, we have, for $s\in S$, that $A^{\wp}_{s} = \mathrm{Sub}(A_{s})$. 
\end{defi}

\begin{rem}
One should take great care not to confuse, for an $S$-sorted set $A$, $\mathrm{Sub}(A)$, which is a \emph{set}, i.e., an object of $\mathsf{Set}$---naturally isomorphic to the set $\prod_{s\in S}2^{A_{s}}$---, and $A^{\wp}$, which is an $S$-\emph{sorted set}, i.e., an object of $\mathsf{Set}^{S}$---naturally isomorphic to the $S$-sorted set $(2^{A_{s}})_{s\in S}$. However, it is clear that there exists a natural isomorphism between $\mathrm{Sub}(A)$ and $\prod A^{\wp}$.
\end{rem}

We next state that every $S$-sorted mapping $f\colon X\longrightarrow Y^{\wp}$ has a distinguished extension up to an $S$-sorted mapping $f^{\mathsf{p}}\colon X^{\wp}\longrightarrow Y^{\wp}$.

\begin{prop}\label{PSubFunct}
Let $X$ and $Y$ be $S$-sorted sets and $f$ an $S$-sorted mapping from $X$ to $Y^{\wp}$. Then there exists a unique $S$-sorted mapping $f^{\mathsf{p}}$ from $X^{\wp}$ to $Y^{\wp}$ such that 
\begin{enumerate}
\item $f^{\mathsf{p}}$ is completely additive, i.e., for every $s\in S$ and every $\mathcal{L}\subseteq \mathrm{Sub}(X_{s})$, 
$$
\textstyle
f^{\mathsf{p}}_{s}(\bigcup\mathcal{L}) = \bigcup_{L\in \mathcal{L}}f^{\mathsf{p}}_{s}(L), \text{ and}
$$ 
\item $f^{\mathsf{p}}\circ \{\cdot\}^{X} = f$, where $\{\cdot\}^{X} = (\{\cdot\}^{X}_{s})_{s\in S}$ is the $S$-sorted mapping from $X$ to $X^{\wp}$ defined, for every $s\in S$, as:
$$
\{\cdot\}^{X}_{s} 
\left\lbrace
\begin{array}{ccc}
X_{s} & \longrightarrow &
X^{\wp}_{s}\\
x& \longmapsto &
\{x\}
\end{array}
\right.
$$
\end{enumerate}
\end{prop}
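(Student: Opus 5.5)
We prove existence by an explicit formula and uniqueness by decomposing each subset into singletons. Since everything is defined sortwise, the work reduces to a single sort $s\in S$.

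For existence, define, for every $s\in S$ and every $L\in X^{\wp}_{s} = \mathrm{Sub}(X_{s})$,
$$
\textstyle
f^{\mathsf{p}}_{s}(L) = \bigcup_{x\in L} f_{s}(x).
$$
This is a subset of $Y_{s}$ because each $f_{s}(x)\in\mathrm{Sub}(Y_{s})$, so $f^{\mathsf{p}}_{s}$ is a mapping from $X^{\wp}_{s}$ to $Y^{\wp}_{s}$. Condition (2) is immediate: $f^{\mathsf{p}}_{s}(\{x\}) = f_{s}(x)$. For condition (1), take $\mathcal{L}\subseteq\mathrm{Sub}(X_{s})$. Then
$$
\textstyle
f^{\mathsf{p}}_{s}(\bigcup\mathcal{L}) = \bigcup_{x\in\bigcup\mathcal{L}}f_{s}(x) = \bigcup_{L\in\mathcal{L}}\bigcup_{x\in L}f_{s}(x) = \bigcup_{L\in\mathcal{L}}f^{\mathsf{p}}_{s}(L).
$$
The middle equality is associativity of unions: $y$ belongs to the left-hand side if and only if there exist $L\in\mathcal{L}$ and $x\in L$ with $y\in f_{s}(x)$. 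The case $\mathcal{L}=\varnothing$ is covered by the same computation, and gives $f^{\mathsf{p}}_{s}(\varnothing)=\varnothing$.

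For uniqueness, let $g$ be any $S$-sorted mapping from $X^{\wp}$ to $Y^{\wp}$ satisfying (1) and (2). Fix $s\in S$ and $L\in\mathrm{Sub}(X_{s})$. Write $L=\bigcup\{\{x\}\mid x\in L\}$. Applying complete additivity to $\mathcal{L}=\{\{x\}\mid x\in L\}$, and then (2), gives
$$
\textstyle
g_{s}(L)=\bigcup_{x\in L}g_{s}(\{x\})=\bigcup_{x\in L}f_{s}(x)=f^{\mathsf{p}}_{s}(L).
$$
Hence $g=f^{\mathsf{p}}$ sortwise, and therefore as $S$-sorted mappings.

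No step is a real obstacle. The one point that needs care is that complete additivity is required for \emph{all} families $\mathcal{L}$, including the empty family. This is what forces $g_{s}(\varnothing)=\varnothing$, and it is what makes the uniqueness argument work when $L=\varnothing$. We should also check that $f^{\mathsf{p}}$ is an $S$-sorted mapping in the paper's formal sense, namely an $S$-indexed family of mappings with the correct domains and codomains; this is clear from the sortwise definition.
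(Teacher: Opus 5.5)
Your proof is correct and follows the paper's argument: existence comes from the same explicit formula $f^{\mathsf{p}}_{s}(L)=\bigcup_{x\in L}f_{s}(x)$, and uniqueness from writing $L$ as the union of its singletons. The only difference is that you write out the additivity computation and the uniqueness step, including the empty family, which the paper simply calls clear.
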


\begin{proof}
The $S$-sorted mapping $f^{\mathsf{p}}$ from $X^{\wp}$ to $Y^{\wp}$ that, for every $s\in S$, assigns to $L\subseteq X_{s}$ the set $\bigcup_{x\in L}f_{s}(x)\subseteq Y_{s}$ is completely additive and $f^{\mathsf{p}}\circ \{\cdot\}^{X} = f$. Moreover, $f^{\mathsf{p}}$ is clearly the unique $S$-sorted mapping from $X^{\wp}$ to $Y^{\wp}$ satisfying the aforementioned conditions.
\end{proof}

\begin{rem}
The mappings of type $f^{\mathsf{p}}$, which are the solution of a universal mapping problem, will be extensively used in Section~3.
\end{rem}

We next state, after defining a suitable subcategory $\mathsf{Set}^{S}_{\wp,\mathrm{ca}}$ of $\mathsf{Set}^{S}$, that the canonical embedding $\mathrm{In}_{\mathsf{Set}^{S}_{\wp,\mathrm{ca}},\mathsf{Set}^{S}}$ of $\mathsf{Set}^{S}_{\wp,\mathrm{ca}}$ into $\mathsf{Set}^{S}$ has a left adjoint.

\begin{defi}
We denote by $\mathsf{Set}^{S}_{\wp,\mathrm{ca}}$ the subcategory of $\mathsf{Set}^{S}$ defined as follows:
\begin{enumerate}
\item Objects: the $S$-sorted sets $X^{\wp}$, where $X\in\boldsymbol{\mathcal{U}}^{S}$. 
\item Morphisms from the object $X^{\wp}$ to the object $Y^{\wp}$: the completely additive $S$-sorted mappings from $X^{\wp}$ to $Y^{\wp}$.
\end{enumerate}
\end{defi}

From Proposition~\ref{PSubFunct} we immediately obtain the following corollary.

\begin{cor}\label{CSubFunct}
The functor $(\cdot)^{\wp}$ from $\mathsf{Set}^{S}$ to $\mathsf{Set}^{S}_{\wp,\mathrm{ca}}$
that sends
\begin{enumerate}
\item an $S$-sorted set $X$ to $X^{\wp}$ and
\item an $S$-sorted mapping $f$ from $X$ to $Y$ to the $S$-sorted mapping $(\{\cdot\}^{Y}\circ f)^{\mathsf{p}}$ from $X^{\wp}$ to $Y^{\wp}$, denoted by $f^{\wp}$ for short, 
\end{enumerate} 
is left adjoint for the functor $\mathrm{In}_{\mathsf{Set}^{S}_{\wp,\mathrm{ca}},\mathsf{Set}^{S}}$ from $\mathsf{Set}^{S}_{\wp,\mathrm{ca}}$ to $\mathsf{Set}^{S}$, that is, $$(\cdot)^{\wp}\dashv \mathrm{In}_{\mathsf{Set}^{S}_{\wp,\mathrm{ca}},\mathsf{Set}^{S}}.$$

Moreover, the $S$-sorted mappings $\{\cdot\}^{X}$, for $X$ in $\boldsymbol{\mathcal{U}}^{S}$, are the components of the unit of the adjunction: 
$$
\mathrm{Hom}_{\mathsf{Set}^{S}_{\wp,\mathrm{ca}}}(X^{\wp},Y^{\wp})\cong
\mathrm{Hom}_{\mathsf{Set}^{S}}(X,Y^{\wp}).
$$ 
\end{cor}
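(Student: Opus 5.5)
The plan is to deduce Corollary~\ref{CSubFunct} from Proposition~\ref{PSubFunct}, reading the latter as a universal property. For every $S$-sorted set $X$, the pair $(X^{\wp},\{\cdot\}^{X})$ is then a universal arrow from $X$ to the inclusion functor $\mathrm{In}_{\mathsf{Set}^{S}_{\wp,\mathrm{ca}},\mathsf{Set}^{S}}$. Indeed, given any object $Y^{\wp}$ of $\mathsf{Set}^{S}_{\wp,\mathrm{ca}}$ and any $S$-sorted mapping $f\colon X\longrightarrow Y^{\wp}$, the proposition yields a unique morphism $f^{\mathsf{p}}\colon X^{\wp}\longrightarrow Y^{\wp}$ of $\mathsf{Set}^{S}_{\wp,\mathrm{ca}}$ (a completely additive mapping) with $f^{\mathsf{p}}\circ\{\cdot\}^{X}=f$. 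By the standard characterization of adjunctions through universal arrows (Mac~Lane, Theorem~IV.1.2), this gives a left adjoint, with the assignment $X\mapsto X^{\wp}$ on objects and $\{\cdot\}^{X}$ as the unit components. The hom-set bijection $\mathrm{Hom}_{\mathsf{Set}^{S}_{\wp,\mathrm{ca}}}(X^{\wp},Y^{\wp})\cong\mathrm{Hom}_{\mathsf{Set}^{S}}(X,Y^{\wp})$ is $g\mapsto g\circ\{\cdot\}^{X}$, with inverse $f\mapsto f^{\mathsf{p}}$. These two maps are mutually inverse precisely because of the existence and uniqueness clauses of Proposition~\ref{PSubFunct}.

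Before invoking the universal property, I will check that $\mathsf{Set}^{S}_{\wp,\mathrm{ca}}$ really is a subcategory. Identities are completely additive, and a composite of completely additive mappings is completely additive, which follows sortwise from the defining equation.

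Next I will identify the action on morphisms forced by the construction. For $f\colon X\longrightarrow Y$, the left adjoint must send $f$ to the unique morphism $h\colon X^{\wp}\longrightarrow Y^{\wp}$ with $h\circ\{\cdot\}^{X}=\{\cdot\}^{Y}\circ f$, that is, $h=(\{\cdot\}^{Y}\circ f)^{\mathsf{p}}=f^{\wp}$, exactly as in the statement. Concretely, $f^{\wp}_{s}(L)=f_{s}[L]$. Functoriality then follows by uniqueness, with no explicit computation:
\begin{itemize}
\item $\mathrm{id}_{X}^{\wp}$ and $\mathrm{id}_{X^{\wp}}$ both satisfy $h\circ\{\cdot\}^{X}=\{\cdot\}^{X}$, so they coincide;
\item $(g\circ f)^{\wp}$ and $g^{\wp}\circ f^{\wp}$ are both completely additive and both restrict along $\{\cdot\}^{X}$ to $\{\cdot\}^{Z}\circ g\circ f$, so they coincide.
\end{itemize}

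It remains to prove naturality of the bijection in $X$ and in $Y^{\wp}$. Take $u\colon X'\longrightarrow X$ in $\mathsf{Set}^{S}$ and a completely additive $k\colon Y^{\wp}\longrightarrow Z^{\wp}$. Then $k\circ g\circ u^{\wp}\circ\{\cdot\}^{X'}=k\circ g\circ\{\cdot\}^{X}\circ u$, using the defining equation of $u^{\wp}$. This equation is exactly the naturality square.

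I expect no real obstacle in this argument. The only point requiring care is bookkeeping: the morphisms of $\mathsf{Set}^{S}_{\wp,\mathrm{ca}}$ are arbitrary completely additive mappings, not only those of the form $f^{\wp}$, so uniqueness must be applied within that full class. Proposition~\ref{PSubFunct} guarantees exactly this uniqueness.
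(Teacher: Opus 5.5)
Your proposal is correct and takes essentially the same route as the paper, which simply states that the corollary is immediate from Proposition~\ref{PSubFunct}. You make this explicit: you read that proposition as saying that $(X^{\wp},\{\cdot\}^{X})$ is a universal arrow from $X$ to the inclusion functor, and then invoke the universal-arrow characterization of adjunctions (Mac~Lane, Theorem~IV.1.2). That theorem also forces the stated action $f\mapsto(\{\cdot\}^{Y}\circ f)^{\mathsf{p}}$ on morphisms, so your separate checks of the subcategory axioms, functoriality by uniqueness, and naturality of $g\mapsto g\circ\{\cdot\}^{X}$ are sound but partly redundant.
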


\begin{rem}
For an $S$-sorted mapping $f$ from $X$ to $Y$ the $S$-sorted mapping $f^{\wp}$ from $X^{\wp}$ to $Y^{\wp}$ sends, for every $s\in S$, $L\subseteq X_{s}$ to $f_{s}[L]\subseteq Y_{s}$, i.e., $f^{\wp} = (f_{s}[\cdot])_{s\in S}$. One should be careful not to confuse $f[\cdot]$, which is a \emph{mapping} from the \emph{set} $\mathrm{Sub}(X)$ to the \emph{set} $\mathrm{Sub}(Y)$, and $f^{\wp}$, which is an $S$-\emph{sorted mapping} from the $S$-\emph{sorted set} $X^{\wp} = (\mathrm{Sub}(X_{s}))_{s\in S}$ to the $S$-\emph{sorted set} $Y^{\wp} = (\mathrm{Sub}(Y_{s}))_{s\in S}$. However, it is evident that $f[\cdot]$ and $\prod f^{\wp}$ are \emph{essentially} the same mapping.
\end{rem}

We next introduce the notion of finite many-sorted set.

\begin{defi}
Let $A$ be an $S$-sorted set. Then the \emph{cardinal} of $A$, denoted by $\mathrm{card}(A)$, is $\mathrm{card}(\coprod A)$, i.e., the cardinal of the set $\coprod A = \bigcup_{s\in S}(A_{s}\times \{s\})$. An $S$-sorted set $A$ is \emph{finite} if $\mathrm{card}(A)<\aleph_{0}$. We will say that an $S$-sorted set $X$ is a \emph{finite} subset of $A$ if $X$ is finite and $X\subseteq A$. We will denote by $\mathrm{Sub}_{\mathrm{f}}(A)$ the set of all $S$-sorted sets $X$ in $\mathrm{Sub}(A)$ which are finite.
\end{defi}

\begin{defi}
Let $A$ be an $S$-sorted set. Then the \emph{support of} $A$, denoted by $\mathrm{supp}_{S}(A)$, is the set $\{s\in S\mid A_{s}\neq \varnothing\}$.
\end{defi}

\begin{rem}
An $S$-sorted set $A$ is finite if and only if $\mathrm{supp}_{S}(A)$ is finite and, for every $s\in \mathrm{supp}_{S}(A)$, $A_{s}$ is finite.
\end{rem}

\subsection{Free monoid}

We next define the concept of free monoid on a set and several notions associated with it that will be used afterwards to construct the free algebra on an $S$-sorted set and to define various substitution operators.

\begin{defi}
Let $A$ be a set. The \emph{free monoid on} $A$, denoted by $\mathbf{A}^{\star}$, is $(A^{\star},\curlywedge,\lambda)$, where $A^{\star}$, the set of all \emph{words on} $A$, is $\bigcup_{n\in\mathbb{N}}\mathrm{Hom}(n,A)$, the set of all mappings $\mathbf{a}\colon n\longrightarrow A$ from some $n\in \mathbb{N}$ to $A$. A word $\mathbf{a}\in A^{\star}$ is usually denoted as a sequence $(a_{i})_{i\in\lvert{\mathbf{a}}\rvert}$, where, for $i\in\lvert{\mathbf{a}}\rvert$, $a_{i}$ is the letter in $A$ satisfying $\mathbf{a}(i)=a_{i}$. Furthermore, $\curlywedge$, the \emph{concatenation} of words on $A$, is the binary operation on $A^{\star}$ which sends a pair of words $(\mathbf{a},\mathbf{b})$ on $A$ to the mapping $\mathbf{a}\curlywedge \mathbf{b}$ from $\lvert{\mathbf{a}}\rvert+\lvert{\mathbf{b}}\rvert$ to $A$, where $\lvert{\mathbf{a}}\rvert$ and $\lvert{\mathbf{b}}\rvert$ are the lengths ($\equiv$ domains) of the mappings $\mathbf{a}$ and $\mathbf{b}$, respectively, defined as follows:
$$
\mathbf{a}\curlywedge \mathbf{b}
\left\lbrace
\begin{array}{ccl}
{\lvert{\mathbf{a}}\rvert+\lvert{\mathbf{b}}\rvert}&\longrightarrow&{A}\\
{i}&\longmapsto&{
\begin{cases}
 a_{i}, & \text{if $0\leq i < \lvert{\mathbf{a}}\rvert$;}\\
 b_{i-\lvert{\mathbf{a}}\rvert}, & \text{if $\lvert{\mathbf{a}}\rvert\leq i < \lvert{\mathbf{a}}\rvert+\lvert{\mathbf{b}}\rvert$,}
\end{cases}
 }
\end{array}
\right.
$$
and $\lambda$, the \emph{empty word on} $A$, is the unique mapping from $\varnothing$ to $A$. We will denote by $\eta^{A}$ the mapping from $A$ to $A^{\star}$ that sends $a\in A$ to $(a)\in A^{\star}$, i.e., to the mapping $(a)\colon 1\longrightarrow A$ that sends $0$ to $a$. The ordered pair $(\mathbf{A}^{\star},\eta^{A})$ is a universal morphism from $A$ to the forgetful functor from the category $\mathsf{Mon}$, of monoids, to $\mathsf{Set}$.
\end{defi}

\begin{rem}
For a word $\mathbf{a}\in A^{\star}$, $\lvert{\mathbf{a}}\rvert$, the length of $\mathbf{a}$, is the value at $\mathbf{a}$ of the unique homomorphism $\lvert{\,\cdot\,}\rvert$ from $\mathbf{A}^{\star}$ to $(\mathbb{N},+,0)$, the additive monoid of the natural numbers, such that $\lvert{\,\cdot\,}\rvert\circ \eta^{A} = \kappa_{1}$, where $\kappa_{1}$ is the mapping from $A$ to $\mathbb{N}$ constantly $1$. Note that, for every $n\in \mathbb{N}$, $\lvert{\,\cdot\,}\rvert$ sends $\mathbf{a}\in \mathrm{Hom}(n,A)$ to $n$. Thus, for the family of mappings $(\kappa_{n})_{n\in \mathbb{N}}$, where, for every $n\in \mathbb{N}$, $\kappa_{n}$ is the mapping from $\mathrm{Hom}(n,A)$ to $\mathbb{N}$ constantly $n$, and by applying the universal property of the coproduct, we have $\lvert{\,\cdot\,}\rvert = [\kappa_{n}]_{n\in \mathbb{N}}$.
\end{rem}

\begin{defi}\label{DSubw}
Let $\mathbf{a}$ and $\mathbf{a}'$ be words in $A^{\star}$ and $k$, $l\in\lvert{\mathbf{a}}\rvert$ such that $k\leq l$. We will say that $\mathbf{a}'$ is a \emph{subword} of $\mathbf{a}$ \emph{beginning at position} $k$ and \emph{ending at position} $l$ if there are words $\mathbf{b}$ and $\mathbf{c}$ such that $\mathbf{a} = \mathbf{b}\curlywedge \mathbf{a}'\curlywedge \mathbf{c}$, $\lvert{\mathbf{b}}\rvert = k$, and $\lvert{\mathbf{c}}\rvert = (\lvert{\mathbf{a}}\rvert-1)-l$. A word $\mathbf{a}$ may have several subwords equal to $\mathbf{a}'$. In that case, the equation $\mathbf{a} = \mathbf{b}\curlywedge \mathbf{a}'\curlywedge \mathbf{c}$ has several solutions $(\mathbf{b},\mathbf{c})$.
If the pairs $(\mathbf{b}_{i},\mathbf{c}_{i})$ $(i\in n)$ are all solutions of $\mathbf{a} = \mathbf{b}\curlywedge \mathbf{a}'\curlywedge \mathbf{c}$ and if $\lvert{\mathbf{b}_{0}}\rvert< \lvert{\mathbf{b}_{1}}\rvert<\cdots<\lvert{\mathbf{b}_{n-1}}\rvert$, then each pair $(\mathbf{b}_{i},\mathbf{c}_{i})$ determines the $i$-th occurrence of $\mathbf{a}'$ in $\mathbf{a}$. The solution $(\mathbf{b},\mathbf{c})$ in which either $\mathbf{b}$ or $\mathbf{c}$ is $\lambda$ is not excluded.
Let $\mathbf{a}$ be a word in $A^{\star}$ and $a\in A$. We will say that \emph{$a$ occurs in} $\mathbf{a}$ if there are words $\mathbf{b}$, $\mathbf{c}$ in $A^{\star}$ such that $\mathbf{a} = \mathbf{b}\curlywedge(a)\curlywedge \mathbf{c}$. Note that $a$ occurs in $\mathbf{a}$ if and only if there exists an $i\in \lvert{\mathbf{a}}\rvert$ such that $\mathbf{a}(i) = a$. We will denote by $\lvert{\mathbf{a}}\rvert_{a}$ the natural number $$\mathrm{card}(\{i\in \lvert{\mathbf{a}}\rvert\mid \mathbf{a}(i) = a\}) = \mathrm{card}(\mathbf{a}^{-1}[\{a\}]),$$
that is, the number of occurrences of $a$ in $\mathbf{a}$. Moreover, we let $(i_{j})_{j\in\lvert{\mathbf{a}}\rvert_{a}}$ stand for the enumeration in ascending order of the occurrences of $a$ in $\mathbf{a}$. Thus, $(i_{j})_{j\in\lvert{\mathbf{a}}\rvert_{a}}$ is the order embedding of $(\lvert{\mathbf{a}}\rvert_{a},<)$ into $(\lvert{\mathbf{a}}\rvert,<)$ defined recursively as follows:
\[i_{0} = \mathrm{min}\{i\in \lvert{\mathbf{a}}\rvert\mid \mathbf{a}(i) = a\},\]
and, for $j\in [1,\lvert{\mathbf{a}}\rvert_{a}-1]$, 
$$
i_{j} = \mathrm{min}\{i\in \lvert{\mathbf{a}}\rvert-\{i_{0},\ldots,i_{j-1}\}\mid \mathbf{a}(i) = a\}.
$$
If, for every $j\in \lvert{\mathbf{a}}\rvert_{a}$, the pairs $(\mathbf{b}_{i_{j}},\mathbf{c}_{i_{j}})$ are all solutions of $\mathbf{a} = \mathbf{b}_{i_{j}}\curlywedge (a)\curlywedge \mathbf{c}_{i_{j}}$ and if $\lvert{\mathbf{b}_{i_{0}}}\rvert< \lvert{\mathbf{b}_{i_{1}}}\rvert<\cdots<\lvert{\mathbf{b}_{i_{\lvert{\mathbf{a}}\rvert_{a}-1}}}\rvert$, then each pair $(\mathbf{b}_{i_{j}},\mathbf{c}_{i_{j}})$ determines the $j$-th occurrence of $(a)$ in $\mathbf{a}$ and we will say that \emph{$a$ occurs at the $i_{j}$-th place of $\mathbf{a}$}.
\end{defi}

\begin{rem}
Let $a$ be an element of $A$. Then, for $\mathbf{a}\in A^{\star}$, $\lvert{\mathbf{a}}\rvert_{a}$, the number of occurrences of $a$ in $\mathbf{a}$, is the value at $\mathbf{a}$ of the unique homomorphism $\lvert{\,\cdot\,}\rvert_{a}$ from $\mathbf{A}^{\star}$ to $(\mathbb{N},+,0)$ such that $\lvert{\,\cdot\,}\rvert_{a}\circ\eta^{A} = \delta_{a}$, where $\delta_{a}$ is the mapping from $A$ to $\mathbb{N}$ that sends $a\in A$ to $1\in\mathbb{N}$ and $b\in A-\{a\}$ to $0\in \mathbb{N}$. Therefore, since, for every $\mathbf{a}\in A^{\star}$, the $A$-indexed family $(\lvert{\mathbf{a}}\rvert_{a})_{a\in A}$ in $\mathbb{N}$ is such that $\lvert{\mathbf{a}}\rvert_{a} = 0$ for all but a finite number of elements $a$ in $A$, i.e., is such that $\mathrm{card}(\{a\in A\mid \lvert{\mathbf{a}}\rvert_{a}\neq 0\})<\aleph_{0}$, we have that $\sum_{a\in A}\lvert{\mathbf{a}}\rvert_{a}\in \mathbb{N}$ and, obviously, for every $\mathbf{a}\in A^{\star}$, $\lvert{\mathbf{a}}\rvert = \sum_{a\in A}\lvert{\mathbf{a}}\rvert_{a}$, i.e., $\lvert{\,\cdot\,}\rvert = \sum_{a\in A}\lvert{\,\cdot\,}\rvert_{a}$.
\end{rem}

\begin{defi} Let $\mathbf{a}$ be a word in $A^{\star}$. Then we will denote by $\sbs{a}{\cdot}_{a\in A}(\mathbf{a})$ the mapping from $\prod_{a\in A}(A^{\star})^{\lvert{\mathbf{a}}\rvert_{a}}$ to $A^{\star}$ defined as follows:
$$
\sbs{a}{\cdot}_{a\in A}(\mathbf{a})
\left\lbrace
\begin{array}{ccl}
{\prod_{a\in A}(A^{\star})^{\lvert{\mathbf{a}}\rvert_{a}}}&\longrightarrow&
{A^{\star}}\\
{((q^{a}_{\alpha})_{\alpha\in \lvert{\mathbf{a}}\rvert_{a}})_{a\in A}}
&\longmapsto&
{\sbs{a}{(q^{a}_{\alpha})_{\alpha\in \lvert{\mathbf{a}}\rvert_{a}}}_{a\in A}(\mathbf{a}),}
\end{array}
\right.
$$
where the word $\sbs{a}{(q^{a}_{\alpha})_{\alpha\in \lvert{\mathbf{a}}\rvert_{a}}}_{a\in A}(\mathbf{a})$ in $A^{\star}$ is obtained by substituting in $\mathbf{a}$, for every $a\in A$ and every $\alpha\in \lvert{\mathbf{a}}\rvert_a$, $q^{a}_{\alpha}$ for the $i_{\alpha}$-th occurrence of $a$ in $\mathbf{a}$. We call $\sbs{a}{(q^{a}_{\alpha})_{\alpha\in \lvert{\mathbf{a}}\rvert_{a}}}_{a\in A}(\mathbf{a})$ \emph{the substitution of $(q^{a}_{\alpha})_{\alpha\in\lvert{\mathbf{a}}\rvert_{a}}$ for $a$ in $\mathbf{a}$ for every $a$ in $A$}, and $\sbs{a}{\cdot}_{a\in A}(\mathbf{a})$ the \emph{substitution operator for $\mathbf{a}$}. 

Let $a$ be an element of $A$. Then we will denote by $\sbs{a}{\cdot}(\mathbf{a})$ the mapping from $(A^{\star})^{\lvert{\mathbf{a}}\rvert_{a}}$ to $A^{\star}$ defined as follows:
$$
\sbs{a}{\cdot}(\mathbf{a})
\left\lbrace
\begin{array}{rcl}
{(A^{\star})^{\lvert{\mathbf{a}}\rvert_{a}}}&\longrightarrow&
{A^{\star}}\\
{(q^{a}_{\alpha})_{\alpha\in \lvert{\mathbf{a}}\rvert_{a}}}&\longmapsto&
{\sbs{a}{(q^{a}_{\alpha})_{\alpha\in \lvert{\mathbf{a}}\rvert_{a}}}(\mathbf{a}),}
\end{array}
\right.
$$
where the word $\sbs{a}{(q^{a}_{\alpha})_{\alpha\in \lvert{\mathbf{a}}\rvert_{a}}}(\mathbf{a})$ in $A^{\star}$ is obtained by substituting in $\mathbf{a}$, for every $\alpha\in \lvert{\mathbf{a}}\rvert_{a}$, $q^{a}_{\alpha}$ for the $i_{\alpha}$-th occurrence of $a$ in $\mathbf{a}$. We call $\sbs{a}{(q^{a}_{\alpha})_{\alpha\in \lvert{\mathbf{a}}\rvert_{a}}}(\mathbf{a})$ \emph{the substitution of $(q^{a}_{\alpha})_{\alpha\in\lvert{\mathbf{a}}\rvert_{a}}$ for $a$ in $\mathbf{a}$}.
\end{defi}

\subsection{Many-sorted algebras} 
Our next aim is to provide the notions from the field of many-sorted universal algebra that will be used later.

\begin{conv} In what follows, for the set of sorts $S$, an arbitrary word on $S^{\star}$ will be denoted by $\mathbf{s}$, i.e., a lower case bold type $s$. The letter $s$ will be used to
represent an arbitrary letter in $S$.
\end{conv}

\begin{defi}\label{DSig}
An $S$-\emph{sorted signature} is a mapping $\Sigma$ from $S^{\star}\times S$ to $\boldsymbol{\mathcal{U}}$ which sends a pair $(\mathbf{s},s)\in S^{\star}\times S$ to the set $\Sigma_{\mathbf{s},s}$ of the \emph{formal operations} of \emph{arity} $\mathbf{s}$, \emph{sort} (or \emph{coarity}) $s$, and \emph{rank} (or \emph{biarity}) $(\mathbf{s},s)$.
\end{defi}

\begin{asm}
From now on $\Sigma$ stands for an $S$-sorted signature, fixed once and for all.
\end{asm}

We shall now give precise definitions of the concepts of many-sorted algebra and of homomorphism between many-sorted algebras.

\begin{defi}\label{DAlg}
The $S^{\star}\times S$-sorted set of the \emph{finitary operations on} an $S$-sorted set $A$ is $(\mathrm{Hom}(A_{\mathbf{s}},A_{s}))_{(\mathbf{s},s)\in S^{\star}\times S}$, where, for every $\mathbf{s}\in S^{\star}$, $A_{\mathbf{s}} = \prod_{j\in \bb{\mathbf{s}}}A_{s_{j}}$, with $\bb{\mathbf{s}}$ denoting the length of the word $\mathbf{s}$ (if $\mathbf{s} = \lambda$, then $A_{\lambda}$ is a final set). Sometimes we let $\mathrm{O}_{\mathrm{H}_{S}}(A)$ stand for 
$(\mathrm{Hom}(A_{\mathbf{s}},A_{s}))_{(\mathbf{s},s)\in S^{\star}\times S}$. A \emph{structure of} $\Sigma$-\emph{algebra on} an $S$-sorted set $A$ is a family $(F_{\mathbf{s},s})_{(\mathbf{s},s)\in S^{\star}\times S}$, denoted by $F$, where, for $(\mathbf{s},s)\in S^{\star}\times S$, $F_{\mathbf{s},s}$ is a mapping from $\Sigma_{\mathbf{s},s}$ to $\mathrm{Hom}(A_{\mathbf{s}},A_{s})$ (if $(\mathbf{s},s) = (\lambda,s)$ and $\sigma\in \Sigma_{\lambda,s}$, then $F_{\lambda,s}(\sigma)$ picks out an element of $A_{s}$). A \emph{many-sorted} $\Sigma$-\emph{algebra} (or, to abbreviate, $\Sigma$-\emph{algebra}) is a pair $(A,F)$, denoted by $\mathbf{A}$, where $A$ is an $S$-sorted set and $F$ a structure of $\Sigma$-algebra on $A$. For a pair $(\mathbf{s},s)\in S^{\star}\times S$ and a formal operation $\sigma\in \Sigma_{\mathbf{s},s}$, in order to simplify the notation, the operation $F_{\mathbf{s},s}(\sigma)$ from $A_{\mathbf{s}}$ to $A_{s}$ will be written as $F_{\sigma}$. In some cases, to avoid mistakes, we will denote by $F^{\mathbf{A}}$ the structure of $\Sigma$-algebra on $A$, and, for $(\mathbf{s},s)\in S^{\star}\times S$ and $\sigma\in \Sigma_{\mathbf{s},s}$, by $F^{\mathbf{A}}_{\sigma}$, or simply by $\sigma^{\mathbf{A}}$, the corresponding operation. Moreover, for $s\in S$ and $\sigma\in\Sigma_{\lambda,s}$, we will, usually, denote by $\sigma^{\mathbf{A}}$ the value of the mapping $F^{\mathbf{A}}_{\sigma}\colon A_{\lambda}\longrightarrow A_{s}$ at the unique element in $A_{\lambda}$.

A $\Sigma$-\emph{homomorphism} (or, to abbreviate, \emph{homomorphism}) from $\mathbf{A}$ to $\mathbf{B}$, where $\mathbf{B} = (B,G)$, is a triple $(\mathbf{A},f,\mathbf{B})$, denoted by $f\colon \mathbf{A}\longrightarrow \mathbf{B}$, where $f$ is an $S$-sorted mapping from $A$ to $B$ such that, for every $(\mathbf{s},s)\in S^{\star}\times S$, every $\sigma\in \Sigma_{\mathbf{s},s}$, and every $(a_{j})_{j\in \bb{\mathbf{s}}}\in A_{\mathbf{s}}$, we have 
$$
f_{s}(\sigma^{\mathbf{A}}((a_{j})_{j\in \bb{\mathbf{s}}})) = \sigma^{\mathbf{B}}(f_{\mathbf{s}}((a_{j})_{j\in \bb{\mathbf{s}}})),
$$
where $f_{\mathbf{s}}$ is the mapping $\prod_{j\in \bb{\mathbf{s}}}f_{s_{j}}$ from $A_{\mathbf{s}}$ to $B_{\mathbf{s}}$ that sends $(a_{j})_{j\in \bb{\mathbf{s}}}$ in $A_{\mathbf{s}}$ to $(f_{s_{j}}(a_{j}))_{j\in \bb{\mathbf{s}}}$ in $B_{\mathbf{s}}$. For a $\Sigma$-algebra $\mathbf{A}$ the triple $(\mathbf{A},\mathrm{id}^{A},\mathbf{A})$, where $\mathrm{id}^{A}$ is the identity at $A$, which is a homomorphism from $\mathbf{A}$ to $\mathbf{A}$, will be denoted by $\mathrm{id}^{\mathbf{A}}$ and will be called the \emph{identity} homomorphism at $\mathbf{A}$. We will denote by $\mathsf{Alg}(\Sigma)$ the category of $\Sigma$-algebras and homomorphisms and by $\mathrm{Alg}(\Sigma)$ the set of objects of $\mathsf{Alg}(\Sigma)$.
\end{defi}

\begin{defi}
Let $\mathbf{A}$ be a $\Sigma$-algebra. Then the \emph{support of} $\mathbf{A}$, denoted by $\mathrm{supp}_{S}(\mathbf{A})$, is $\mathrm{supp}_{S}(A)$, i.e., the support of the underlying $S$-sorted set $A$ of $\mathbf{A}$.
\end{defi}

\begin{defi}
Let $\mathbf{A}$ be a $\Sigma$-algebra. We will say that $\mathbf{A}$ is \emph{finite} if $A$, the underlying $S$-sorted set of $\mathbf{A}$, is finite.
\end{defi}

For every $\Sigma$-algebra it is possible to define a preorder on the coproduct of its underlying $S$-sorted set. We note that, for single-sorted algebras, in~\cite{Die66}, Diener introduced an order relation on generalized Peano algebras and studied the properties of this relation.

\begin{defi}\label{DOrdAlg}
Let $\mathbf{A}=(A,F)$ be a $\Sigma$-algebra. Then $<_{\mathbf{A}}$ denotes the binary relation on $\coprod A$ consisting of the ordered pairs $((a,t),(b,s))\in(\coprod A)^{2}$ for which there exists a $\mathbf{s}\in S^{\star}-\{\lambda\}$, a
$\sigma\in\Sigma_{\mathbf{s},s}$, and an $(a_{j})_{j\in\bb{\mathbf{s}}}\in A_{\mathbf{s}}$ such that $\sigma^{\mathbf{A}}((a_{j})_{j\in\bb{\mathbf{s}}}) = b$ and, for some $j\in \lvert{\mathbf{s}}\rvert$, $s_{j}=t$ and $a_{j}=a$. We will denote by $\leq_{\mathbf{A}}$ the reflexive and transitive closure of $<_{\mathbf{A}}$, i.e., the preorder on $\coprod A$ generated by $<_{\mathbf{A}}$.
\end{defi}

\begin{rem}
The preorder $\leq_{\mathbf{A}}$ on $\coprod A$ is defined by letting $((a,t),(b,s))\in \leq_{\mathbf{A}}$ mean that $t = s$ and $a = b$ or there exists an $n\in \mathbb{N}-\{0\}$, a $\mathbf{u}\in S^{\star}$, and a family $(c_{i})_{i\in \lvert{\mathbf{u}}\rvert}\in A_{\mathbf{u}}$ such that $\lvert{\mathbf{u}}\rvert = n+1$, $u_{0} = t$, $u_{n} = s$, $c_{0} = a$, $c_{n} = b$, and, for every $i\in n$, $((c_{i},u_{i}),(c_{i+1},u_{i+1}))\in <_{\mathbf{A}}$.
\end{rem}

We next state, among other things, that there exists an endofunctor of $\mathsf{Alg}(\Sigma)$ that assigns to a $\Sigma$-algebra $\mathbf{A}$ the subset $\Sigma$-algebra constructed on $A^{\wp}$. 
We note that, in Section~3, the subset $\Sigma$-algebra associated to a free many-sorted algebra will appear as the codomain of a substitution operator which will be used on some recognizability results.

\begin{prop}\label{PSubsetAlg}
Let $(\cdot)^{\wp}$ be the mapping that sends
\begin{enumerate}
\item a $\Sigma$-algebra $\mathbf{A}$ to the $\Sigma$-algebra $\mathbf{A}^{\wp} = (A^{\wp},F^{\wp})$ where $A^{\wp}$ is $(\mathrm{Sub}(A_{s}))_{s\in S}$ and where, for every $(\mathbf{s},s)\in S^{\star}\times S$ and every $\sigma\in \Sigma_{\mathbf{s},s}$, $F_{\sigma}^{\wp}$, simply denoted by $\sigma^{\mathbf{A}^{\wp}}$, is the mapping from $A^{\wp}_{\mathbf{s}} = \prod_{i\in \lvert{\mathbf{s}}\rvert}\mathrm{Sub}(A_{s_{i}})$ to $A^{\wp}_{s} = \mathrm{Sub}(A_{s})$ defined as follows:
$$
\sigma^{\mathbf{A}^{\wp}}
\left\lbrace
\begin{array}{ccl}
{A^{\wp}_{\mathbf{s}}}&\longrightarrow&
{A^{\wp}_{s}}\\
{(L_{i})_{i\in \lvert{\mathbf{s}}\rvert}}&\longmapsto&
{\{\sigma^{\mathbf{A}}((x_{i})_{i\in \lvert{\mathbf{s}}\rvert})\mid (x_{i})_{i\in \lvert{\mathbf{s}}\rvert}\in \prod_{i\in \lvert{\mathbf{s}}\rvert}L_{i}\},} 
\end{array}
\right.
$$
\item and a homomorphism $f$ from $\mathbf{A}$ to $\mathbf{B}$ to the $S$-sorted mapping $f^{\wp} = (f_{s}[\cdot])_{s\in S}$ from $A^{\wp}$ to $B^{\wp}$ (see Corollary~\ref{CSubFunct}).
\end{enumerate}
Then $(\cdot)^{\wp}$ is an endofunctor of $\mathsf{Alg}(\Sigma)$. The $\Sigma$-algebra $\mathbf{A}^{\wp}$ is called the \emph{subset} or, better still, \emph{power $\Sigma$-algebra} associated to $\mathbf{A}$ (this notion, but for single-sorted algebras, is due to Mezei and Wright, cf.~\cite{MW67}, Definition 2.2). Moreover, there exists a pointwise monomorphic natural transformation $\{\cdot\}^{\Sigma} = (\{\cdot\}^{\Sigma}_{\mathbf{A}})_{\mathbf{A}\in \mathrm{Alg}(\Sigma)}$ from $\mathrm{Id}_{\mathsf{Alg}(\Sigma)}$ to $(\cdot)^{\wp}$ 
and a natural transformation $\mu^{\Sigma} = (\mu^{\Sigma}_{\mathbf{A}})_{\mathbf{A}\in \mathrm{Alg}(\Sigma)}$
from $(\cdot)^{\wp\circ\wp}$ to $(\cdot)^{\wp}$, 
where $(\cdot)^{\wp\circ\wp} = (\cdot)^{\wp}\circ (\cdot)^{\wp}$, 
such that $((\cdot)^{\wp},\mu^{\Sigma},\{\cdot\}^{\Sigma})$ is a monad on $\mathsf{Alg}(\Sigma)$. 
\end{prop}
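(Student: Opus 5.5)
We must verify four things in turn: that $\mathbf{A}^{\wp}$ is a $\Sigma$-algebra, that $f^{\wp}$ is a homomorphism and $(\cdot)^{\wp}$ is functorial, that the unit and multiplication are natural homomorphisms, and that the monad laws hold.

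For the first point there is nothing to do: each $\sigma^{\mathbf{A}^{\wp}}$ is a well-defined mapping from $A^{\wp}_{\mathbf{s}}$ to $A^{\wp}_{s}$. When $\mathbf{s}=\lambda$, the product $\prod_{i\in 0}L_{i}$ is the final set, so $\sigma^{\mathbf{A}^{\wp}}$ picks out the singleton $\{\sigma^{\mathbf{A}}\}$.

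For functoriality, take a homomorphism $f\colon\mathbf{A}\longrightarrow\mathbf{B}$, a $\sigma\in\Sigma_{\mathbf{s},s}$, and $(L_{i})_{i\in\lvert\mathbf{s}\rvert}\in A^{\wp}_{\mathbf{s}}$. Since $f$ is a homomorphism,
$$
f_{s}[\sigma^{\mathbf{A}^{\wp}}((L_{i})_{i})] = \{\sigma^{\mathbf{B}}((f_{s_{i}}(x_{i}))_{i})\mid (x_{i})_{i}\in\textstyle\prod_{i}L_{i}\}.
$$
The family $(f_{s_{i}}(x_{i}))_{i}$ ranges exactly over $\prod_{i}f_{s_{i}}[L_{i}]$, so the right-hand side equals $\sigma^{\mathbf{B}^{\wp}}((f_{s_{i}}[L_{i}])_{i})$, and $f^{\wp}$ is a homomorphism. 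By Corollary~\ref{CSubFunct} we have $f^{\wp}=(f_{s}[\cdot])_{s\in S}$. Preservation of identities and composites is then the familiar set-level fact that $(g\circ f)[L]=g[f[L]]$ and $\mathrm{id}[L]=L$.

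For the monad structure, set $\{\cdot\}^{\Sigma}_{\mathbf{A}}=\{\cdot\}^{A}$, which sends $a$ to $\{a\}$. Define $\mu^{\Sigma}_{\mathbf{A}}$ at sort $s$ as the map sending $\mathcal{L}\subseteq\mathrm{Sub}(A_{s})$ to $\bigcup\mathcal{L}$; equivalently, $\mu^{\Sigma}_{\mathbf{A}}=(\mathrm{id}^{A^{\wp}})^{\mathsf{p}}$ in the sense of Proposition~\ref{PSubFunct}.
\begin{enumerate}
\item The unit is a homomorphism because $\sigma^{\mathbf{A}^{\wp}}((\{a_{i}\})_{i})=\{\sigma^{\mathbf{A}}((a_{i})_{i})\}$. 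It is pointwise injective, hence monomorphic.
\item Naturality of the unit is $f_{s}[\{a\}]=\{f_{s}(a)\}$.
\item Naturality of $\mu^{\Sigma}$ is $f_{s}[\bigcup\mathcal{L}]=\bigcup_{L\in\mathcal{L}}f_{s}[L]$.
\end{enumerate}

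The one step needing care, and the main obstacle, is that $\mu^{\Sigma}_{\mathbf{A}}$ is a homomorphism from $\mathbf{A}^{\wp\circ\wp}$ to $\mathbf{A}^{\wp}$. Take $(\mathcal{L}_{i})_{i}\in A^{\wp\wp}_{\mathbf{s}}$. By definition of $\sigma^{\mathbf{A}^{\wp\wp}}$,
$$
\textstyle\bigcup\sigma^{\mathbf{A}^{\wp\wp}}((\mathcal{L}_{i})_{i})=\bigcup\bigl\{\sigma^{\mathbf{A}^{\wp}}((L_{i})_{i})\mid (L_{i})_{i}\in\prod_{i}\mathcal{L}_{i}\bigr\},
$$
which is the set of all $\sigma^{\mathbf{A}}((x_{i})_{i})$ with $x_{i}\in L_{i}\in\mathcal{L}_{i}$. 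This coincides with $\sigma^{\mathbf{A}^{\wp}}((\bigcup\mathcal{L}_{i})_{i})$, because a tuple $(x_{i})_{i}$ lies in $\prod_{i}\bigcup\mathcal{L}_{i}$ if and only if we can choose, for each $i$, some $L_{i}\in\mathcal{L}_{i}$ with $x_{i}\in L_{i}$.
\begin{enumerate}
\item The choice over the finitely many indices $i\in\lvert\mathbf{s}\rvert$ is harmless.
\item If some $\mathcal{L}_{i}$ is empty, both sides are empty.
\item The nullary case must be checked separately: there $\sigma^{\mathbf{A}^{\wp\wp}}=\{\{\sigma^{\mathbf{A}}\}\}$, whose union is $\{\sigma^{\mathbf{A}}\}$, as required.
\end{enumerate}

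Finally, the monad laws reduce sortwise to the set-level identities
$$
\textstyle\bigcup\{L\}=L,\qquad \bigcup\{\{x\}\mid x\in L\}=L,\qquad \bigcup\bigcup\mathfrak{L}=\bigcup\{\bigcup\mathcal{L}\mid\mathcal{L}\in\mathfrak{L}\}.
$$
These are the standard power-set monad laws on $\mathsf{Set}$, applied componentwise. Since composites of homomorphisms are homomorphisms and equality of homomorphisms is checked on underlying maps, these identities give the monad laws in $\mathsf{Alg}(\Sigma)$.
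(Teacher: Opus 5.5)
Your argument is correct and complete. The paper states this proposition without proof, as recalled background attributed to Mezei and Wright. Your direct sortwise verification is the standard argument it relies on: you check that $f^{\wp}$, the singleton maps, and the union maps $\mathcal{L}\mapsto\bigcup\mathcal{L}$ are $\Sigma$-homomorphisms, with the key identity $\bigcup\sigma^{\mathbf{A}^{\wp\wp}}((\mathcal{L}_{i})_{i})=\sigma^{\mathbf{A}^{\wp}}((\bigcup\mathcal{L}_{i})_{i})$ resting on the finiteness of arities. Naturality and the monad laws then reduce to the power-set identities in $\mathsf{Set}^{S}$.
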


\begin{rem}
For $(\mathbf{s},s)\in S^{\star}\times S$ and $\sigma\in \Sigma_{\mathbf{s},s}$, the mapping $\sigma^{\mathbf{A}^{\wp}}$ from $A^{\wp}_{\mathbf{s}} = \prod_{i\in \lvert{\mathbf{s}}\rvert}\mathrm{Sub}(A_{s_{i}})$ to $\mathrm{Sub}(A_{s})$ sends $(L_{i})_{i\in \lvert{\mathbf{s}}\rvert}$ in $A^{\wp}_{\mathbf{s}}$ to $\sigma^{\mathbf{A}}[\prod_{i\in\lvert{\mathbf{s}}\rvert}L_{i}]$ in $\mathrm{Sub}(A_{s})$. Therefore, if, for some $i\in \lvert{\mathbf{s}}\rvert$, $L_{i} = \varnothing$, then $\sigma^{\mathbf{A}^{\wp}}((L_{i})_{i\in \lvert{\mathbf{s}}\rvert}) = \varnothing$.
\end{rem}

\subsection{Subalgebras}

We shall now go on to define the notion of subalgebra of a $\Sigma$-algebra $\mathbf{A}$ and the subalgebra generating operator for $\mathbf{A}$.

\begin{defi}\label{DAlgSub}
Let $\mathbf{A}$ be a $\Sigma$-algebra and $X\subseteq A$. Given $(\mathbf{s},s)\in S^{\star}\times S$ and $\sigma\in\Sigma_{\mathbf{s},s}$, we will say that $X$ is \emph{closed under the operation} $\sigma^{\mathbf{A}}\colon A_{\mathbf{s}}\longrightarrow A_{s}$ if, for every $(a_{j})_{j\in\bb{\mathbf{s}}}\in X_{\mathbf{s}}$, $\sigma^{\mathbf{A}}((a_{j})_{j\in\bb{\mathbf{s}}})\in X_{s}$. We will say that $X$ is a \emph{closed subset} of $\mathbf{A}$ if $X$ is closed under the operations of $\mathbf{A}$. We will denote by $\mathrm{Cl}(\mathbf{A})$ the set of all closed subsets of $\mathbf{A}$ (which is an algebraic closure system on $A$) and by $\mathbf{Cl}(\mathbf{A})$ the algebraic lattice $(\mathrm{Cl}(\mathbf{A}),\subseteq)$. We will say that a $\Sigma$-algebra $\mathbf{B}$ is a \emph{subalgebra} of $\mathbf{A}$ if $B\subseteq A$ and the canonical embedding of $B$ into $A$ determines an embedding of $\mathbf{B}$ into $\mathbf{A}$. We will denote by $\mathrm{Sub}(\mathbf{A})$ the set of all subalgebras of $\mathbf{A}$. Since $\mathrm{Cl}(\mathbf{A})$ and $\mathrm{Sub}(\mathbf{A})$ are isomorphic, we shall feel free to deal with a closed subset of $\mathbf{A}$ or with the correlated subalgebra of $\mathbf{A}$, whichever is more convenient for the work at hand.
\end{defi}

\begin{defi}
Let $\mathbf{A}$ be a $\Sigma$-algebra. Then we denote by $\mathrm{Sg}_{\mathbf{A}}$ the many-sorted closure operator on $A$ defined as follows:
$$\textstyle
 \mathrm{Sg}_{\mathbf{A}}
 \left\lbrace
 \begin{array}{ccl}
 {\mathrm{Sub}(A)}&\longrightarrow&
 {\mathrm{Sub}(A)}\\
 {X}&\longmapsto&
 {\bigcap \{C\in\mathrm{Sub}(\mathbf{A})\mid X\subseteq C\}}
 \end{array}
 \right.
$$

We call $\mathrm{Sg}_{\mathbf{A}}$ the \emph{subalgebra generating many-sorted operator on} $A$ \emph{determined by} $\mathbf{A}$. For every $X\subseteq A$, we call $\mathrm{Sg}_{\mathbf{A}}(X)$ the \emph{subalgebra of} $\mathbf{A}$ \emph{generated by} $X$. Moreover, if $X\subseteq A$ is such that $\mathrm{Sg}_{\mathbf{A}}(X) = A$, then we say that $X$ is an $S$-sorted set of \emph{generators} of $\mathbf{A}$, or that $X$ \emph{generates} $\mathbf{A}$. 
\end{defi}

\subsection{Recognizability}

We finish this section by reviewing a few aspects of recognizability for subsets of the underlying many-sorted set of an arbitrary many-sorted algebra. 

\begin{defi}
Let $\mathbf{A}$ be a $\Sigma$-algebra, and $L\subseteq A$. We will say that $L$ is \emph{recognizable} if there exists a finite $\Sigma$-algebra $\mathbf{B}$, a homomorphism $f\colon\mathbf{A}\longrightarrow\mathbf{B}$, and a subset $M$ of $B$ such that
 $f^{-1}[M] = L$.
We will denote by $\mathrm{Rec}(\mathbf{A})$ the set of all subsets of $A$ which are recognizable.
\end{defi}

We introduce some standard results on recognizability for arbitrary $\Sigma$-algebras.

\begin{propC}
[{\cite[Prop.~2.103]{CVCL20}}]
\label{PRecCl}
Let $\mathbf{A}$ and $\mathbf{B}$ be $\Sigma$-algebras. Then
\begin{enumerate}
\item $\varnothing^{S}, A\in \mathrm{Rec}(\mathbf{A})$.
\item If $K,L\in \mathrm{Rec}(\mathbf{A})$, then $K\cup L, K\cap L, K-L\in \mathrm{Rec}(\mathbf{A})$.
\end{enumerate}
\end{propC}

The following result establishes the relationship between the recognizability in an algebra and the recognizability in (a homomorphic image of) a subalgebra.
\begin{prop}\label{PRecSub} 
Let $\mathbf{A}$ and $\mathbf{B}$ be $\Sigma$-algebras. Let $i\colon \mathbf{A}\longrightarrow\mathbf{B}$ be an injective $\Sigma$-homomorphism (that is, $\mathbf{A}$ is a homomorphic image of a subalgebra of $\mathbf{B}$). If $L\subseteq A$ is such that $i[L]\in\mathrm{Rec}(\mathbf{B})$, then $L\in\mathrm{Rec}(\mathbf{A})$. In particular, if $\mathbf{A}$ is a subalgebra of $\mathbf{B}$ and $L\subseteq A$ is such that $L\in \mathrm{Rec}(\mathbf{B})$, then $L\in \mathrm{Rec}(\mathbf{A})$.
\end{prop}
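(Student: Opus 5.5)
Let $L\subseteq A$ with $i[L]\in\mathrm{Rec}(\mathbf{B})$. By the definition of recognizability there are a finite $\Sigma$-algebra $\mathbf{C}$, a homomorphism $f\colon\mathbf{B}\longrightarrow\mathbf{C}$ and a subset $M\subseteq C$ with $f^{-1}[M]=i[L]$. The natural candidate recognizing $L$ is the composite homomorphism $g=f\circ i\colon\mathbf{A}\longrightarrow\mathbf{C}$, with the same finite algebra $\mathbf{C}$ and the same subset $M$. Since $\mathsf{Alg}(\Sigma)$ is a category, $g$ is again a $\Sigma$-homomorphism into a finite $\Sigma$-algebra, so it only remains to verify $g^{-1}[M]=L$.

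This verification is done sortwise. Fix $s\in S$. We have $g_{s}^{-1}[M_{s}] = i_{s}^{-1}[f_{s}^{-1}[M_{s}]] = i_{s}^{-1}[i_{s}[L_{s}]]$. The inclusion $L_{s}\subseteq i_{s}^{-1}[i_{s}[L_{s}]]$ holds for any mapping. For the reverse inclusion, if $a\in A_{s}$ and $i_{s}(a)\in i_{s}[L_{s}]$, then $i_{s}(a)=i_{s}(b)$ for some $b\in L_{s}$, and injectivity of $i_{s}$ gives $a=b\in L_{s}$. Hence $g^{-1}[M]=L$, and therefore $L\in\mathrm{Rec}(\mathbf{A})$.

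For the particular case, when $\mathbf{A}$ is a subalgebra of $\mathbf{B}$, Definition~\ref{DAlgSub} says that the canonical inclusion $i\colon A\longrightarrow B$ is an injective homomorphism $\mathbf{A}\longrightarrow\mathbf{B}$. Moreover $i[L]=L$, so the hypothesis $L\in\mathrm{Rec}(\mathbf{B})$ is exactly $i[L]\in\mathrm{Rec}(\mathbf{B})$, and the first part applies. The argument contains no genuine obstacle. The one step needing care is where injectivity is used, namely the equality $i^{-1}[i[L]]=L$; without it only the inclusion $\supseteq$ would hold.
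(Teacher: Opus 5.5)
Your proposal is correct and follows essentially the same route as the paper: you recognize $L$ via the composite $f\circ i$ with the same finite algebra $\mathbf{C}$ and subset $M$, and verify $(f\circ i)^{-1}[M]=i^{-1}[i[L]]=L$ sortwise using injectivity of $i$. You also spell out the subalgebra case, which the paper leaves implicit.
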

\begin{proof}
Let $L$ be a subset of $A$ satisfying $i[L]\in \mathrm{Rec}(\mathbf{B})$. Then there exists a finite $\Sigma$-algebra $\mathbf{C}$, a $\Sigma$-homomorphism $f\colon \mathbf{B}\longrightarrow\mathbf{C}$, and a subset $M\subseteq C$ such that $f^{-1}[M]=L$. Consider the $\Sigma$-homomorphism $f\circ i\colon \mathbf{A} \longrightarrow \mathbf{C}$ and the subset $M\subseteq C$. It suffices to prove that 
$(f\circ i)^{-1}[M]=L$, that is, for every $s\in S$,  $(f\circ i)^{-1}_{s}[M_{s}]=L_{s}$.

Let $s\in S$. Consider an element $a\in (f\circ i)^{-1}_{s}[M_{s}]$, then $f_{s}(i_{s}(a))\in M_{s}$, that is, $i_{s}(a)\in f^{-1}_{s}[M_{s}]=i_{s}[L_{s}]$. Consequently, there exists $l\in L_{s}$ such that $i_{s}(a)=i_{s}(l)$. Since $i_{s}$ is injective, we conclude that $a=l$, that is, $a\in L_{s}$. Thus, $(f\circ i)^{-1}_{s}[M_{s}]\subseteq L_{s}$. Now, let $a\in L_{s}$. Since $L\subseteq A$, we have $a\in A_{s}$ and $i_{s}(a)\in i_{s}[L_{s}]$. Since $f^{-1}[M]=i[L]$, we conclude that $f_{s}(i_{s}(a))\in M_{s}$, that is, $a\in (f\circ i)_{s}^{-1}[M_{s}]$. Thus, $L_{s}\subseteq (f\circ i)^{-1}_{s}[M_{s}]$. Hence, we conclude that $(f\circ i)^{-1}[M]=L$.
\end{proof}

We next introduce the notion of $s$-recognizability, for a given sort $s\in S$.

\begin{defi}
Let $\mathbf{A}$ be a $\Sigma$-algebra, $s\in S$, and $L\subseteq A_{s}$. We will say that $L$ is $s$-\emph{recognizable} if there exists a finite $\Sigma$-algebra $\mathbf{B}$, a homomorphism $f\colon\mathbf{A}\longrightarrow\mathbf{B}$, and a subset $M$ of $B_{s}$ such that
 $f_{s}^{-1}[M] = L$.
We will denote by $\mathrm{Rec}_{s}(\mathbf{A})$ the set of all subsets of $A_{s}$ which are $s$-recognizable.
\end{defi}

The following propositions explain the relationship between recognizability and $s$-recognizability.

\begin{propC}[{\cite[Prop.~2.104]{CVCL20}}]
Let $\mathbf{A}$ be a $\Sigma$-algebra, $s\in S$, and $L\subseteq A_{s}$. Then $L\in \mathrm{Rec}_{s}(\mathbf{A})$ if and only if $\delta^{s,L}\in \mathrm{Rec}(\mathbf{A})$. Thus, $\mathrm{Rec}_{s}(\mathbf{A})$ is isomorphic to a subset of $\mathrm{Rec}(\mathbf{A})$.
\end{propC}

\begin{asm}
The following proposition assumes that $S$ is finite.
\end{asm}

\begin{propC}[{\cite[Prop.~2.105]{CVCL20}}]\label{PRecSubd}
Let $\mathbf{A}$ be a $\Sigma$-algebra and $L\subseteq A$. Then $L\in \mathrm{Rec}(\mathbf{A})$ if and only if, for every $s\in S$, $L_{s}\in \mathrm{Rec}_{s}(\mathbf{A})$. Thus, there exists an embedding from $\mathrm{Rec}(\mathbf{A})$ into $\prod_{s\in S}\mathrm{Rec}_{s}(\mathbf{A})$ and $\mathrm{Rec}(\mathbf{A})$ is a subdirect product of $(\mathrm{Rec}_{s}(\mathbf{A}))_{s\in S}$.
\end{propC}

\begin{rem}
 From the proposition just stated (which, we recall, has been obtained under the assumption that $S$ is finite) and since, for a $\Sigma$-algebra $\mathbf{A}$ and an $L\subseteq A$, $L = \bigcup_{s\in S}\delta^{s,L_{s}}$, it follows that in order to investigate the languages in $\mathrm{Rec}(\mathbf{A})$ it suffices to investigate, for every $s\in S$, the languages in $\mathrm{Rec}_{s}(\mathbf{A})$.
\end{rem}

We conclude this subsection with some standard results on $s$-recognizability for arbitrary $\Sigma$-algebras. They follow immediately from Propositions~\ref{PRecCl} and~\ref{PRecSubd}.

\begin{propC}
[{\cite[Prop.~2.106]{CVCL20}}]
\label{PsRecCl}
Let $\mathbf{A}$ and $\mathbf{B}$ be $\Sigma$-algebras and $s, t\in S$. Then
\begin{enumerate}
\item $\varnothing, A_{s}\in \mathrm{Rec}_{s}(\mathbf{A})$.
\item If $K,L\in \mathrm{Rec}_{s}(\mathbf{A})$, then $K\cup L, K\cap L, K-L\in \mathrm{Rec}_{s}(\mathbf{A})$.
\end{enumerate}
\end{propC}

The following result is analogous to Proposition~\ref{PRecSub} for the case of $s$-recognizability in a homomorphic image of a subalgebra. 

\begin{prop}\label{PsRecSub}
Let $\mathbf{A}$ and $\mathbf{B}$ be $\Sigma$-algebras. Let $i\colon \mathbf{A}\longrightarrow\mathbf{B}$ be an injective $\Sigma$-homomorphism (that is, $\mathbf{A}$ is a homomorphic image of a subalgebra of $\mathbf{B}$). Let $s\in S$ and $L\subseteq A_{s}$ be such that $i_{s}[L_{s}]\in\mathrm{Rec}_{s}(\mathbf{B})$, then $L\in\mathrm{Rec}_{s}(\mathbf{A})$. In particular, if $\mathbf{A}$ is a subalgebra of $\mathbf{B}$, $s\in S$, and $L\subseteq A_{s}$ is such that $L\in \mathrm{Rec}_{s}(\mathbf{B})$, then $L\in \mathrm{Rec}_{s}(\mathbf{A})$.
\end{prop}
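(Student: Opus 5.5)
The plan is to mirror the proof of Proposition~\ref{PRecSub}, working sortwise at the single sort $s$. Since $i_{s}[L]\in\mathrm{Rec}_{s}(\mathbf{B})$, we can fix a finite $\Sigma$-algebra $\mathbf{C}$, a homomorphism $f\colon\mathbf{B}\longrightarrow\mathbf{C}$, and a subset $M\subseteq C_{s}$ with $f_{s}^{-1}[M]=i_{s}[L]$. We then take the composite $f\circ i\colon\mathbf{A}\longrightarrow\mathbf{C}$. It is a homomorphism into a finite $\Sigma$-algebra, so it only remains to check that
$$
(f\circ i)_{s}^{-1}[M]=L.
$$

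For the inclusion from left to right, take $a\in A_{s}$ with $f_{s}(i_{s}(a))\in M$. Then $i_{s}(a)\in f_{s}^{-1}[M]=i_{s}[L]$, so $i_{s}(a)=i_{s}(l)$ for some $l\in L$. Injectivity of $i_{s}$ gives $a=l\in L$. For the inclusion from right to left, take $a\in L$. Then $i_{s}(a)\in i_{s}[L]=f_{s}^{-1}[M]$, hence $f_{s}(i_{s}(a))\in M$. This establishes the first claim.

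For the particular case, we apply the first claim to the canonical inclusion $i\colon\mathbf{A}\longrightarrow\mathbf{B}$. By Definition~\ref{DAlgSub} this inclusion is an injective homomorphism, and it satisfies $i_{s}[L]=L$. So the hypothesis $L\in\mathrm{Rec}_{s}(\mathbf{B})$ is exactly the hypothesis needed.

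There is no real obstacle here. The only point requiring care is that injectivity is needed just at the sort $s$, and it is used exactly once, in the left-to-right inclusion. One could alternatively derive the statement from Proposition~\ref{PRecSub} via the cited characterization $L\in\mathrm{Rec}_{s}(\mathbf{A})$ iff $\delta^{s,L}\in\mathrm{Rec}(\mathbf{A})$, using $i[\delta^{s,L}]=\delta^{s,i_{s}[L]}$. I prefer the direct argument because it needs no finiteness assumption on $S$.
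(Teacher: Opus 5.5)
Your proof is correct and is exactly the sortwise version of the paper's argument for Proposition~\ref{PRecSub}, which is how the paper intends this statement to be established (it gives no separate proof, only noting the analogy); you also rightly read the hypothesis as $i_{s}[L]$ and use $f_{s}^{-1}[M]=i_{s}[L]$, which corrects slips in the paper's text. One small correction to your closing aside: the paper states the characterization $L\in\mathrm{Rec}_{s}(\mathbf{A})$ iff $\delta^{s,L}\in\mathrm{Rec}(\mathbf{A})$ before it introduces the finiteness assumption on $S$, so that alternative route would not require $S$ to be finite either.
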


\section{Recognizability on the free many-sorted algebra}

In this section we specialize the previous notions to the free many-sorted algebra and introduce the substitution and iteration operators on which the regular expressions of Section 4 are built. We begin by recalling that the forgetful functor $\mathrm{G}_{\Sigma}$ from $\mathsf{Alg}(\Sigma)$ to
$\mathsf{Set}^{S}$ has a left adjoint $\mathbf{T}_{\Sigma}$ which assigns to an $S$-sorted set $X$ the free $\Sigma$-algebra $\mathbf{T}_{\Sigma}(X)$ on $X$. Let us note that in what follows, to construct the algebra of $\Sigma$-rows in $X$, and the free $\Sigma$-algebra on $X$, since neither the $S$-sorted signature $\Sigma$ nor the $S$-sorted set $X$ are subject to any constraint, coproducts must necessarily be used.

\begin{asm}
From now on $X$ stands for an $S$-sorted set, fixed once and for all.
\end{asm}

\begin{defi}\label{DWAlg}
Let $X$ be an $S$-sorted set. The \emph{algebra of} $\Sigma$-\emph{rows in} $X$, denoted by $\mathbf{W}_{\Sigma}(X)$, is defined as follows:
\begin{enumerate}
\item The underlying $S$-sorted set of $\mathbf{W}_{\Sigma}(X)$, written as $\mathrm{W}_{\Sigma}(X)$, is precisely the $S$-sorted set $((\coprod\Sigma \amalg \coprod X)^{\star})_{s\in S}$, i.e., the mapping from $S$ to $\boldsymbol{\mathcal{U}}$ constantly $(\coprod\Sigma \amalg \coprod X)^{\star}$, where $(\coprod\Sigma \amalg \coprod X)^{\star}$ is the set of all words on the set $\coprod\Sigma \amalg \coprod X$, i.e., on the set
 $$
 \textstyle
 [(\bigcup_{(\mathbf{s},s)\in S^{\star}\times S}(\Sigma_{\mathbf{s},s}\times\{(\mathbf{s},s)\}))\times\{0\}]\cup
 [(\bigcup_{s\in S}(X_{s}\times\{s\}))\times\{1\}].
 $$

\item For every $(\mathbf{s},s)\in S^{\star}\times S$, and every $\sigma\in\Sigma_{\mathbf{s},s}$, the structural operation $F_{\sigma}$ associated to $\sigma$ is the mapping from $\mathrm{W}_{\Sigma}(X)_{\mathbf{s}}$ to ${\mathrm{W}_{\Sigma}(X)}_{s}$ defined as
$$
F_{\sigma}
\left\lbrace
\begin{array}{rcl}
{\mathrm{W}_{\Sigma}(X)_{\mathbf{s}}}&\longrightarrow&
{{\mathrm{W}_{\Sigma}(X)}_{s}}\\
{(P_{j})_{j\in\lvert \mathbf{s} \rvert}}&\longmapsto&
{(\sigma)\curlywedge\concat_{j\in\lvert \mathbf{s} \rvert}P_{j}}
\end{array}
\right.
$$
where, for every $(\mathbf{s},s)\in S^{\star}\times S$, and every $\sigma\in\Sigma_{\mathbf{s},s}$, $(\sigma)$ stands for $(((\sigma,(\mathbf{s},s)),0))$, which is the value at $\sigma$ of the canonical mapping from $\Sigma_{\mathbf{s},s}$ to $(\coprod\Sigma \amalg \coprod X)^{\star}$.
\end{enumerate}
\end{defi}

\begin{defi}\label{DAlgFree}
The \emph{free} $\Sigma$-\emph{algebra on} an $S$-sorted set $X$, denoted by $\mathbf{T}_{\Sigma}(X)$, is the $\Sigma$-algebra determined by $\mathrm{Sg}_{\mathbf{W}_{\Sigma}(X)}((\{(x)\mid x\in X_{s}\})_{s\in S})$, the subalgebra of $\mathbf{W}_{\Sigma}(X)$ generated by $(\{(x)\mid x\in X_{s}\})_{s\in S}$, where, for every $s\in S$ and every $x\in X_{s}$, $(x)$ stands for $(((x,s),1))$, which is the value at $x$ of the canonical mapping from $X_{s}$ to $(\coprod\Sigma \amalg \coprod X)^{\star}$.

We will denote by $\mathrm{T}_{\Sigma}(X)$ the underlying $S$\nobreakdash-sorted set of $\mathbf{T}_{\Sigma}(X)$ and, for $s\in S$, we will call the elements of $\mathrm{T}_{\Sigma}(X)_{s}$ \emph{terms of type} $s$ \emph{with variables in} $X$ or $(X,s)$-\emph{terms}.
\end{defi}

In the many-sorted case we have, as in the single-sorted case, the following characterization of the elements of $\mathrm{T}_{\Sigma}(X)_{s}$, for $s\in S$.

\begin{prop}\label{PTermChar}
For every $s\in S$ and every $P\in
\mathrm{T}_{\Sigma}(X)_{s}$, exactly one of the following conditions holds
\begin{enumerate}
 \item $P = (x)$, for a unique $x\in X_{s}$, or
 \item $P = (\sigma)$, for a unique $\sigma\in\Sigma_{\lambda,s}$, or
 \item $P = (\sigma)\curlywedge\concat_{j\in\bb{\mathbf{s}}}P_{j}$, for a unique $\mathbf{s}\in S^{\star}-\{\lambda\}$, a unique $\sigma\in\Sigma_{\mathbf{s},s}$, and a unique family $(P_{j})_{j\in\bb{\mathbf{s}}}\in\mathrm{T}_{\Sigma}(X)_{\mathbf{s}}$.
\end{enumerate}
Moreover, the three possibilities are mutually exclusive. From now on, for simplicity of notation, we will write $x$, $\sigma^{\mathbf{T}_{\Sigma}(X)}$, and $\sigma^{\mathbf{T}_{\Sigma}(X)}((P_{j})_{j\in \bb{\mathbf{s}}})$ instead of $(x)$, $(\sigma)$, and $(\sigma)\curlywedge\concat_{j\in\bb{\mathbf{s}}} P_{j}$, respectively. Thus, in particular, the structural operation $F_{\sigma}$ (or more accurately $F_{\sigma}^{\mathbf{T}_{\Sigma}(X)}$) associated with $\sigma$ is identified with $\sigma^{\mathbf{T}_{\Sigma}(X)}$.
\end{prop}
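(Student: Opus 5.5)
The plan is to prove existence by showing that the $S$-sorted set of words meeting one of the three conditions is a closed subset of $\mathbf{W}_{\Sigma}(X)$ containing the generators. Uniqueness and mutual exclusivity will then come from a unique-readability argument on words.

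For existence, define $C = (C_{s})_{s\in S}$, where $C_{s}$ is the set of words $P\in \mathrm{T}_{\Sigma}(X)_{s}$ satisfying (1), (2) or (3) for sort $s$. Each generator $(x)$ with $x\in X_{s}$ lies in $C_{s}$ by (1). For closure, take $\sigma\in\Sigma_{\mathbf{s},s}$ and $(P_{j})_{j}\in C_{\mathbf{s}}\subseteq \mathrm{T}_{\Sigma}(X)_{\mathbf{s}}$. The value $F_{\sigma}((P_{j})_{j}) = (\sigma)\curlywedge\concat_{j}P_{j}$ lies in $\mathrm{T}_{\Sigma}(X)_{s}$ because $\mathrm{T}_{\Sigma}(X)$ is closed. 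It has the form (2) if $\mathbf{s}=\lambda$ and the form (3) otherwise. Hence $C$ is a closed subset containing the generators, so $\mathrm{Sg}_{\mathbf{W}_{\Sigma}(X)}(\ldots)\subseteq C$, and therefore $C = \mathrm{T}_{\Sigma}(X)$.

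Mutual exclusivity is read off the first letter of the word. In case (1) the word has length $1$ and its letter is tagged $1$ (a variable). In case (2) it has length $1$ and its letter is tagged $0$. In case (3) its first letter is tagged $0$, and its length is at least $2$ because $\mathbf{s}\neq\lambda$ and every term is nonempty. Nonemptiness follows from the existence part, since every term begins with a letter. The first letter also determines uniqueness of $x$, of $\sigma$, and of $\mathbf{s}$ and $s$: the letter $((\sigma,(\mathbf{s},s)),0)$ encodes the rank, and $((x,s),1)$ encodes the sort.

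The main obstacle is uniqueness of the family $(P_{j})_{j}$ in (3). For this I will prove, by induction on length, the unique-readability lemma: \emph{no term is a proper prefix of another term} (of any sorts). The cleanest tool is a weight function. Set $w(((\sigma,(\mathbf{s},s)),0)) = \bb{\mathbf{s}}-1$ and $w(((x,s),1)) = -1$, and extend $w$ additively to words. I will show first that every term has total weight $-1$, by the same closure argument as above. Second, every proper nonempty prefix of a term has weight $\geq 0$, proved by induction using the decomposition from existence.

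Given these two facts, if $\concat_{j}P_{j} = \concat_{j}Q_{j}$ with all $P_{j},Q_{j}$ terms, then $P_{0}$ and $Q_{0}$ are both prefixes of the same word. If they had different lengths, the shorter, of weight $-1$, would be a proper prefix of the longer, contradicting the second fact. So $P_{0}=Q_{0}$, and cancelling and iterating over $j$ gives $P_{j}=Q_{j}$ for all $j$. This yields uniqueness in (3) and completes the proof.
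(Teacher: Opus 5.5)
The paper gives no proof of this proposition. It is stated as background, with only the remark that it holds ``as in the single-sorted case''. Your argument is correct and self-contained, and it is the standard Polish-notation unique-readability proof.

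The existence part, via the closed subset of words of the three shapes containing the generators, is right. The weight lemma also goes through as you say, with no circularity, since it uses only \emph{some} decomposition of each term. A proper nonempty prefix of $(\sigma)\curlywedge P_{0}\curlywedge\cdots\curlywedge P_{n-1}$ has the form $(\sigma)\curlywedge P_{0}\curlywedge\cdots\curlywedge P_{k-1}\curlywedge V$. Here $k\leq n-1$ and $V$ is a proper, possibly empty, prefix of the strictly shorter term $P_{k}$. So its weight is at least $n-1-k\geq 0$.

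Compared with the paper's bare appeal to the single-sorted case, your route isolates what this many-sorted encoding actually needs. Since $\mathrm{W}_{\Sigma}(X)$ is the same set of words at every sort, the sort and arity of a term must be recovered from its first letter. That works only because the letters of $\coprod\Sigma$ carry their rank $(\mathbf{s},s)$ and those of $\coprod X$ carry their sort, and this tagging is also what makes your weight $w$ well defined. Without the tags, unique readability genuinely fails. If a symbol $f$ had both ranks $((s),s)$ and $((s,s),s)$, the word $f\,f\,x\,x$ would read both as $f(f(x),x)$ and as $f(f(x,x))$.
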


From the above proposition, the universal property of the free $\Sigma$-algebra on an $S$-sorted set $X$ follows immediately, as stated in the subsequent proposition.

\begin{prop}\label{PPropUniv}
The pair $(\mathbf{T}_{\Sigma}(X),\eta^{X})$, where $\eta^{X}$, the
\emph{insertion of (the $S$-sorted set of generators)} $X$ \emph{into} $\mathrm{T}_{\Sigma}(X)$, is the co\-restric\-tion to
$\mathrm{T}_{\Sigma}(X)$ of the canonical embedding of $X$ into $\mathrm{W}_{\Sigma}(X)$, has the following universal property: for every $\Sigma$-algebra $\mathbf{A}$ and every $S$-sorted mapping $f\colon X\longrightarrow A$, there exists a unique homomorphism $f^{\sharp}\colon\mathbf{T}_{\Sigma}(X)\longrightarrow\mathbf{A}$ such that $f^{\sharp}\circ \eta^{X} = f$.
\end{prop}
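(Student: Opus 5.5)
The universal property in Proposition~\ref{PPropUniv} will be proved by defining $f^{\sharp}$ through structural recursion on terms, using the unique readability guaranteed by Proposition~\ref{PTermChar}. Uniqueness will then follow by structural induction.

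\emph{Preparatory step: a well-founded measure.} Terms are words in $(\coprod\Sigma\amalg\coprod X)^{\star}$, so each term has a length. In case (3) of Proposition~\ref{PTermChar} we have $P=(\sigma)\curlywedge\concat_{j\in\bb{\mathbf{s}}}P_{j}$. Hence every immediate subterm $P_{j}$ satisfies $\bb{P_{j}}<\bb{P}$, and length gives a well-founded measure for the recursion.

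\emph{Existence.} For each $s\in S$ and $P\in\mathrm{T}_{\Sigma}(X)_{s}$, I define $f^{\sharp}_{s}(P)$ by recursion on $\bb{P}$, following the trichotomy of Proposition~\ref{PTermChar}:
\begin{itemize}
\item if $P=(x)$ with $x\in X_{s}$, put $f^{\sharp}_{s}(P)=f_{s}(x)$;
\item if $P=(\sigma)$ with $\sigma\in\Sigma_{\lambda,s}$, put $f^{\sharp}_{s}(P)=\sigma^{\mathbf{A}}$;
\item if $P=\sigma^{\mathbf{T}_{\Sigma}(X)}((P_{j})_{j\in\bb{\mathbf{s}}})$, put $f^{\sharp}_{s}(P)=\sigma^{\mathbf{A}}((f^{\sharp}_{s_{j}}(P_{j}))_{j\in\bb{\mathbf{s}}})$.
\end{itemize}
The case distinction is exhaustive and exclusive, and the data $x$, $\sigma$, $\mathbf{s}$, $(P_{j})_{j}$ are uniquely determined. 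Therefore this is a legitimate simultaneous definition of an $S$-sorted mapping, obtained from the recursion theorem applied to the well-founded relation ``is an immediate subterm of'' on $\coprod\mathrm{T}_{\Sigma}(X)$.

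By the first clause, $f^{\sharp}\circ\eta^{X}=f$, since $\eta^{X}_{s}(x)=(x)$. By the second and third clauses, $f^{\sharp}$ commutes with every structural operation, including the constants. So $f^{\sharp}$ is a homomorphism.

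\emph{Uniqueness.} Let $g\colon\mathbf{T}_{\Sigma}(X)\longrightarrow\mathbf{A}$ be a homomorphism with $g\circ\eta^{X}=f$. Consider the $S$-sorted set $E=(\{P\in\mathrm{T}_{\Sigma}(X)_{s}\mid g_{s}(P)=f^{\sharp}_{s}(P)\})_{s\in S}$. It contains $\eta^{X}[X]$. It is closed under all operations, because both $g$ and $f^{\sharp}$ are homomorphisms. Since $\mathbf{T}_{\Sigma}(X)$ is by Definition~\ref{DAlgFree} the subalgebra generated by $\eta^{X}[X]$, we get $E=\mathrm{T}_{\Sigma}(X)$, and hence $g=f^{\sharp}$. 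This step needs only the generation property, not unique readability.

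\emph{Main obstacle.} The only delicate point is making the recursion rigorous, namely that $f^{\sharp}$ is well defined. Well-definedness rests entirely on the uniqueness clauses of Proposition~\ref{PTermChar}, which the statement lets us assume. What remains is to check that the subterm relation is well founded, which holds because the length strictly decreases. I would handle this by induction on $n$, proving that $f^{\sharp}$ is uniquely defined on all terms of length at most $n$.
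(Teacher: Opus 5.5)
Your proposal is correct and follows the paper's proof. The paper defines $f^{\sharp}_{s}$ by exactly the same three-case recursion on the unique decomposition of Proposition~\ref{PTermChar} and leaves the remaining verifications as straightforward; your length-based well-foundedness argument and your uniqueness argument via the generation clause of Definition~\ref{DAlgFree} correctly fill in those omitted details.
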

\begin{proof}
For every $s\in S$, the $s$-th coordinate $f^{\sharp}_{s}$ of $f^{\sharp}$ is defined recursively as follows: 
$$
f^{\sharp}_{s}
\left\lbrace
\begin{array}{ccl}
{\mathrm{T}_{\Sigma}(X)_{s}}&\longrightarrow&
{A_{s}}\\
{P}&\longmapsto&
{\begin{cases}
 f_{s}(x),&\text{if }P = x; \\
 \sigma^{\mathbf{A}},&\text{if }P = \sigma^{\mathbf{T}_{\Sigma}(X)};\\
 {\sigma}^{\mathbf{A}}((f^{\sharp}_{s_{j}}(P_{j}))_{j\in\lvert{\mathbf{s}}\rvert}),
 &\text{if }P = \sigma^{\mathbf{T}_{\Sigma}(X)}((P_{j})_{j\in\lvert{\mathbf{s}}\rvert}). 
\end{cases}}
\end{array}
\right.
$$
It is straightforward to prove the remaining properties.
\end{proof}

Moreover, for the case of free algebras, the preorder introduced in Definition~\ref{DOrdAlg} is, in fact, an Artinian order whose minimal elements are, precisely, the variables and the constant operation symbols. 

\begin{prop}\label{PArtOrd}
The order $\leq_{\mathbf{T}_{\Sigma}(X)}$ does not have strictly descending $\omega_{0}$-chains, i.e., is an Artinian order. Moreover, $\mathrm{Min}(\coprod\mathrm{T}_{\Sigma}(X),\leq_{\mathbf{T}_{\Sigma}(X)})$, the set of minimal elements of the ordered set $(\coprod\mathrm{T}_{\Sigma}(X),\leq_{\mathbf{T}_{\Sigma}(X)})$, is given by
$$
\textstyle
\mathrm{Min}(\coprod\mathrm{T}_{\Sigma}(X),\leq_{\mathbf{T}_{\Sigma}(X)})=\{
(x,s)\mid s\in S, x \in X_{s}
\}\cup 
\{
(\sigma^{\mathbf{T}_{\Sigma}(X)},s)\mid s \in S, \sigma \in \Sigma_{\lambda,s}
\}.$$
\end{prop}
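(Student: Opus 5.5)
The argument rests on a size function on terms. I would define the height $\mathrm{ht}\colon \coprod\mathrm{T}_{\Sigma}(X)\longrightarrow\mathbb{N}$ by recursion on the unique decomposition of Proposition~\ref{PTermChar}. Set $\mathrm{ht}(x,s)=0$ for $x\in X_{s}$ and $\mathrm{ht}(\sigma^{\mathbf{T}_{\Sigma}(X)},s)=0$ for $\sigma\in\Sigma_{\lambda,s}$. For a compound term, set $\mathrm{ht}(\sigma^{\mathbf{T}_{\Sigma}(X)}((P_{j})_{j\in\bb{\mathbf{s}}}),s)=1+\max_{j\in\bb{\mathbf{s}}}\mathrm{ht}(P_{j},s_{j})$. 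This is well defined because the decomposition is unique. Alternatively one may take the length of $P$ as a word in $(\coprod\Sigma\amalg\coprod X)^{\star}$, which avoids any recursion. Indeed, $P=(\sigma)\curlywedge\concat_{j}P_{j}$ gives $\bb{P}=1+\sum_{j}\bb{P_{j}}>\bb{P_{j}}$ for every $j$.

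First I would show that $((a,t),(b,s))\in<_{\mathbf{T}_{\Sigma}(X)}$ implies $\bb{a}<\bb{b}$. By Definition~\ref{DOrdAlg} there are $\mathbf{s}\neq\lambda$, $\sigma\in\Sigma_{\mathbf{s},s}$ and $(P_{j})_{j}$ with $b=\sigma^{\mathbf{T}_{\Sigma}(X)}((P_{j})_{j})$ and $a=P_{j}$ for some $j$. Hence $b$ is the word $(\sigma)\curlywedge\concat_{j}P_{j}$ and strictly longer than $a$. By transitivity, every step of a chain in the reflexive–transitive closure weakly increases length, and a nontrivial chain strictly increases it.

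Two consequences follow. The first is antisymmetry, so $\leq_{\mathbf{T}_{\Sigma}(X)}$ is an order and not merely a preorder. Suppose $(a,t)\leq(b,s)\leq(a,t)$ with $(a,t)\neq(b,s)$. Then at least one chain is nontrivial, which gives $\bb{a}<\bb{a}$, a contradiction. The second is the absence of strictly descending $\omega_{0}$-chains. Such a chain would yield a strictly decreasing sequence in $\mathbb{N}$, which is impossible. The one subtle point is that a strict step $(b,s)<(a,t)$ in the order means a nontrivial chain, and hence a strict length decrease; antisymmetry guarantees this.

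For the minimal elements, I would use Proposition~\ref{PTermChar} case by case. If $P=x$ or $P=\sigma^{\mathbf{T}_{\Sigma}(X)}$ with $\sigma\in\Sigma_{\lambda,s}$, then $P$ is never the value of an operation of nonempty arity. This is exactly where the uniqueness and mutual exclusivity in Proposition~\ref{PTermChar} are used. Hence nothing is $<_{\mathbf{T}_{\Sigma}(X)}$-below $P$, so nothing strictly precedes it in the closure and $(P,s)$ is minimal. Conversely, if $P=\sigma^{\mathbf{T}_{\Sigma}(X)}((P_{j})_{j})$ with $\mathbf{s}\neq\lambda$, then $(P_{0},s_{0})<_{\mathbf{T}_{\Sigma}(X)}(P,s)$, and the two are distinct by the length inequality. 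So $(P,s)$ is not minimal.

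The main obstacle is bookkeeping rather than ideas. Definition~\ref{DOrdAlg} is stated via the reflexive–transitive closure, so one must make sure that "strictly below" is correctly identified with a nontrivial chain of $<$-steps. The length monotonicity argument handles this uniformly.
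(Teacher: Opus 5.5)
Your proof is correct. The paper states Proposition~\ref{PArtOrd} without proof, so there is no route to compare against. Your argument supplies what the paper leaves implicit.

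Word length is a well-chosen measure. The identity $\lvert(\sigma)\curlywedge\concat_{j}P_{j}\rvert=1+\sum_{j}\lvert P_{j}\rvert$ holds for \emph{any} way of writing a term as an operation applied to arguments. Hence antisymmetry and the absence of strictly descending $\omega_{0}$-chains need no unique readability at all. Proposition~\ref{PTermChar} is then used only where it is genuinely needed: to show that variables and constants have no $<_{\mathbf{T}_{\Sigma}(X)}$-predecessor, and that every other term has one. You also make explicit that $(Q,t)\leq(P,s)$ with $(Q,t)\neq(P,s)$ amounts to a chain of $n\geq 1$ generating steps. The paper uses this fact tacitly in Remark~\ref{RRestII} and in every later Artinian induction.

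One small correction of attribution. That identification follows directly from the definition of the reflexive--transitive closure, since distinct elements can only be related through a nontrivial chain. It does not follow from antisymmetry. Antisymmetry is only what makes this reading of the strict part agree with the alternative reading ``$\leq$ but not $\geq$''.
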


\begin{rem}\label{RRestII}
Let $(\mathbf{s},s)\in (S-\{\lambda\})\times S$, $\sigma\in \Sigma_{\mathbf{s},s}$, and $(P_{j})_{j\in\bb{\mathbf{s}}}\in\mathrm{T}_{\Sigma}(X)_{\mathbf{s}}$. Finally, let us consider the term $P=\sigma^{\mathbf{T}_{\Sigma}(X)}((P_{j})_{j\in\bb{\mathbf{s}}})$ in
$\mathrm{T}_{\Sigma}(X)_{s}$. By Proposition~\ref{PTermChar} and
Definition~\ref{DOrdAlg}, for every 
$(Q,t)\in\coprod\mathrm{T}_{\Sigma}(X)$, we have $(Q,t)<(P,s)$ if and only
if $(Q,t)\leq(P_{j},s_{j})$ for some $j\in\bb{\mathbf{s}}$. Indeed, the
relation $<_{\mathbf{T}_{\Sigma}(X)}$ of Definition~\ref{DOrdAlg} relates a
pair to a term exactly through the immediate arguments of one of its
decompositions, and the decomposition of $P$ is unique.
\end{rem}

This order allows us to define the notion of subterm of a given term. 

\begin{defi}
Let $s\in S$, and $P\in\mathrm{T}_{\Sigma}(X)_{s}$. Then the $S$-sorted set of all \emph{subterms of} $P$, denoted by $\mathrm{Subt}(P)$, is defined as follows:
$$
\mathrm{Subt}(P) = (\{Q\in \mathrm{T}_{\Sigma}(X)_{t} \mid (Q,t)
\leq_{\mathbf{T}_{\Sigma}(X)} (P,s)\})_{t \in S }.
$$
\end{defi}

\begin{rem} For a given term $P\in\mathrm{T}_{\Sigma}(X)_{s}$, $\mathrm{Subt}(P)$ is a finite subset of $\mathrm{T}_{\Sigma}(X)$, i.e.,
$\mathrm{Subt}(P)\in \mathrm{Sub}_{\mathrm{f}}(\mathrm{T}_{\Sigma}(X))$. Moreover, $\mathrm{Subt}(P)$ can also be characterized as the smallest subset $\mathcal{L}$ of $\mathrm{T}_{\Sigma}(X)$ which satisfies the following conditions:
\begin{enumerate}
\item $P\in \mathcal{L}_{s}$ and
\item for every $(\mathbf{s},s)\in S^{\star}\times S$, every $\sigma\in \Sigma_{\mathbf{s},s}$, and every $(P_{j})_{j\in\bb{\mathbf{s}}}\in \mathrm{T}_{\Sigma}(X)_{\mathbf{s}}$,
$$
\text{if } \sigma^{\mathbf{T}_{\Sigma}(X)}((P_{j})_{j\in\bb{\mathbf{s}}})\in \mathcal{L}_{s}, \text{ then } P_{j}\in \mathcal{L}_{s_{j}} \text{ for every } j\in\bb{\mathbf{s}}.
$$
\end{enumerate}
Note that the second condition is exactly the converse of the defining condition of the concept of subalgebra of a $\Sigma$-algebra.
\end{rem}

\begin{rem}\label{RRest}
Let $Y$ be an $S$-sorted set with $X\subseteq Y$. Then
$\mathrm{T}_{\Sigma}(X)\subseteq\mathrm{T}_{\Sigma}(Y)$ and
$\mathbf{T}_{\Sigma}(X)$ is a subalgebra of $\mathbf{T}_{\Sigma}(Y)$.
Moreover, by the unique readability of terms (Proposition~\ref{PTermChar}),
for $P\in\mathrm{T}_{\Sigma}(X)_{s}$ the unique decompositions of $P$ in
$\mathbf{T}_{\Sigma}(X)$ and in $\mathbf{T}_{\Sigma}(Y)$ coincide. From
this and Definition~\ref{DOrdAlg} it follows that: 
\begin{enumerate}
 \item the order
$\leq_{\mathbf{T}_{\Sigma}(X)}$ is the restriction of
$\leq_{\mathbf{T}_{\Sigma}(Y)}$ to $\coprod\mathrm{T}_{\Sigma}(X)$;
\item every subterm, computed in $\mathrm{T}_{\Sigma}(Y)$, of a term in
$\mathrm{T}_{\Sigma}(X)$ belongs to $\mathrm{T}_{\Sigma}(X)$; 
\item by
Proposition~\ref{PArtOrd}, a pair in $\coprod\mathrm{T}_{\Sigma}(X)$ is
minimal in $(\coprod\mathrm{T}_{\Sigma}(X),\leq_{\mathbf{T}_{\Sigma}(X)})$
if and only if it is minimal in
$(\coprod\mathrm{T}_{\Sigma}(Y),\leq_{\mathbf{T}_{\Sigma}(Y)})$.
\end{enumerate}
Accordingly, when the ambient $S$-sorted set of
variables is clear from the context, or irrelevant to it, we will write
$\leq$, $<$ and $\mathrm{Min}$ without further specification.
\end{rem}

\subsection{
\texorpdfstring
{Substitutions}
{Substitutions}
}\label{SSI}

 In this subsection, we introduce several substitution operators associated with a free many-sorted algebra. This subsection also includes some technical lemmas that will be of interest for the proof of the main result of this paper.

\begin{defi}
Let $s\in S$, and $P\in \mathrm{T}_{\Sigma}(X)_{s}$. We denote by $\sbs{(x,t)}{\cdot}_{(x,t)\in \coprod X}(P)$ the mapping from $\prod_{(x,t)\in \coprod X}\mathrm{T}_{\Sigma}(X)_{t}^{\lvert{P}\rvert_{x}}$ to $\mathrm{T}_{\Sigma}(X)_{s}$ 
defined as follows:
$$
\sbs{(x,t)}{\cdot}_{(x,t)\in \coprod X}(P) 
\left\lbrace
\begin{array}{rcl}
{\prod_{(x,t)\in \coprod X}\mathrm{T}_{\Sigma}(X)_{t}^{\lvert{P}\rvert_{x}}}&\longrightarrow&
{\mathrm{T}_{\Sigma}(X)_{s}}\\
{\left((Q^{x,t}_{\alpha})_{\alpha\in \lvert{P}\rvert_{x}}\right)_{(x,t)\in \coprod X}}&\longmapsto&
{\sbs{(x,t)}{(Q^{x,t}_{\alpha})_{\alpha\in \lvert{P}\rvert_{x}}}_{(x,t)\in \coprod X}(P)}
\end{array}
\right.
$$
where the word $\sbs{(x,t)}{(Q^{x,t}_{\alpha})_{\alpha\in \lvert{P}\rvert_{x}}}_{(x,t)\in \coprod X}(P)$ in $\mathrm{T}_{\Sigma}(X)_{s}$ is obtained by substituting in $P$, for every $t\in S$, every $x\in X_{t}$, and every $\alpha\in \lvert{P}\rvert_x$, $Q^{x,t}_{\alpha}$ for the $i_{\alpha}$-th occurrence of $x$ in $P$. We will call $\sbs{(x,t)}{(Q^{x,t}_{\alpha})_{\alpha\in \lvert{P}\rvert_{x}}}_{(x,t)\in \coprod X}(P)$ \emph{the substitution of $(Q^{x,t}_{\alpha})_{\alpha\in\lvert{P}\rvert_{x}}$ for $x$ in $P$ for every $t\in S$ and every $x$ in $X_{t}$}, and $\sbs{(x,t)}{\cdot}_{(x,t)\in \coprod X}(P)$ the \emph{global substitution operator for $P$}.

Let $u$ be a sort in $S$ and $z\in X_{u}$. Then we will denote by $\sbs{z}{\cdot}(P)$ the mapping from $\mathrm{T}_{\Sigma}(X)_{u}^{\lvert{P}\rvert_{z}}$ to $\mathrm{T}_{\Sigma}(X)_{s}$ 
defined as follows:
$$
\sbs{z}{\cdot}(P) 
\left\lbrace
\begin{array}{rcl}
{\mathrm{T}_{\Sigma}(X)_{u}^{\lvert{P}\rvert_{z}}}&\longrightarrow&
{\mathrm{T}_{\Sigma}(X)_{s}}\\
{(Q^{z}_{\alpha})_{\alpha\in \lvert{P}\rvert_{z}}}&\longmapsto&
{\sbs{z}{(Q^{z}_{\alpha})_{\alpha\in \lvert{P}\rvert_{z}}}(P)}
\end{array}
\right.
$$
where the word $\sbs{z}{(Q^{z}_{\alpha})_{\alpha\in \lvert{P}\rvert_{z}}}(P)$ in $\mathrm{T}_{\Sigma}(X)_{s}$ is obtained by substituting in $P$, for every $\alpha\in \lvert{P}\rvert_{z}$, $Q^{z}_{\alpha}$ for the $i_{\alpha}$-th occurrence of $z$ in $P$. We will call $\sbs{z}{(Q^{z}_{\alpha})_{\alpha\in \lvert{P}\rvert_{z}}}(P)$ \emph{the substitution of $(Q^{z}_{\alpha})_{\alpha\in\lvert{P}\rvert_{z}}$ for $z$ in $P$}.
\end{defi}

\begin{rem}
Let $s$ be a sort in $S$, and $P$ a term in $\mathrm{T}_{\Sigma}(X)_{s}$. Then we will denote by $\sbs{x}{\cdot}_{x\in X_{s}}(P)$ the mapping from $\prod_{x\in X_{s}}\mathrm{T}_{\Sigma}(X)_{s}^{\lvert{P}\rvert_{x}}$ to $\mathrm{T}_{\Sigma}(X)_{s}$ 
defined as follows:
$$
\sbs{x}{\cdot}_{x\in X_{s}}(P) 
\left\lbrace
\begin{array}{ccl}
{\prod_{x\in X_{s}}\mathrm{T}_{\Sigma}(X)_{s}^{\lvert{P}\rvert_{x}}}&\longrightarrow&
{\mathrm{T}_{\Sigma}(X)_{s}}\\
{\left((Q^{x}_{\alpha})_{\alpha\in \lvert{P}\rvert_{x}}\right)_{x\in X_{s}}}&\longmapsto&
{\sbs{x}{(Q^{x}_{\alpha})_{\alpha\in \lvert{P}\rvert_{x}}}_{x\in X_{s}}(P)}
\end{array}
\right.
$$
where the word $\sbs{x}{(Q^{x}_{\alpha})_{\alpha\in \lvert{P}\rvert_{x}}}_{x\in X_{s}}(P)$ in $\mathrm{T}_{\Sigma}(X)_{s}$ is obtained by substituting in $P$, for every $x\in X_{s}$ and every $\alpha\in \lvert{P}\rvert_x$, $Q^{x}_{\alpha}$ for the $i_{\alpha}$-th occurrence of $x$ in $P$. We will call $\sbs{x}{(Q^{x}_{\alpha})_{\alpha\in \lvert{P}\rvert_{x}}}_{x\in X_{s}}(P)$ the substitution of $(Q^{x}_{\alpha})_{\alpha\in\lvert{P}\rvert_{x}}$ for $x$ in $P$ for every $x$ in $X_{s}$, and $\sbs{x}{\cdot}_{x\in X_{s}}(P)$ the substitution operator for $P$.
\end{rem}

\begin{rem}\label{RDistr}
Let $(\mathbf{s},s)\in (S-\{\lambda\})\times S$, $\sigma\in \Sigma_{\mathbf{s},s}$, and $(P_{j})_{j\in\bb{\mathbf{s}}}\in\mathrm{T}_{\Sigma}(X)_{\mathbf{s}}$. Finally, let us consider the term $P=\sigma^{\mathbf{T}_{\Sigma}(X)}((P_{j})_{j\in\bb{\mathbf{s}}})$ in
$\mathrm{T}_{\Sigma}(X)_{s}$. 
Let $u$ be a sort in $S$, $z\in X_{u}$, and let
${(Q^{z}_{\alpha})_{\alpha\in \lvert{P}\rvert_{z}}}$ be a family of terms in
$\mathrm{T}_{\Sigma}(X)_{u}^{\bb{P}_{z}}$. Since, as a word, $P$ is
$(\sigma)\curlywedge\concat_{j\in\bb{\mathbf{s}}}P_{j}$ and the letter
$(z)$ is distinct from $(\sigma)$, the occurrences of $(z)$ in $P$ are
obtained by concatenating, in ascending order, the occurrences of $(z)$ in
the factors $P_{j}$. Hence
$\bb{P}_{z}=\sum_{j\in\bb{\mathbf{s}}}\bb{P_{j}}_{z}$ and, denoting by
$(Q^{j,z}_{\beta})_{\beta\in\bb{P_{j}}_{z}}$, for $j\in\bb{\mathbf{s}}$, the
corresponding partition of the family $(Q^{z}_{\alpha})_{\alpha\in\bb{P}_{z}}$,
it follows immediately from the definition of the substitution operator
that
\begin{equation}\label{EqDistr}
\sbs{z}{(Q^{z}_{\alpha})_{\alpha\in \lvert{P}\rvert_{z}}}\left(\sigma^{\mathbf{T}_{\Sigma}(X)}\left((P_{j})_{j\in\bb{\mathbf{s}}}\right)
\right)=\sigma^{\mathbf{T}_{\Sigma}(X)}\left(\left(
\sbs{z}{(Q^{j,z}_{\beta})_{\beta\in \lvert{P_{j}}\rvert_{z}}}(P_{j})
\right)_{j\in\bb{\mathbf{s}}}\right).
\tag{D}
\end{equation}
Conversely, every choice of families
$(Q^{j,z}_{\beta})_{\beta\in\bb{P_{j}}_{z}}$, for $j\in\bb{\mathbf{s}}$,
arises in this way from a unique family
$(Q^{z}_{\alpha})_{\alpha\in\bb{P}_{z}}$.
\end{rem}

The following two lemmas will be of interest later on.

\begin{lem}\label{LSubVar}
Let $Y$ be an $S$-sorted set. Let $u,s$ be sorts in $S$ and $z\in X_{u}$.
Let $P\in\mathrm{T}_{\Sigma}(X)_{s}$ and
$(Q^{z}_{\alpha})_{\alpha\in\bb{P}_{z}}\in
\mathrm{T}_{\Sigma}(Y)_{u}^{\bb{P}_{z}}$.
Then
$$
\sbs{z}{(Q^{z}_{\alpha})_{\alpha\in\bb{P}_{z}}}(P)\in
\mathrm{T}_{\Sigma}((X-\delta^{u,z})\cup Y)_{s}.
$$
In the above expression, the substitution is being computed
in $\mathrm{T}_{\Sigma}(X\cup Y)$. 
\end{lem}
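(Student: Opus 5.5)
We argue by Artinian induction on $P$ with respect to the order $\leq$ of Proposition~\ref{PArtOrd}, working throughout inside $\mathbf{T}_{\Sigma}(X\cup Y)$. By Remark~\ref{RRest}, $\mathbf{T}_{\Sigma}(X)$, $\mathbf{T}_{\Sigma}(Y)$ and $\mathbf{T}_{\Sigma}((X-\delta^{u,z})\cup Y)$ are subalgebras of $\mathbf{T}_{\Sigma}(X\cup Y)$, and unique readability is preserved in the larger algebra. Consequently every $Q^{z}_{\alpha}$ lies in $\mathrm{T}_{\Sigma}(X\cup Y)_{u}$, and the substitution is well defined there. The claim we prove for all $s\in S$ and all $P\in\mathrm{T}_{\Sigma}(X)_{s}$ simultaneously is: for every family $(Q^{z}_{\alpha})_{\alpha\in\bb{P}_{z}}$ in $\mathrm{T}_{\Sigma}(Y)_{u}$, the substituted term belongs to $\mathrm{T}_{\Sigma}((X-\delta^{u,z})\cup Y)_{s}$.

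The base cases are the minimal elements, split according to Proposition~\ref{PTermChar}.
\begin{enumerate}
\item If $P=\sigma^{\mathbf{T}_{\Sigma}(X)}$ with $\sigma\in\Sigma_{\lambda,s}$, then $\bb{P}_{z}=0$, the family is empty, and the result is $P$ itself. Since $\mathbf{T}_{\Sigma}((X-\delta^{u,z})\cup Y)$ is a subalgebra, it contains this constant.
\item If $P=x$ with $x\in X_{s}$ and $(x,s)\neq(z,u)$, then again $\bb{P}_{z}=0$ and the result is $x$. Here $x\in(X-\delta^{u,z})_{s}$, so the claim holds.
\item If $(x,s)=(z,u)$, then $\bb{P}_{z}=1$ and the result is $Q^{z}_{0}$. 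This term lies in $\mathrm{T}_{\Sigma}(Y)_{u}\subseteq\mathrm{T}_{\Sigma}((X-\delta^{u,z})\cup Y)_{u}$, by Remark~\ref{RRest} applied to $Y\subseteq (X-\delta^{u,z})\cup Y$.
\end{enumerate}

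In the inductive step, $P=\sigma^{\mathbf{T}_{\Sigma}(X)}((P_{j})_{j\in\bb{\mathbf{s}}})$ with $\mathbf{s}\neq\lambda$. Each $P_{j}$ lies in $\mathrm{T}_{\Sigma}(X)_{s_{j}}$ and satisfies $(P_{j},s_{j})<(P,s)$ by Remark~\ref{RRestII}. We split the family $(Q^{z}_{\alpha})_{\alpha\in\bb{P}_{z}}$ into the subfamilies $(Q^{j,z}_{\beta})_{\beta\in\bb{P_{j}}_{z}}$ as in Remark~\ref{RDistr}, and apply equation~\eqref{EqDistr} to write the substituted term as $\sigma$ applied to the substituted $P_{j}$. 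The induction hypothesis places each of these in $\mathrm{T}_{\Sigma}((X-\delta^{u,z})\cup Y)_{s_{j}}$. Closure of that subalgebra under $\sigma$ then gives the claim for $P$.

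The main obstacle is bookkeeping rather than mathematics. Remark~\ref{RDistr} was stated for terms and families inside a single free algebra $\mathbf{T}_{\Sigma}(X)$, whereas here the $Q$'s live in $\mathrm{T}_{\Sigma}(Y)$. We therefore reapply it in $\mathbf{T}_{\Sigma}(X\cup Y)$, noting that $z\in(X\cup Y)_{u}$. This is legitimate because Remark~\ref{RRest} ensures the decomposition of $P$ and its occurrences of $(z)$ are the same in both algebras.
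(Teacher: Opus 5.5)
Your proof is correct and follows the paper's argument. Both use Artinian induction on $P$, with base cases for constants, variables other than $z$, and $z$ itself, and an inductive step combining Equation~\eqref{EqDistr}, the induction hypothesis, and closure of $\mathrm{T}_{\Sigma}((X-\delta^{u,z})\cup Y)$ under the operations of $\Sigma$. You add useful care on two points the paper leaves implicit: you quantify explicitly over all families, and you justify applying Remark~\ref{RDistr} inside $\mathbf{T}_{\Sigma}(X\cup Y)$.
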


\begin{proof}
The proof is done by induction on $(\coprod\mathrm{T}_{\Sigma}(X),\leq_{\mathbf{T}_{\Sigma}(X)})$.

\textsf{Base step of the Artinian induction.}

Let $(P,s)$ be a minimal element in $(\coprod\mathrm{T}_{\Sigma}(X),\leq_{\mathbf{T}_{\Sigma}(X)})$. We have, by Proposition~\ref{PArtOrd}, that $P$ has the form (1) $x$, for a unique variable $x\in X_{s}$, or (2) $\sigma^{\mathbf{T}_{\Sigma}(X)}$, for a unique constant operation symbol $\sigma\in \Sigma_{\lambda,s}$.

In case (1), we consider two subcases. If (1.1) $P=z$, so that $s=u$, the substituted term is
$Q^{z}_{0}\in\mathrm{T}_{\Sigma}(Y)_{u}$. If (1.2) $P=x$, for a variable $x\in
X_{s}$ with $x\neq z$, then the substituted term is $x$ itself and
$x\in(X-\delta^{u,z})_{s}$. 

In case (2), i.e., if $P=\sigma^{\mathbf{T}_{\Sigma}(X)}$, for
$\sigma\in\Sigma_{\lambda,s}$, the substituted term is $P$ itself, which
belongs to $\mathrm{T}_{\Sigma}((X-\delta^{u,z})\cup Y)_{s}$.

\textsf{Inductive step of the Artinian induction.}

Let $(P,s)$ be a non-minimal element in $(\coprod\mathrm{T}_{\Sigma}(X),\leq_{\mathbf{T}_{\Sigma}(X)})$. Then, by Proposition~\ref{PArtOrd}, there exists a unique $(\mathbf{s},s)\in (S^{\star}-\{\lambda\})\times S$, a unique operation symbol $\sigma\in\Sigma_{\mathbf{s},s}$ and a unique family of terms $(P_{j})_{j\in\bb{\mathbf{s}}}\in\mathrm{T}_{\Sigma}(X)_{\mathbf{s}}$ for which $P=\sigma^{\mathbf{T}_{\Sigma}(X)}((P_{j})_{j\in\bb{\mathbf{s}}})$. Assume that the statement holds for every immediate subterm of $P$, that is, for every $j\in\bb{\mathbf{s}}$, we have that the lemma holds for $P_{j}$. The statement follows from Equation~\eqref{EqDistr}, the induction hypothesis
applied to each $P_{j}$, and the fact that
$\mathrm{T}_{\Sigma}((X-\delta^{u,z})\cup Y)$ is closed under the structural operations of $\Sigma$.
\end{proof}

\begin{lem}\label{LSubSubt}
Let $u,s$ be sorts in $S$ and $z\in X_{u}$.
Let $P\in\mathrm{T}_{\Sigma}(X)_{s}$,
$(Q^{z}_{\alpha})_{\alpha\in\bb{P}_{z}}\in
\mathrm{T}_{\Sigma}(X)_{u}^{\bb{P}_{z}}$, and set
$
R=\sbs{z}{(Q^{z}_{\alpha})_{\alpha\in\bb{P}_{z}}}(P).
$
Then, for every
$(M,t)\in\coprod\mathrm{T}_{\Sigma}(X)$ with $(M,t)<(R,s)$, at least one of
the following two alternatives holds:
\begin{enumerate}
\item there exists an $\alpha\in\bb{P}_{z}$ such that
$(M,t)\leq(Q^{z}_{\alpha},u)$;
\item there exist a $(P',t)\in\coprod\mathrm{T}_{\Sigma}(X)$ with
$(P',t)<(P,s)$ and a family
$(Q'^{z}_{\beta})_{\beta\in\bb{P'}_{z}}\in
\{Q^{z}_{\alpha}\mid\alpha\in\bb{P}_{z}\}^{\bb{P'}_{z}}$ such that
$M=\sbs{z}{(Q'^{z}_{\beta})_{\beta\in\bb{P'}_{z}}}(P')$.
\end{enumerate}
\end{lem}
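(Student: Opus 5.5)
I will argue by Artinian induction on $(P,s)$ in $(\coprod\mathrm{T}_{\Sigma}(X),\leq_{\mathbf{T}_{\Sigma}(X)})$, as in the proof of Lemma~\ref{LSubVar}. The claim to establish for each $(P,s)$ is: for every family $(Q^{z}_{\alpha})_{\alpha\in\bb{P}_{z}}$ and every $(M,t)<(R,s)$, alternative (1) or (2) holds.

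\textsf{Base step.} Let $(P,s)$ be minimal. By Proposition~\ref{PArtOrd}, $P$ is a variable or a constant.
\begin{enumerate}
\item If $P=z$, then $s=u$ and $R=Q^{z}_{0}$. Any $(M,t)<(R,s)$ satisfies $(M,t)\leq(Q^{z}_{0},u)$, so alternative (1) holds.
\item If $P=x$ with $x\neq z$, or $P=\sigma^{\mathbf{T}_{\Sigma}(X)}$ with $\sigma\in\Sigma_{\lambda,s}$, then $R=P$ is minimal. Hence no $(M,t)<(R,s)$ exists and the statement is vacuous.
\end{enumerate}

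\textsf{Inductive step.} Write $P=\sigma^{\mathbf{T}_{\Sigma}(X)}((P_{j})_{j\in\bb{\mathbf{s}}})$ and split the family into $(Q^{j,z}_{\beta})_{\beta\in\bb{P_{j}}_{z}}$ as in Remark~\ref{RDistr}. By Equation~\eqref{EqDistr}, $R=\sigma^{\mathbf{T}_{\Sigma}(X)}((R_{j})_{j\in\bb{\mathbf{s}}})$ with $R_{j}=\sbs{z}{(Q^{j,z}_{\beta})_{\beta}}(P_{j})$. This is the unique decomposition of $R$ (Proposition~\ref{PTermChar}). So by Remark~\ref{RRestII}, $(M,t)<(R,s)$ gives some $j$ with $(M,t)\leq(R_{j},s_{j})$. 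There are two subcases.
\begin{enumerate}
\item If $(M,t)=(R_{j},s_{j})$, take $P'=P_{j}$ and $Q'=(Q^{j,z}_{\beta})_{\beta}$. Then $(P_{j},s_{j})<(P,s)$ by Remark~\ref{RRestII}. Each $Q^{j,z}_{\beta}$ is a member of $\{Q^{z}_{\alpha}\mid\alpha\in\bb{P}_{z}\}$, so alternative (2) holds.
\item If $(M,t)<(R_{j},s_{j})$, apply the induction hypothesis to $(P_{j},s_{j})<(P,s)$ with the family $(Q^{j,z}_{\beta})_{\beta}$.
\begin{enumerate}
\item If it yields $(M,t)\leq(Q^{j,z}_{\beta},u)$, then $Q^{j,z}_{\beta}=Q^{z}_{\alpha}$ for the corresponding index $\alpha$, giving alternative (1).
\item If it yields $P'$ with $(P',t)<(P_{j},s_{j})$ and a family from $\{Q^{j,z}_{\beta}\}$, then transitivity gives $(P',t)<(P,s)$. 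Since $\{Q^{j,z}_{\beta}\}\subseteq\{Q^{z}_{\alpha}\}$, the same family works, giving alternative (2).
\end{enumerate}
\end{enumerate}

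The main point needing care is the bookkeeping of indices in Remark~\ref{RDistr}. We need that each $Q^{j,z}_{\beta}$ literally equals some $Q^{z}_{\alpha}$, so that the range condition in (2) transfers from the subterm to $P$. We also need the strictness $(P',t)<(P,s)$, which comes from Remark~\ref{RRestII}. Everything else is routine once we use that the decomposition of $R$ obtained from \eqref{EqDistr} is the unique one, so that Remark~\ref{RRestII} applies to $R$. The substitution is computed in $\mathrm{T}_{\Sigma}(X)$ throughout, so no change of ambient algebra is needed.
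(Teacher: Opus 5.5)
Your proof is correct and matches the paper's argument essentially step for step. The paper also uses Artinian induction on $(P,s)$. Its base case splits into $P=z$, giving alternative (1), and the vacuous case where $R=P$ is minimal. Its inductive step goes through Equation~\eqref{EqDistr} and Remark~\ref{RRestII}: the case $M=R_{j}$ gives alternative (2) with $P'=P_{j}$, and the case $(M,t)<(R_{j},s_{j})$ is handled by the induction hypothesis, transitivity, and $\{Q^{j,z}_{\beta}\}\subseteq\{Q^{z}_{\alpha}\}$.
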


\begin{proof}
The proof is done by induction on $(\coprod\mathrm{T}_{\Sigma}(X),\leq_{\mathbf{T}_{\Sigma}(X)})$.

\textsf{Base step of the Artinian induction.}

Let $(P,s)$ be a minimal element in $(\coprod\mathrm{T}_{\Sigma}(X),\leq_{\mathbf{T}_{\Sigma}(X)})$. Consider the case $P=z$, so that $s=u$, then $R=Q^{z}_{0}$ and every $(M,t)$ with $(M,t)<(R,s)$ satisfies
alternative~(1). If the previous case does not apply, then $R=P$ and $(R,s)$ is minimal, so the statement holds vacuously.

\textsf{Inductive step of the Artinian induction.}

Let $(P,s)$ be a non-minimal element in $(\coprod\mathrm{T}_{\Sigma}(X),\leq_{\mathbf{T}_{\Sigma}(X)})$. Then, by Proposition~\ref{PArtOrd}, there exists a unique $(\mathbf{s},s)\in (S^{\star}-\{\lambda\})\times S$, a unique operation symbol $\sigma\in\Sigma_{\mathbf{s},s}$ and a unique family of terms $(P_{j})_{j\in\bb{\mathbf{s}}}\in\mathrm{T}_{\Sigma}(X)_{\mathbf{s}}$ for which $P=\sigma^{\mathbf{T}_{\Sigma}(X)}((P_{j})_{j\in\bb{\mathbf{s}}})$. Assume that the statement holds for every immediate subterm of $P$, that is, for every $j\in\bb{\mathbf{s}}$, we have that the lemma holds for $P_{j}$. By
Equation~\eqref{EqDistr}, we have that 
$R=\sigma^{\mathbf{T}_{\Sigma}(X)}((R_{j})_{j\in\bb{\mathbf{s}}})$, where,
for every $j\in\bb{\mathbf{s}}$,
$R_{j}=\sbs{z}{(Q^{j,z}_{\beta})_{\beta\in\bb{P_{j}}_{z}}}(P_{j})$ for the
corresponding partition of the family. Let $(M,t)<(R,s)$. By
Remark~\ref{RRestII}, $(M,t)\leq(R_{j},s_{j})$ for some $j\in\bb{\mathbf{s}}$.

If $M=R_{j}$, then alternative~(2) holds with $P'=P_{j}$ and the family
$(Q^{j,z}_{\beta})_{\beta\in\bb{P_{j}}_{z}}$, since
$(P_{j},s_{j})<(P,s)$, and  the members of
$(Q^{j,z}_{\beta})_{\beta\in\bb{P_{j}}_{z}}$ belong to
$\{Q^{z}_{\alpha}\mid\alpha\in\bb{P}_{z}\}$. If $(M,t)<(R_{j},s_{j})$, the
induction hypothesis applied to $P_{j}$ yields either alternative~(1)
directly, or alternative~(2) for some $(P'',t)<(P_{j},s_{j})$; in this
latter case $(P'',t)<(P,s)$ holds by transitivity, and the members of the
corresponding family belong to
$\{Q^{j,z}_{\beta}\mid\beta\in\bb{P_{j}}_{z}\}\subseteq
\{Q^{z}_{\alpha}\mid\alpha\in\bb{P}_{z}\}$.
\end{proof}

We next state that every $\Sigma$-homomorphism from $\mathbf{T}_{\Sigma}(X)$ to any $\Sigma$-algebra $\mathbf{A}$ is compatible with the substitution operator.

\begin{propC}[{\cite[Lemma~3.10]{CVCL20}}]
\label{LGlobSubstOp}
Let $s\in S$, $P\in\mathrm{T}_{\Sigma}(X)_{s}$, $\mathbf{A}$ a $\Sigma$-algebra, and $g$ a homomorphism from $\mathbf{T}_{\Sigma}(X)$ to $\mathbf{A}$. Let 
$\left((Q^{x,t}_{\alpha})_{\alpha\in \bb{P}_{x}}\right)_{(x,t)\in \coprod X}$, $\left((R^{x,t}_{\alpha})_{\alpha\in \bb{P}_{x}}\right)_{(x,t)\in \coprod X}$ in 
$ \prod_{(x,t)\in \coprod X}\mathrm{T}_{\Sigma}(X)_{t}^{\bb{P}_{x}}$. 
If, for every $(x,t)\in\coprod X$ and every $\alpha\in\bb{P}_{x}$, it holds that $g_{t}(Q^{x,t}_{\alpha})=g_{t}(R^{x,t}_{\alpha})$, then 
$$
g_{s}\left(\sbs{(x,t)}{(Q^{x,t}_{\alpha})_{\alpha\in \bb{P}_{x}}}_{(x,t)\in \coprod X}(P)\right)
=
g_{s}\left(\sbs{(x,t)}{(R^{x,t}_{\alpha})_{\alpha\in \bb{P}_{x}}}_{(x,t)\in \coprod X}(P)\right).
$$
\end{propC}

The following result is a direct corollary of Proposition~\ref{LGlobSubstOp}.

\begin{cor}\label{CSubInv}
Let $\mathbf{A}$ be a $\Sigma$-algebra, and $g$ a homomorphism from $\mathbf{T}_{\Sigma}(X)$ to $\mathbf{A}$. Let $u,s$ be sorts in $S$, $z\in X_{u}$, 
$P\in\mathrm{T}_{\Sigma}(X)_{s}$,
and $(Q^{z}_{\alpha})_{\alpha\in\bb{P}_{z}}\in
\mathrm{T}_{\Sigma}(X)_{u}^{\bb{P}_{z}}$ a family such that, for every
$\alpha\in\bb{P}_{z}$, $g_{u}(Q^{z}_{\alpha})=g_{u}(z)$. Then
$$
g_{s}\left(\sbs{z}{(Q^{z}_{\alpha})_{\alpha\in\bb{P}_{z}}}(P)\right)=g_{s}(P).
$$
\end{cor}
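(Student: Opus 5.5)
This is a direct specialization of Proposition~\ref{LGlobSubstOp}. The plan is to embed the single-variable substitution $\sbs{z}{\cdot}(P)$ as a particular global substitution and then compare it with the trivial global substitution that gives back $P$.

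First, I build two families in $\prod_{(x,t)\in\coprod X}\mathrm{T}_{\Sigma}(X)_{t}^{\bb{P}_{x}}$.

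\emph{First family.} For $(x,t)=(z,u)$, let $Q^{z,u}_{\alpha}=Q^{z}_{\alpha}$ for every $\alpha\in\bb{P}_{z}$. For every other $(x,t)\in\coprod X$, let $Q^{x,t}_{\alpha}=x$, meaning the term $\eta^{X}_{t}(x)\in\mathrm{T}_{\Sigma}(X)_{t}$, for every $\alpha\in\bb{P}_{x}$.

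\emph{Second family.} For every $(x,t)\in\coprod X$ and every $\alpha\in\bb{P}_{x}$, let $R^{x,t}_{\alpha}=x$.

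Next, I record the two substitution identities this choice gives. Both follow from the definition of the substitution operators: replacing an occurrence of a variable $x$ by the one-letter word $(x)$ leaves the word unchanged. Concretely:
\[
\sbs{(x,t)}{(Q^{x,t}_{\alpha})_{\alpha\in\bb{P}_{x}}}_{(x,t)\in\coprod X}(P)=\sbs{z}{(Q^{z}_{\alpha})_{\alpha\in\bb{P}_{z}}}(P),
\]
\[
\sbs{(x,t)}{(R^{x,t}_{\alpha})_{\alpha\in\bb{P}_{x}}}_{(x,t)\in\coprod X}(P)=P.
\]
If more rigor is wanted, each identity can be checked by Artinian induction on $P$ using Equation~\eqref{EqDistr}, or its evident global analogue. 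The base cases are variables and constants, and the inductive step goes through the unique decomposition given by Proposition~\ref{PTermChar}.

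Then I verify the hypothesis of Proposition~\ref{LGlobSubstOp}. For $(x,t)\neq(z,u)$, the terms $Q^{x,t}_{\alpha}$ and $R^{x,t}_{\alpha}$ are both $x$, so $g_{t}$ agrees on them trivially. For $(x,t)=(z,u)$, we need $g_{u}(Q^{z}_{\alpha})=g_{u}(z)=g_{u}(R^{z,u}_{\alpha})$, which is exactly the assumption of the corollary. Proposition~\ref{LGlobSubstOp} therefore yields equality of $g_{s}$ on the two global substitutions, and by the two identities above this is the claimed equation $g_{s}\bigl(\sbs{z}{(Q^{z}_{\alpha})_{\alpha\in\bb{P}_{z}}}(P)\bigr)=g_{s}(P)$.

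The only step that is not immediate is making the two identities precise. I expect that to be a routine formality rather than a real obstacle.
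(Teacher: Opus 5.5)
Your proposal is correct and is essentially the paper's proof: you realize $\sbs{z}{(Q^{z}_{\alpha})_{\alpha\in\bb{P}_{z}}}(P)$ as the global substitution that fixes every other variable, and compare it with the identity global substitution, which returns $P$. You then apply Proposition~\ref{LGlobSubstOp}, whose hypothesis reduces exactly to $g_{u}(Q^{z}_{\alpha})=g_{u}(z)$; the paper likewise takes the two substitution identities as immediate from the definition of the operators.
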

\begin{proof}
First note that substituting, at every occurrence of $z$ in $P$, the
variable $z$ itself leaves $P$ unchanged, i.e.,
$\sbs{z}{(z)_{\alpha\in\bb{P}_{z}}}(P)=P$. Moreover, the substitution of a
family for the single variable $z$ is the particular case of the global
substitution operator for $P$ in which, at every $(x,t)\in\coprod X$ with
$(x,t)\neq(z,u)$, the constant family $(x)_{\alpha\in\bb{P}_{x}}$ is
substituted. Since, for every $\alpha\in\bb{P}_{z}$,
$g_{u}(Q^{z}_{\alpha})=g_{u}(z)$, Proposition~\ref{LGlobSubstOp}, applied to the two
families just described, yields
\[g_{s}(\sbs{z}{(Q^{z}_{\alpha})_{\alpha\in\bb{P}_{z}}}(P))
=g_{s}(\sbs{z}{(z)_{\alpha\in\bb{P}_{z}}}(P))=g_{s}(P).
\qedhere
\]
\end{proof}

The next lemma isolates the decomposition of a term along
its subterms of a given sort and a given value under a homomorphism.

\begin{lem}\label{LCollapse}
Let $\mathbf{A}$ be a $\Sigma$-algebra and $g$ a homomorphism from
$\mathbf{T}_{\Sigma}(X)$ to $\mathbf{A}$. Let $u\in S$ and $z\in X_{u}$.
Then, for every $s\in S$ and
every $R\in\mathrm{T}_{\Sigma}(X)_{s}$ with $(R,s)\notin\mathrm{Min}$,
there exist a term $P\in\mathrm{T}_{\Sigma}(X)_{s}$ and a family
$(Q^{z}_{\alpha})_{\alpha\in\bb{P}_{z}}\in
\mathrm{T}_{\Sigma}(X)_{u}^{\bb{P}_{z}}$ such that
\begin{enumerate}
\item $R=\sbs{z}{(Q^{z}_{\alpha})_{\alpha\in\bb{P}_{z}}}(P)$;
\item $g_{s}(P)=g_{s}(R)$;
\item for every $\alpha\in\bb{P}_{z}$, $(Q^{z}_{\alpha},u)<(R,s)$ and
$g_{u}(Q^{z}_{\alpha})=g_{u}(z)$; and
\item for every $(M,t)\in\coprod\mathrm{T}_{\Sigma}(X)$ with
$(M,t)<(P,s)$ and $(M,t)\notin\mathrm{Min}$, the following two
properties hold:
\begin{enumerate}
\item it is not the case that both $t=u$ and $g_{u}(M)=g_{u}(z)$; and
\item there exists a $(N,t)\in\coprod\mathrm{T}_{\Sigma}(X)$ with
$(N,t)<(R,s)$, $(N,t)\notin\mathrm{Min}$ and
$g_{t}(N)=g_{t}(M)$.
\end{enumerate}
\end{enumerate}
\end{lem}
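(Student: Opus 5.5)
Let $R\in\mathrm{T}_{\Sigma}(X)_{s}$ with $(R,s)\notin\mathrm{Min}$. We build $P$ from $R$ by collapsing, top-down, the maximal non-minimal proper subterms of sort $u$ with $g$-value $g_{u}(z)$, and prove the four properties by Artinian induction on $(\coprod\mathrm{T}_{\Sigma}(X),\leq)$. To make the induction work we prove a stronger auxiliary statement. For every term $R'$ of sort $s'$, not necessarily non-minimal, define $\widehat{R'}$ as follows.
\begin{enumerate}
\item If $R'$ is minimal, then $\widehat{R'}=R'$.
\item If $R'=\sigma^{\mathbf{T}_{\Sigma}(X)}((R'_{j})_{j\in\bb{\mathbf{s}}})$, then $\widehat{R'}=\sigma^{\mathbf{T}_{\Sigma}(X)}((C(R'_{j}))_{j\in\bb{\mathbf{s}}})$.
\end{enumerate}
Here the collapsing operator $C$ is given by $C(N)=z$ if $N$ has sort $u$, is non-minimal and satisfies $g_{u}(N)=g_{u}(z)$, and $C(N)=\widehat{N}$ otherwise. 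We set $P=\widehat{R}$. Note that the root of $R$ is never collapsed, which is why we only require $(Q^{z}_{\alpha},u)<(R,s)$.

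For each $N$ we simultaneously produce a family $Q(N)$ of terms of sort $u$ with $N=\sbs{z}{Q(N)}(C(N))$, and likewise for $\widehat{\cdot}$.
\begin{enumerate}
\item If $C(N)=z$, take the single term $N$.
\item If $N$ is minimal and equal to $z$, take $z$ itself.
\item If $N$ is minimal and different from $z$, take the empty family.
\item In the inductive case, concatenate the families of the arguments in the order given by Remark~\ref{RDistr}.
\end{enumerate}
Property (1) then follows from Equation~\eqref{EqDistr}. For the first half of (3), each $Q^{z}_{\alpha}$ is either the variable $z$, which is minimal and occurs as a subterm of $R$ below the root, or a collapsed subterm $N$, which is a proper subterm. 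In both cases $(Q^{z}_{\alpha},u)<(R,s)$. For the second half of (3), every $Q^{z}_{\alpha}$ has value $g_{u}(z)$: trivially when it is $z$, and by the collapsing criterion otherwise.

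Property (2) follows from Corollary~\ref{CSubInv} applied to $P$ and the family just built: since every $g_{u}(Q^{z}_{\alpha})=g_{u}(z)$, we get $g_{s}(R)=g_{s}(\sbs{z}{\cdots}(P))=g_{s}(P)$. The same argument shows more generally that $g(C(N))=g(N)$ and $g(\widehat{N})=g(N)$, and this will be reused for (4b).

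For (4) we prove the following by the same induction. Every non-minimal proper subterm $(M,t)$ of $\widehat{R'}$ has the form $\widehat{N}$ for some non-minimal proper subterm $(N,t)<(R',s')$ which is not collapsed, that is, $C(N)=\widehat{N}$. The key point is that a non-minimal subterm of $\widehat{R'}$ cannot lie inside an occurrence of $z$ produced by collapsing, since $z$ is minimal. So by Remark~\ref{RRestII}, $M$ is either $C(R'_{j})$ itself, necessarily non-collapsed and hence equal to $\widehat{R'_{j}}$, or a proper subterm of $\widehat{R'_{j}}$, in which case the induction hypothesis applies. Given this description:
\begin{enumerate}
\item Property (4b) follows from $g_{t}(\widehat{N})=g_{t}(N)$, taking the witness to be $N$.
\item Property (4a) follows because $N$ is not collapsed. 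So if $t=u$, then $N$ is non-minimal and not collapsed, hence $g_{u}(N)\neq g_{u}(z)$, and therefore $g_{u}(M)=g_{u}(N)\neq g_{u}(z)$.
\end{enumerate}

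The main obstacle is keeping this bookkeeping precise: the subterm description of $\widehat{R'}$ together with the value-preservation identity, each checked within the Artinian induction using Remark~\ref{RRestII}, Remark~\ref{RDistr}, and Corollary~\ref{CSubInv}. In particular, one must check that the minimal case is consistent, which holds because $z$ itself is minimal and is never collapsed.
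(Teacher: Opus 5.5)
Your proof is correct and follows the paper's route. Your $\widehat{R}$ is the paper's $P=\sigma^{\mathbf{T}_{\Sigma}(X)}((c_{s_{j}}(R_{j}))_{j\in\bb{\mathbf{s}}})$: a recursively defined collapsing map applied strictly below the root. As in the paper, the families are concatenated via \eqref{EqDistr}, and an Artinian induction controls both the values and the non-minimal subterms of $P$. Obtaining $g_{s}(P)=g_{s}(R)$ from Corollary~\ref{CSubInv}, rather than from the paper's inductive property~(i), is an inessential variation.

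One detail deserves note. You collapse the \emph{non-minimal} subterms of sort $u$ with value $g_{u}(z)$. The paper's displayed definition of $c$ instead collapses only \emph{minimal} ones and always recurses through non-minimal terms. Read literally, the paper's map does not yield (4)(a). Take one sort, $X_{u}=\{x,z\}$, a unary $f$, $A_{u}=\{0,1\}$, $f^{\mathbf{A}}(a)=1-a$, $g_{u}(x)=0$, $g_{u}(z)=1$ and $R=f(f(x))$. The paper's map returns $P=R$, whose non-minimal subterm $f(x)$ has value $g_{u}(z)$. Likewise, its step ``since the first case does not apply'' in the proof of (iii)(a) is only valid if non-minimal terms can be collapsed. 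Your construction gives $P=f(z)$ with family $(f(x))$ here, and is the repaired version of the argument.
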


\begin{proof}
We first define an $S$-sorted mapping $c\colon \mathrm{T}_{\Sigma}(X)\longrightarrow \mathrm{T}_{\Sigma}(X)$. For every $s\in S$, the $s$-th coordinate $c_{s}$ of $c$ is defined recursively as follows:
$$
c_{s}
\left\lbrace
\begin{array}{ccl}
{\mathrm{T}_{\Sigma}(X)_{s}}&\longrightarrow&
{\mathrm{T}_{\Sigma}(X)_{s}}\\
{P}&\longmapsto&
{\begin{cases}
 z,&\text{if }P\in\mathrm{Min}, s=u \text{ and } g_{u}(P)=g_{u}(z); \\
 P, &\text{if }P\in\mathrm{Min}\text{ and }(s\neq u \text{ or } g_{u}(P)\neq g_{u}(z));\\
 {\sigma}^{\mathbf{T}_{\Sigma}(X)}((c_{s_{j}}(P_{j}))_{j\in\lvert{\mathbf{s}}\rvert}),
 &\text{if }P = \sigma^{\mathbf{T}_{\Sigma}(X)}((P_{j})_{j\in\lvert{\mathbf{s}}\rvert}). 
\end{cases}}
\end{array}
\right.
$$

This completes the definition of $c$.

We first prove the following claim.

\begin{clm}\label{CCollapse}
Let $s\in S$ and $P\in \mathrm{T}_{\Sigma}(X)_{s}$. The following three properties hold:
\begin{enumerate}
\item[(i)] $g_{s}(c_{s}(P))=g_{s}(P)$.
\item[(ii)] There exists a family
$(Q^{z}_{\beta})_{\beta\in\bb{c_{s}(P)}_{z}}$ such that
$P=\sbs{z}{(Q^{z}_{\beta})_{\beta\in\bb{c_{s}(P)}_{z}}}(c_{s}(P))$ and, for
every $\beta\in\bb{c_{s}(P)}_{z}$, $(Q^{z}_{\beta},u)\leq(P,s)$ and
$g_{u}(Q^{z}_{\beta})=g_{u}(z)$.
\item[(iii)] For every $(M,t)\leq(c_{s}(P),s)$ with
$(M,t)\notin\mathrm{Min}$, the following two properties hold:
\begin{enumerate}
\item[(a)] it is not the case that both $t=u$ and
$g_{u}(M)=g_{u}(z)$; and
\item[(b)] there exists a $(N,t)\leq(P,s)$ with
$(N,t)\notin\mathrm{Min}$ and $g_{t}(N)=g_{t}(M)$.
\end{enumerate}
\end{enumerate}
\end{clm}

\begin{proof}
The proof is done by Artinian induction on $(\coprod\mathrm{T}_{\Sigma}(X),\leq_{\mathbf{T}_{\Sigma}(X)})$.

\textsf{Base step of the Artinian induction.}

Let $(P,s)$ be a minimal element in $(\coprod\mathrm{T}_{\Sigma}(X),\leq_{\mathbf{T}_{\Sigma}(X)})$. Consider the case $s=u$ and $g_{u}(P)=g_{u}(z)$, then $c_{s}(P)=z$. Property~(i) holds
since $g_{u}(z)=g_{u}(P)$. For property~(ii), $\bb{z}_{z}=1$ and the family
$(P)$ satisfies $\sbs{z}{(P)}(z)=P$, $(P,u)\leq(P,u)$ and
$g_{u}(P)=g_{u}(z)$. Property~(iii) holds vacuously, since the only
subterm of $z$ is $z$ itself, which is minimal. If the previous case does not apply, then
$c_{s}(P)=P$ and properties~(i) and~(iii) are immediate. For
property~(ii), note that $P\neq z$, for otherwise the first case would
apply; hence $\bb{P}_{z}=0$ and the empty family satisfies the statement.

\textsf{Inductive step of the Artinian induction.}

Let $(P,s)$ be a non-minimal element in $(\coprod\mathrm{T}_{\Sigma}(X),\leq_{\mathbf{T}_{\Sigma}(X)})$. Then, by Proposition~\ref{PArtOrd}, there exists a unique $(\mathbf{s},s)\in (S^{\star}-\{\lambda\})\times S$, a unique operation symbol $\sigma\in\Sigma_{\mathbf{s},s}$ and a unique family of terms $(P_{j})_{j\in\bb{\mathbf{s}}}\in\mathrm{T}_{\Sigma}(X)_{\mathbf{s}}$ for which $P=\sigma^{\mathbf{T}_{\Sigma}(X)}((P_{j})_{j\in\bb{\mathbf{s}}})$. Assume that the statement holds for every immediate subterm of $P$, that is, for every $j\in\bb{\mathbf{s}}$, Claim~\ref{CCollapse} holds for $P_{j}$.

Note that 
$c_{s}(P)=\sigma^{\mathbf{T}_{\Sigma}(X)}
((c_{s_{j}}(P_{j}))_{j\in\bb{\mathbf{s}}})$. Property~(i) follows from the
fact that $g$ is a homomorphism and the induction hypothesis~(i) for the
$P_{j}$. Property~(ii) follows from Remark~\ref{RDistr} and the induction
hypothesis~(ii) for the $P_{j}$: the concatenation of the families
provided for the $P_{j}$ satisfies, by Equation~\eqref{EqDistr}, the
required equality, and its members satisfy
$(Q^{z}_{\beta},u)\leq(P_{j},s_{j})\leq(P,s)$ and $g_{u}(Q^{z}_{\beta})=g_{u}(z)$.
For property~(iii), let $(M,t)\leq(c_{s}(P),s)$ with
$(M,t)\notin\mathrm{Min}$. If $M=c_{s}(P)$, so that $t=s$, then, on the
one hand, $g_{s}(M)=g_{s}(P)$ by property~(i) and, since the first case
does not apply, it is not the case that both $s=u$ and
$g_{u}(P)=g_{u}(z)$; on the other hand, $(P,s)\leq(P,s)$,
$(P,s)\notin\mathrm{Min}$ and $g_{s}(P)=g_{s}(M)$, so $N=P$ works.
If $(M,t)<(c_{s}(P),s)$, then, by Remark~\ref{RRestII},
$(M,t)\leq(c_{s_{j}}(P_{j}),s_{j})$ for some $j\in\bb{\mathbf{s}}$, and
the induction hypothesis~(iii) for $P_{j}$ provides the desired
conclusions, since $(N,t)\leq(P_{j},s_{j})\leq(P,s)$.

This proves Claim~\ref{CCollapse}.
\end{proof}

We can now prove Lemma~\ref{LCollapse}. Let $R\in\mathrm{T}_{\Sigma}(X)_{s}$ with
$(R,s)\notin\mathrm{Min}$. By Proposition~\ref{PTermChar},
$R=\sigma^{\mathbf{T}_{\Sigma}(X)}((R_{j})_{j\in\bb{\mathbf{s}}})$, for a
unique $\mathbf{s}\in S^{\star}-\{\lambda\}$, a unique
$\sigma\in\Sigma_{\mathbf{s},s}$ and a unique family
$(R_{j})_{j\in\bb{\mathbf{s}}}\in\mathrm{T}_{\Sigma}(X)_{\mathbf{s}}$.
Define
$$
P=\sigma^{\mathbf{T}_{\Sigma}(X)}
((c_{s_{j}}(R_{j}))_{j\in\bb{\mathbf{s}}})
$$
and let $(Q^{z}_{\alpha})_{\alpha\in\bb{P}_{z}}$ be the concatenation of the
families provided by property~(ii) for the $R_{j}$. Then condition~(1)
follows from Equation~\eqref{EqDistr} and property~(ii); condition~(2)
follows from property~(i) and the fact that $g$ is a homomorphism;
condition~(3) follows from property~(ii), since the members of the family
satisfy $(Q^{z}_{\alpha},u)\leq(R_{j},s_{j})<(R,s)$. Finally, for
condition~(4), let $(M,t)<(P,s)$ with $(M,t)\notin\mathrm{Min}$. By
Remark~\ref{RRestII}, $(M,t)\leq(c_{s_{j}}(R_{j}),s_{j})$ for some
$j\in\bb{\mathbf{s}}$, and property~(iii) for $R_{j}$ yields both~(4)(a)
and, since $(N,t)\leq(R_{j},s_{j})<(R,s)$, also~(4)(b).
\end{proof}

From now on, following a strongly rooted tradition in the fields of formal languages and automata, we agree to call \emph{languages} the subsets of the underlying many-sorted set of the free many-sorted algebra on a many-sorted set. Since it will be used in the following definition we recall that, as a particular case of Proposition~\ref{PSubsetAlg}, for an $S$-sorted set $X$, we have the power $\Sigma$-algebra $\mathbf{T}_{\Sigma}(X)^{\wp}$ associated with $\mathbf{T}_{\Sigma}(X)$, which has as underlying $S$-sorted set 
$$
\mathrm{T}_{\Sigma}(X)^{\wp} = (\mathrm{Sub}(\mathrm{T}_{\Sigma}(X)_{s}))_{s\in S}
$$ 
and, for every $(\mathbf{s},s)\in S^{\star}\times S$ and every $\sigma\in \Sigma_{\mathbf{s},s}$, as structural operation associated with $\sigma$, the mapping 
$$
\begin{adjustbox}{max width=\linewidth, keepaspectratio}
$
\sigma^{\mathbf{T}_{\Sigma}(X)^{\wp}}
\left\lbrace
\begin{array}{ccl}
{\prod_{j\in \lvert{\mathbf{s}}\rvert}\mathrm{Sub}(\mathrm{T}_{\Sigma}(X)_{s_{j}})}&\longrightarrow&
{\mathrm{Sub}(\mathrm{T}_{\Sigma}(X)_{s})}\\
{(L_{j})_{j\in \lvert{\mathbf{s}}\rvert}}&\longmapsto&
{\{\sigma^{\mathbf{T}_{\Sigma}(X)}((P_{j})_{j\in \lvert{\mathbf{s}}\rvert})\mid (P_{j})_{j\in \lvert{\mathbf{s}}\rvert}\in \prod_{j\in \lvert{\mathbf{s}}\rvert}L_{j}\}}
\end{array}
\right.
$
\end{adjustbox}
$$ 

We remind the reader that, according to what is established in Proposition~\ref{PTermChar}, we have let, for abbreviation, for every $(\mathbf{s},s)\in S^{\star}\times S$ and every $\sigma\in\Sigma_{\mathbf{s},s}$, $\sigma^{\mathbf{T}_{\Sigma}(X)}$ stand for the structural operation of $\mathbf{T}_{\Sigma}(X)$ associated with $\sigma$.

\begin{defi}\label{DSubs}
For every $t\in S$, let $(L_{x})_{x\in X_{t}}$ be a mapping from $X_{t}$ to $\mathrm{T}_{\Sigma}(X)^{\wp}_{t} = \mathrm{Sub}(\mathrm{T}_{\Sigma}(X)_{t})$, also written, in this context, as $\sbs{x}{L_{x}}_{x\in X_{t}}$. Then, for the $S$-sorted mapping $\left(\sbs{x}{L_{x}}_{x\in X_{t}}\right)_{t\in S}$ from $X$ to $\mathrm{T}_{\Sigma}(X)^{\wp}$, we will denote by $\left(\left(\sbs{x}{L_{x}}_{x\in X_{t}}\right)_{t\in S}\right)^{\sharp}$ the unique homomorphism from $\mathbf{T}_{\Sigma}(X)$ to $\mathbf{T}_{\Sigma}(X)^{\wp}$ such that
$$
\left(\left(\sbs{x}{L_{x}}_{x\in X_{t}}\right)_{t\in S}\right)^{\sharp}\circ \eta^{X} = \left(\sbs{x}{L_{x}}_{x\in X_{t}}\right)_{t\in S}.
$$
Let $u$ be a sort in $S$, $z$ an element of $X_{u}$, and $L\in \mathrm{T}_{\Sigma}(X)^{\wp}_{u}$. Then, for the $S$-sorted mapping $\sbs{z}{L} = (\sbs{z}{L}_{t})_{t\in S}$ from $X$ to $\mathrm{T}_{\Sigma}(X)^{\wp}$ defined, for every $t\in S$, as: 
\begin{enumerate}
\item if $t = u$, then 
$$
\sbs{z}{L}_{u}
\left\lbrace
\begin{array}{ccl}
{X_{u}}&\longrightarrow&
{\mathrm{Sub}(\mathrm{T}_{\Sigma}(X)_{u})}\\
{y}&\longmapsto&
{\begin{cases}
 L, &\text{if $y = z$;}\\
 \{y\}, &\text{if $y\in X_{u}-\{z\}$,}
 \end{cases}}
 \end{array}
 \right.
$$
\item while, if $t \neq u$, then
$$
\sbs{z}{L}_{t}
\left\lbrace
\begin{array}{ccl}
{X_{t}}&\longrightarrow&
{\mathrm{Sub}(\mathrm{T}_{\Sigma}(X)_{t})}\\
{x}&\longmapsto&
{\{x\},}
\end{array}
\right.
$$
\end{enumerate}
we will denote by $\sbs{z}{L}^{\sharp}$ the unique homomorphism from $\mathbf{T}_{\Sigma}(X)$ to $\mathbf{T}_{\Sigma}(X)^{\wp}$ such that $\sbs{z}{L}^{\sharp}\circ \eta^ {X} = \sbs{z}{L}$.
\end{defi}

\begin{rem}
We recall that the homomorphism $\left(\left(\sbs{x}{L_{x}}_{x\in X_{t}}\right)_{t\in S}\right)^{\sharp}$ from $\mathbf{T}_{\Sigma}(X)$ to $\mathbf{T}_{\Sigma}(X)^{\wp}$ introduced in Definition~\ref{DSubs} is defined, by algebraic recursion, as follows. Let $s$ be a sort in $S$ and $P\in\mathrm{T}_{\Sigma}(X)_{s}$. Then, according to Proposition~\ref{PTermChar}, we know that $P$ has the form (1) $z$, for a unique $z\in X_{s}$, (2) $\sigma^{\mathbf{T}_{\Sigma}(X)}$, for a unique $\sigma\in\Sigma_{\lambda, s}$, or (3) $\sigma^{\mathbf{T}_{\Sigma}(X)}((P_{j})_{j\in \lvert{\mathbf{s}}\rvert})$, for a unique $\mathbf{s}\in S^{\star}-\{\lambda\}$, a unique $\sigma\in \Sigma_{\mathbf{s},s}$, and a unique family $(P_{j})_{j\in\lvert{\mathbf{s}}\rvert}\in\mathrm{T}_{\Sigma}(X)_{\mathbf{s}}$.

In case (1), we have $\left(\left(\sbs{x}{L_{x}}_{x\in X_{t}}\right)_{t\in S}\right)^{\sharp}_{s}(z) = L_{z}$.

In case (2), we have $\left(\left(\sbs{x}{L_{x}}_{x\in X_{t}}\right)_{t\in S}\right)^{\sharp}_{s}(\sigma^{\mathbf{T}_{\Sigma}(X)}) = \{\sigma^{\mathbf{T}_{\Sigma}(X)}\}$.

Finally, in case (3), and under the hypothesis that, for every $j\in \lvert{\mathbf{s}}\rvert$, the subset $\left(\left(\sbs{x}{L_{x}}_{x\in X_{t}}\right)_{t\in S}\right)^{\sharp}_{s_{j}}(P_{j})$ of $\mathrm{T}_{\Sigma}(X)_{s_{j}}$ has been defined, we have 
\begin{flushleft}
$
\left(\left(\sbs{x}{L_{x}}_{x\in X_{t}}\right)_{t\in S}\right)^{\sharp}_{s}(\sigma^{\mathbf{T}_{\Sigma}(X)}((P_{j})_{j\in \lvert{\mathbf{s}}\rvert}))
$
\allowdisplaybreaks
\begin{align*}
&=
\sigma^{\mathbf{T}_{\Sigma}(X)^{\wp}}\left(\left(\left(\left(\sbs{x}{L_{x}}_{x\in X_{t}}\right)_{t\in S}\right)^{\sharp}_{s_{j}}(P_{j})\right)_{j\in \lvert{\mathbf{s}}\rvert}\right)
\\&=
\textstyle\Big\lbrace\sigma^{\mathbf{T}_{\Sigma}(X)}((Q_{j})_{j\in \lvert{\mathbf{s}}\rvert})\,\,\Big|\,\, (Q_{j})_{j\in \lvert{\mathbf{s}}\rvert}\in \prod_{j\in \lvert{\mathbf{s}}\rvert}\left(\left(\sbs{x}{L_{x}}_{x\in X_{t}}\right)_{t\in S}\right)^{\sharp}_{s_{j}}(P_{j})\Big\rbrace.
\end{align*}
\end{flushleft}
\end{rem}

We finally consider the canonical extension of the homomorphism introduced in Definition~\ref{DSubs} to the corresponding power algebra.

\begin{defi}
Let $\left(\sbs{x}{L_{x}}_{x\in X_{t}}\right)_{t\in S}$ be an $S$-sorted mapping from $X$ to $\mathrm{T}_{\Sigma}(X)^{\wp}$. Then we will denote by $\left(\left(\sbs{x}{L_{x}}_{x\in X_{t}}\right)_{t\in S}\right)^{\sharp\mathsf{p}}$ the $S$-sorted mapping from $\mathrm{T}_{\Sigma}(X)^{\wp}$ to $\mathrm{T}_{\Sigma}(X)^{\wp}$ defined, for every $s\in S$, as: 
$$
\left(\left(\sbs{x}{L_{x}}_{x\in X_{t}}\right)_{t\in S}\right)^{\sharp\mathsf{p}}_{s}\left\lbrace
\begin{array}{ccl}
{\mathrm{T}_{\Sigma}(X)^{\wp}_{s}}&\longrightarrow&
{\mathrm{T}_{\Sigma}(X)^{\wp}_{s}}\\
{K}&\longmapsto&
{\bigcup_{P\in K}\left(\left(\sbs{x}{L_{x}}_{x\in X_{t}}\right)_{t\in S}\right)^{\sharp}_{s}(P)}
\end{array}
\right.
$$

Thus, $\left(\left(\sbs{x}{L_{x}}_{x\in X_{t}}\right)_{t\in S}\right)^{\sharp\mathsf{p}}$ is the canonical extension of the underlying $S$-sorted mapping of the $\Sigma$-homomorphism $\left(\left(\sbs{x}{L_{x}}_{x\in X_{t}}\right)_{t\in S}\right)^{\sharp}$ from $\mathbf{T}_{\Sigma}(X)$ to $\mathbf{T}_{\Sigma}(X)^{\wp}$.

Let $u$ be a sort in $S$, $z\in X_{u}$, and $L\in \mathrm{T}_{\Sigma}(X)^{\wp}_{u}$. Then we will denote by $\sbs{z}{L}^{\sharp\mathsf{p}}$ the canonical extension of the underlying mapping of the $\Sigma$-homomorphism $\sbs{z}{L}^{\sharp}$ from $\mathbf{T}_{\Sigma}(X)$ to $\mathbf{T}_{\Sigma}(X)^{\wp}$.
\end{defi}

The first lemma describes the homomorphisms of Definition~\ref{DSubs} in
terms of the substitution of families.

\begin{lem}\label{LSubFam}
Let $u,s$ be sorts in $S$, $z\in X_{u}$, $L\subseteq \mathrm{T}_{\Sigma}(X)_{u}$, and
$P\in\mathrm{T}_{\Sigma}(X)_{s}$. Then
$$
\sbs{z}{L}^{\sharp}_{s}(P)=\left\{
\sbs{z}{(Q^{z}_{\alpha})_{\alpha\in\bb{P}_{z}}}(P)
\,\middle|\,
(Q^{z}_{\alpha})_{\alpha\in\bb{P}_{z}}\in L^{\bb{P}_{z}}
\right\}.
$$
Consequently, for every $K\subseteq\mathrm{T}_{\Sigma}(X)_{s}$, the set
$\sbs{z}{L}^{\sharp\mathsf{p}}_{s}(K)$ is equal to
$$
\textstyle
\sbs{z}{L}^{\sharp\mathsf{p}}_{s}(K)=\bigcup_{P\in K}\sbs{z}{L}^{\sharp}_{s}(P).
$$
\end{lem}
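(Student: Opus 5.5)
We prove the first equality by Artinian induction on $(\coprod\mathrm{T}_{\Sigma}(X),\leq_{\mathbf{T}_{\Sigma}(X)})$, in the style of the proofs of Lemmas~\ref{LSubVar} and~\ref{LSubSubt}. For fixed $u$, $z$ and $L$, let $\mathcal{S}_{s}(P)$ denote the right-hand side, i.e., the set of all $\sbs{z}{(Q^{z}_{\alpha})_{\alpha\in\bb{P}_{z}}}(P)$ with $(Q^{z}_{\alpha})_{\alpha\in\bb{P}_{z}}\in L^{\bb{P}_{z}}$. The claim is $\sbs{z}{L}^{\sharp}_{s}(P)=\mathcal{S}_{s}(P)$ for every $(P,s)$.

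\textsf{Base step.} Let $(P,s)$ be minimal. By Proposition~\ref{PArtOrd}, $P$ is a variable or a constant, giving three cases.
\begin{enumerate}
\item $P=z$, so $s=u$. Then $\sbs{z}{L}^{\sharp}_{u}(z)=\sbs{z}{L}_{u}(z)=L$. Also $\bb{z}_{z}=1$ and $\sbs{z}{(Q)}(z)=Q$, so $\mathcal{S}_{u}(z)=L$.
\item $P=x$ with $(x,s)\neq(z,u)$. The left side is $\{x\}$ by Definition~\ref{DSubs}. Since $\bb{x}_{z}=0$, $L^{0}$ contains only the empty family, which leaves $x$ unchanged, so the right side is $\{x\}$.
\item $P=\sigma^{\mathbf{T}_{\Sigma}(X)}$ with $\sigma\in\Sigma_{\lambda,s}$. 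The left side is $\{\sigma^{\mathbf{T}_{\Sigma}(X)}\}$ by the recursive description following Definition~\ref{DSubs}. Again $\bb{P}_{z}=0$, so the right side is the same singleton.
\end{enumerate}
In cases~(2) and~(3) the equality holds even when $L=\varnothing$, since $L^{0}$ is nonempty.

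\textsf{Inductive step.} Let $P=\sigma^{\mathbf{T}_{\Sigma}(X)}((P_{j})_{j\in\bb{\mathbf{s}}})$ be non-minimal. The recursive description of the homomorphism gives
\[
\sbs{z}{L}^{\sharp}_{s}(P)=\Bigl\{\sigma^{\mathbf{T}_{\Sigma}(X)}((R_{j})_{j\in\bb{\mathbf{s}}})\ \Big|\ (R_{j})_{j\in\bb{\mathbf{s}}}\in\textstyle\prod_{j\in\bb{\mathbf{s}}}\sbs{z}{L}^{\sharp}_{s_{j}}(P_{j})\Bigr\}.
\]
By the induction hypothesis, each $R_{j}$ ranges over $\mathcal{S}_{s_{j}}(P_{j})$, that is, $R_{j}=\sbs{z}{(Q^{j,z}_{\beta})_{\beta}}(P_{j})$ with $(Q^{j,z}_{\beta})_{\beta}\in L^{\bb{P_{j}}_{z}}$.

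Now apply Remark~\ref{RDistr}.
\begin{itemize}
\item By Equation~\eqref{EqDistr}, each such $\sigma^{\mathbf{T}_{\Sigma}(X)}((R_{j})_{j})$ equals $\sbs{z}{(Q^{z}_{\alpha})_{\alpha}}(P)$, where $(Q^{z}_{\alpha})_{\alpha}$ is the concatenation of the families $(Q^{j,z}_{\beta})_{\beta}$.
\item Since $\bb{P}_{z}=\sum_{j}\bb{P_{j}}_{z}$, a family lies in $L^{\bb{P}_{z}}$ exactly when each block of the partition lies in $L^{\bb{P_{j}}_{z}}$.
\item By the converse part of Remark~\ref{RDistr}, every family in $L^{\bb{P}_{z}}$ arises from a unique choice of blocks.
\end{itemize}
These facts give both inclusions, so $\sbs{z}{L}^{\sharp}_{s}(P)=\mathcal{S}_{s}(P)$.

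The second assertion is the definition of $\sbs{z}{L}^{\sharp\mathsf{p}}$ as the canonical extension of $\sbs{z}{L}^{\sharp}$. It can be combined with the first equality to express $\sbs{z}{L}^{\sharp\mathsf{p}}_{s}(K)$ as a union of substitution sets.

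The main obstacle is the bookkeeping in the inductive step: the correspondence between families indexed by $\bb{P}_{z}$ and tuples of families indexed by the $\bb{P_{j}}_{z}$ must be a bijection that preserves membership in $L$. Remark~\ref{RDistr} supplies exactly this, so the step reduces to citing it carefully.
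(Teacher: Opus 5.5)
Your proposal is correct and follows essentially the same route as the paper: Artinian induction with the base case $P=z$ separated from the remaining minimal terms, and an inductive step that combines the homomorphism property of $\sbs{z}{L}^{\sharp}$ into $\mathbf{T}_{\Sigma}(X)^{\wp}$, the induction hypothesis, and the family-partition bijection of Remark~\ref{RDistr}. The second assertion is, as you say, just the definition of the canonical extension $\sbs{z}{L}^{\sharp\mathsf{p}}$.
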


\begin{proof}
The proof is done by Artinian induction on $(\coprod\mathrm{T}_{\Sigma}(X),\leq_{\mathbf{T}_{\Sigma}(X)})$.

\textsf{Base step of the Artinian induction.}

Let $(P,s)$ be a minimal element in $(\coprod\mathrm{T}_{\Sigma}(X),\leq_{\mathbf{T}_{\Sigma}(X)})$. Consider the case $P=z$, so that $s=u$, then, by Definition~\ref{DSubs},
$\sbs{z}{L}^{\sharp}_{u}(z)=L$; on the other hand, $\bb{z}_{z}=1$ and
$\sbs{z}{(Q^{z}_{0})}(z)=Q^{z}_{0}$, so that, letting $Q^{z}_{0}$ range over $L$, the
right-hand side is also $L$. 

If the previous case does not apply, then $\bb{P}_{z}=0$ and, by
Definition~\ref{DSubs}, $\sbs{z}{L}^{\sharp}_{s}(P)=\{P\}$; on the other
hand, the only family is the empty one and its substitution leaves $P$
unchanged, so the right-hand side is also $\{P\}$.

\textsf{Inductive step of the Artinian induction.}

Let $(P,s)$ be a non-minimal element in $(\coprod\mathrm{T}_{\Sigma}(X),\leq_{\mathbf{T}_{\Sigma}(X)})$. Then, by Proposition~\ref{PArtOrd}, there exists a unique $(\mathbf{s},s)\in (S^{\star}-\{\lambda\})\times S$, a unique operation symbol $\sigma\in\Sigma_{\mathbf{s},s}$ and a unique family of terms $(P_{j})_{j\in\bb{\mathbf{s}}}\in\mathrm{T}_{\Sigma}(X)_{\mathbf{s}}$ for which $P=\sigma^{\mathbf{T}_{\Sigma}(X)}((P_{j})_{j\in\bb{\mathbf{s}}})$. Assume that the statement holds for every immediate subterm of $P$, that is, for every $j\in\bb{\mathbf{s}}$, the lemma holds for $P_{j}$.

Since $\sbs{z}{L}^{\sharp}$ is a $\Sigma$-homomorphism into
$\mathbf{T}_{\Sigma}(X)^{\wp}$, by Proposition~\ref{PSubsetAlg} we have
$$
\sbs{z}{L}^{\sharp}_{s}(P)
=\left\{
\sigma^{\mathbf{T}_{\Sigma}(X)}((R_{j})_{j\in\bb{\mathbf{s}}})
\,\middle|\,
(R_{j})_{j\in\bb{\mathbf{s}}}\in
\textstyle\prod_{j\in\bb{\mathbf{s}}}\sbs{z}{L}^{\sharp}_{s_{j}}(P_{j})
\right\},
$$
which, by the induction hypothesis applied to each $P_{j}$ and
Remark~\ref{RDistr}, is precisely the right-hand side of the statement. 

The
last assertion follows from the definition of
$\sbs{z}{L}^{\sharp\mathsf{p}}$.
\end{proof}

\begin{rem}\label{RMon}
From Lemma~\ref{LSubFam} it is evident that, 
for every $u,s\in S$ and every $z\in X_{u}$, 
the assignment
from 
$\mathrm{T}_{\Sigma}(X)^{\wp}_{u} \times \mathrm{T}_{\Sigma}(X)^{\wp}_{s}$
to 
$\mathrm{T}_{\Sigma}(X)^{\wp}_{s}$, that sends a pair 
$(L,K)$ to $\sbs{z}{L}^{\sharp\mathsf{p}}_{s}(K)$, is monotone in both
of its arguments, i.e., if $L\subseteq L'$ and $K\subseteq K'$, then
$\sbs{z}{L}^{\sharp\mathsf{p}}_{s}(K)\subseteq
\sbs{z}{L'}^{\sharp\mathsf{p}}_{s}(K')$.
\end{rem}

\begin{lem}\label{LSubComp}
Let $u,s$ be sorts in $S$, $z\in X_{u}$, $L,L'\subseteq \mathrm{T}_{\Sigma}(X)_{u}$, and
$K\subseteq\mathrm{T}_{\Sigma}(X)_{s}$. Then, the following inclusion holds
$$
\sbs{z}{\sbs{z}{L}^{\sharp\mathsf{p}}_{u}(L')}^{\sharp\mathsf{p}}_{s}(K)
\subseteq
\sbs{z}{L}^{\sharp\mathsf{p}}_{s}\left(
\sbs{z}{L'}^{\sharp\mathsf{p}}_{s}(K)\right).
$$
\end{lem}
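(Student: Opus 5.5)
By complete additivity of $\sbs{z}{\cdot}^{\sharp\mathsf{p}}_{s}$ in its argument (Lemma~\ref{LSubFam}), both sides are unions over $P\in K$. It therefore suffices to prove, for every sort $s$ and every single term $P\in\mathrm{T}_{\Sigma}(X)_{s}$, the pointwise inclusion
$$
\sbs{z}{\sbs{z}{L}^{\sharp\mathsf{p}}_{u}(L')}^{\sharp}_{s}(P)
\subseteq
\sbs{z}{L}^{\sharp\mathsf{p}}_{s}\left(\sbs{z}{L'}^{\sharp}_{s}(P)\right).
$$
Indeed, taking the union over $P\in K$ on the left gives the left-hand side of the lemma. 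On the right, the union of $\sbs{z}{L'}^{\sharp}_{s}(P)$ over $P\in K$ is $\sbs{z}{L'}^{\sharp\mathsf{p}}_{s}(K)$, and complete additivity lets us move $\sbs{z}{L}^{\sharp\mathsf{p}}_{s}$ outside the union.

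I will prove the pointwise inclusion by Artinian induction on $(\coprod\mathrm{T}_{\Sigma}(X),\leq_{\mathbf{T}_{\Sigma}(X)})$, following the pattern of Lemma~\ref{LSubFam}.

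\emph{Base step.} There are two cases. If $P=z$, then $s=u$ and the left-hand side is $\sbs{z}{L}^{\sharp\mathsf{p}}_{u}(L')$. On the right, $\sbs{z}{L'}^{\sharp}_{u}(z)=L'$, so the right-hand side is $\sbs{z}{L}^{\sharp\mathsf{p}}_{u}(L')$ as well, and equality holds. If $P$ is minimal but $P\neq z$, then $\bb{P}_{z}=0$. Each of the three homomorphisms sends $P$ to $\{P\}$, and since $\sbs{z}{L}^{\sharp\mathsf{p}}_{s}(\{P\})=\{P\}$, both sides equal $\{P\}$.

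\emph{Inductive step.} Write $P=\sigma^{\mathbf{T}_{\Sigma}(X)}((P_{j})_{j\in\bb{\mathbf{s}}})$. Since $\sbs{z}{M}^{\sharp}$ is a homomorphism into the power algebra for $M=\sbs{z}{L}^{\sharp\mathsf{p}}_{u}(L')$, the left-hand side is $\sigma^{\mathbf{T}_{\Sigma}(X)^{\wp}}$ applied to the sets $\sbs{z}{M}^{\sharp}_{s_{j}}(P_{j})$. By the induction hypothesis, each of these sets is contained in $\sbs{z}{L}^{\sharp\mathsf{p}}_{s_{j}}(\sbs{z}{L'}^{\sharp}_{s_{j}}(P_{j}))$. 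By monotonicity of $\sigma^{\mathbf{T}_{\Sigma}(X)^{\wp}}$, the left-hand side is therefore contained in the set of terms $\sigma^{\mathbf{T}_{\Sigma}(X)}((R_{j})_{j})$ where, for each $j$, $R_{j}\in\sbs{z}{L}^{\sharp}_{s_{j}}(P'_{j})$ for some $P'_{j}\in\sbs{z}{L'}^{\sharp}_{s_{j}}(P_{j})$.

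Put $P'=\sigma^{\mathbf{T}_{\Sigma}(X)}((P'_{j})_{j})$. Then $P'\in\sbs{z}{L'}^{\sharp}_{s}(P)$ by the homomorphism property of $\sbs{z}{L'}^{\sharp}$. Moreover $\sigma^{\mathbf{T}_{\Sigma}(X)}((R_{j})_{j})\in\sbs{z}{L}^{\sharp}_{s}(P')$ by the homomorphism property of $\sbs{z}{L}^{\sharp}$. Hence every such term lies in the right-hand side, which completes the inductive step.

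The main obstacle is the fact that $\sbs{z}{L}^{\sharp\mathsf{p}}$ is not itself a $\Sigma$-homomorphism on the power algebra. In general it satisfies only
$$
\sbs{z}{L}^{\sharp\mathsf{p}}_{s}\left(\sigma^{\mathbf{T}_{\Sigma}(X)^{\wp}}((K_{j})_{j})\right)
\subseteq
\sigma^{\mathbf{T}_{\Sigma}(X)^{\wp}}\left(\left(\sbs{z}{L}^{\sharp\mathsf{p}}_{s_{j}}(K_{j})\right)_{j}\right),
$$
and that containment points the wrong way for this argument. The element-wise construction of $P'$ avoids this: it uses the reverse containment, which does hold, because choosing each $R_{j}$ independently for each argument is exactly what $\sbs{z}{L}^{\sharp}_{s}(P')$ permits.

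The inclusion in the lemma is genuinely strict in general. On the left, every occurrence of $z$ in $P$ is replaced by an element of $M$, and each such element is the image under $\sbs{z}{L}^{\sharp}$ of some element of $L'$; on the right, the occurrences of $z$ that survive in $P'$, including those coming from $P$ itself, may also receive elements of $L$ directly. This is why the proof only needs to establish containment, and no reverse argument is required.
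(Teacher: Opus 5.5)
Your proof is correct and follows the paper's route: reduce to single terms by complete additivity, run an Artinian induction with the same base cases, and in the inductive step combine the homomorphism property, the induction hypothesis and the monotonicity of $\sigma^{\mathbf{T}_{\Sigma}(X)^{\wp}}$; your elementwise choice of $P'$ simply checks by hand the half you need of the identity $\sbs{z}{L}^{\sharp\mathsf{p}}_{s}\bigl(\sigma^{\mathbf{T}_{\Sigma}(X)^{\wp}}((K_{j})_{j})\bigr)=\sigma^{\mathbf{T}_{\Sigma}(X)^{\wp}}\bigl((\sbs{z}{L}^{\sharp\mathsf{p}}_{s_{j}}(K_{j}))_{j}\bigr)$, which the paper uses directly. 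Your closing remarks are wrong, though: the two containments you yourself state combine into that identity, so $\sbs{z}{L}^{\sharp\mathsf{p}}$ is a $\Sigma$-homomorphism of $\mathbf{T}_{\Sigma}(X)^{\wp}$, and the lemma's inclusion is in fact an equality, because every occurrence of $z$ in $P$ is replaced by a member of $L'$, so the occurrences of $z$ in $P'$ are exactly those inside these members (a member equal to $z$ is matched on the left by $L=\sbs{z}{L}^{\sharp}_{u}(z)\subseteq\sbs{z}{L}^{\sharp\mathsf{p}}_{u}(L')$).
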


\begin{proof}
Both sides are completely additive in $K$, by the definition of the
operators $(\cdot)^{\sharp\mathsf{p}}$ (see Proposition~\ref{PSubFunct} and
Definition~\ref{DSubs}); it therefore suffices to prove, for every $s\in S$ and every
$P\in\mathrm{T}_{\Sigma}(X)_{s}$, the inclusion
$$
\sbs{z}{\sbs{z}{L}^{\sharp\mathsf{p}}_{u}(L')}^{\sharp}_{s}(P)
\subseteq
\sbs{z}{L}^{\sharp\mathsf{p}}_{s}\left(
\sbs{z}{L'}^{\sharp}_{s}(P)\right).
$$

The proof is done by Artinian induction on $(\coprod\mathrm{T}_{\Sigma}(X),\leq_{\mathbf{T}_{\Sigma}(X)})$.

\textsf{Base step of the Artinian induction.}

Let $(P,s)$ be a minimal element in $(\coprod\mathrm{T}_{\Sigma}(X),\leq_{\mathbf{T}_{\Sigma}(X)})$. Consider the case $P=z$, so that $s=u$, then, by Definition~\ref{DSubs}, the left-hand
side is $\sbs{z}{L}^{\sharp\mathsf{p}}_{u}(L')$ and the right-hand side is
$\sbs{z}{L}^{\sharp\mathsf{p}}_{s}(\sbs{z}{L'}^{\sharp}_{u}(z))
=\sbs{z}{L}^{\sharp\mathsf{p}}_{s}(L')$, so equality holds.

If the previous case does not apply, then
both sides are equal to $\{P\}$.

\textsf{Inductive step of the Artinian induction.}

Let $(P,s)$ be a non-minimal element in $(\coprod\mathrm{T}_{\Sigma}(X),\leq_{\mathbf{T}_{\Sigma}(X)})$. Then, by Proposition~\ref{PArtOrd}, there exists a unique $(\mathbf{s},s)\in (S^{\star}-\{\lambda\})\times S$, a unique operation symbol $\sigma\in\Sigma_{\mathbf{s},s}$ and a unique family of terms $(P_{j})_{j\in\bb{\mathbf{s}}}\in\mathrm{T}_{\Sigma}(X)_{\mathbf{s}}$ for which $P=\sigma^{\mathbf{T}_{\Sigma}(X)}((P_{j})_{j\in\bb{\mathbf{s}}})$. Assume that the statement holds for every immediate subterm of $P$, that is, for every $j\in\bb{\mathbf{s}}$, the lemma holds for $P_{j}$.

We
first note that, by Proposition~\ref{PSubsetAlg}, the structural operation
$\sigma^{\mathbf{T}_{\Sigma}(X)^{\wp}}$ is monotone and completely
distributes over unions in each of its arguments. Since
$\sbs{z}{L'}^{\sharp}$ and $\sbs{z}{L}^{\sharp}$ are
$\Sigma$-homomorphisms into $\mathbf{T}_{\Sigma}(X)^{\wp}$ and
$\sbs{z}{L}^{\sharp\mathsf{p}}$ is the completely additive extension of
$\sbs{z}{L}^{\sharp}$, we obtain
\allowdisplaybreaks
\begin{align*}
\sbs{z}{L}^{\sharp\mathsf{p}}_{s}\left(
\sbs{z}{L'}^{\sharp}_{s}(P)\right)
&=\sbs{z}{L}^{\sharp\mathsf{p}}_{s}\left(
\sigma^{\mathbf{T}_{\Sigma}(X)^{\wp}}\left(\left(
\sbs{z}{L'}^{\sharp}_{s_{j}}(P_{j})
\right)_{j\in\bb{\mathbf{s}}}\right)\right)
\\&=\sigma^{\mathbf{T}_{\Sigma}(X)^{\wp}}\left(\left(
\sbs{z}{L}^{\sharp\mathsf{p}}_{s_{j}}\left(
\sbs{z}{L'}^{\sharp}_{s_{j}}(P_{j})\right)
\right)_{j\in\bb{\mathbf{s}}}\right).
\end{align*}
The statement follows from the induction hypothesis and the monotonicity of
$\sigma^{\mathbf{T}_{\Sigma}(X)^{\wp}}$.
\end{proof}

\subsection{
\texorpdfstring
{Iterations}
{Iterations}
}\label{SSII}

We next introduce the notion of iteration of a language with respect to a variable and prove a technical lemma that will be of interest for the proof of the main result of this paper.

\begin{defi}\label{DIter}
Let $s$ be a sort in $S$, $z\in X_{s}$, and $L\subseteq \mathrm{T}_{\Sigma}(X)_{s}$. The \emph{$z$-iteration of $L$} is the language
$$
\textstyle
L^{\star\, z} = \bigcup_{i\in\mathbb{N}}L^{i\,z},
$$
where $(L^{i\,z})_{i\in \mathbb{N}}$ is the family of subsets of $\mathrm{T}_{\Sigma}(X)_{s}$ defined recursively as follows:
$$
L^{0\,z}=\{z\},\, \text{and, for } i\in \mathbb{N},\, L^{i+1\,z} = L^{i\,z}\cup \sbs{z}{L^{i\,z}}^{\sharp\mathsf{p}}_{s}(L).
$$
\end{defi}

\begin{rem}\label{RIter}
The language $L^{\star\, z}$ is obtained as follows. First include $z$. New members of $L^{\star\, z}$ are obtained by substituting in some term of $L$, for every occurrence of $z$, some term already known to be in $L^{\star\, z}$. Let us note that $L^{1\,z} = L\cup \{z\}$ and that $(L^{i\,z})_{i\in \mathbb{N}}$ is an ascending chain of subsets of $\mathrm{T}_{\Sigma}(X)_{s}$, i.e., that, for every $i\in \mathbb{N}$, $L^{i\,z}\subseteq L^{i+1\,z}$.
\end{rem}

We next prove a technical lemma about $z$-iterations.

\begin{lem}\label{LItAbs}
Let $s$ be a sort in $S$, $z\in X_{s}$, and $L\subseteq \mathrm{T}_{\Sigma}(X)_{s}$. Then, 
\[\sbs{z}{L^{\star\, z}}^{\sharp\mathsf{p}}_{s}(L)\subseteq L^{\star\, z}.\]
\end{lem}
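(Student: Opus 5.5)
Reduce to single terms, then use finiteness of substituted families. Since $\sbs{z}{L^{\star\, z}}^{\sharp\mathsf{p}}_{s}$ is completely additive (Lemma~\ref{LSubFam}), it suffices to show that, for every $P\in L$,
\[
\sbs{z}{L^{\star\, z}}^{\sharp}_{s}(P)\subseteq L^{\star\, z}.
\]
By Lemma~\ref{LSubFam}, an element of the left-hand side has the form $\sbs{z}{(Q^{z}_{\alpha})_{\alpha\in\bb{P}_{z}}}(P)$, where each $Q^{z}_{\alpha}\in L^{\star\, z}=\bigcup_{i\in\mathbb{N}}L^{i\,z}$.

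The key point is that $\bb{P}_{z}$ is a natural number, so the family is finite. For each $\alpha$ choose $i_{\alpha}$ with $Q^{z}_{\alpha}\in L^{i_{\alpha}\,z}$, and put
\[
n=\max\{i_{\alpha}\mid\alpha\in\bb{P}_{z}\},
\]
taking $n=0$ if the family is empty. Since $(L^{i\,z})_{i\in\mathbb{N}}$ is an ascending chain (Remark~\ref{RIter}), every $Q^{z}_{\alpha}$ lies in $L^{n\,z}$. So the family belongs to $(L^{n\,z})^{\bb{P}_{z}}$, and Lemma~\ref{LSubFam} gives
\[
\sbs{z}{(Q^{z}_{\alpha})_{\alpha\in\bb{P}_{z}}}(P)\in\sbs{z}{L^{n\,z}}^{\sharp}_{s}(P)\subseteq\sbs{z}{L^{n\,z}}^{\sharp\mathsf{p}}_{s}(L),
\]
because $P\in L$. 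By Definition~\ref{DIter}, this last set is contained in $L^{n+1\,z}\subseteq L^{\star\, z}$, which completes the argument.

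The only delicate step is the passage from the union $L^{\star\, z}$ to a single stage $L^{n\,z}$. Monotonicity (Remark~\ref{RMon}) alone does not justify it, because the substitution operator is not evidently continuous over unions in its first argument. That gap is closed by the finiteness of $\bb{P}_{z}$ together with the explicit description of Lemma~\ref{LSubFam}. The empty-family case ($z$ not occurring in $P$) gives just $P\in L\subseteq L^{1\,z}$, which is consistent with the choice $n=0$.
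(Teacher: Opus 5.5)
Your proof is correct and follows the paper's argument essentially step for step. You use Lemma~\ref{LSubFam} to write an element as a substitution into some $P\in L$ of a finite family drawn from $L^{\star\, z}$, the ascending chain of Remark~\ref{RIter} to place the whole family in a single stage $L^{n\,z}$, and Definition~\ref{DIter} to conclude membership in $L^{n+1\,z}\subseteq L^{\star\, z}$.
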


\begin{proof}
Let $R\in\sbs{z}{L^{\star\, z}}^{\sharp\mathsf{p}}_{s}(L)$. By
Lemma~\ref{LSubFam}, there exist a $P\in L$ and a family
$(Q^{z}_{\alpha})_{\alpha\in\bb{P}_{z}}\in(L^{\star\, z})^{\bb{P}_{z}}$ such that
$R=\sbs{z}{(Q^{z}_{\alpha})_{\alpha\in\bb{P}_{z}}}(P)$. Since the family is
finite and, as noted in Remark~\ref{RIter},
$(L^{i\,z})_{i\in\mathbb{N}}$ is an ascending chain whose union is
$L^{\star\, z}$, there exists an $i\in\mathbb{N}$ such that, for every
$\alpha\in\bb{P}_{z}$, $Q^{z}_{\alpha}\in L^{i\,z}$. Hence, by
Lemma~\ref{LSubFam} again and Definition~\ref{DIter},
\[
R\in\sbs{z}{L^{i\,z}}^{\sharp}_{s}(P)
\subseteq\sbs{z}{L^{i\,z}}^{\sharp\mathsf{p}}_{s}(L)
\subseteq L^{i+1\,z}\subseteq L^{\star\, z}.
\qedhere
\]
\end{proof}

\subsection{
\texorpdfstring
{Recognizable subsets of $\mathbf{T}_{\Sigma}(X)$}
{Recognizable subsets of the free many-sorted algebra}
}
In this subsection, we state some  recognizability results relative to a free $\Sigma$-algebra on an $S$-sorted set $X$. Their proofs appear in~\cite{CVCL20}.

\begin{asm}
In this subsection, we assume that $S$ is a finite set of sorts, that $\Sigma$ is a finite $S$-sorted signature, and that $X$ is a finite $S$-sorted set.
\end{asm}

The singletons consisting, respectively, of a variable, a constant, and the action of an operator symbol on a family of variables are $s$-recognizable for a specific sort $s\in S$.

\begin{propC}
[{\cite[Prop.~3.1, 3.2, 3.3]{CVCL20}}]
\label{PRecVar}
Let $\Sigma$ be an $S$-sorted signature, $X$ an $S$-sorted set, and $s\in S$, then the following properties hold:
\begin{enumerate}
 \item For every $x\in X_{s}$, the language $\{x\}\subseteq \mathrm{T}_{\Sigma}(X)_{s}$ is $s$-recognizable.
 \item For every $\sigma\in\Sigma_{\lambda,s}$, the language $\{\sigma^{\mathbf{T}_{\Sigma}(X)}\}\subseteq \mathrm{T}_{\Sigma}(X)_{s}$ is $s$-recognizable.
 \item For every $\mathbf{s}\in S^{\star}-\{\lambda\}$, $\sigma\in\Sigma_{\mathbf{s},s}$, and $(x_{j})_{j\in \lvert{\mathbf{s}}\rvert}\in X_{\mathbf{s}}$, the language $\{\sigma^{\mathbf{T}_{\Sigma}(X)}((x_{j})_{j\in \lvert{\mathbf{s}}\rvert})\}\subseteq\mathrm{T}_{\Sigma}(X)_{s}$ is $s$-recognizable.
\end{enumerate}
\end{propC}

Next proposition states that recognizable languages are closed under substitution.

\begin{propC}[{\cite[Prop.~3.19]{CVCL20}}]
\label{PRecSubs}
Let $s$ be a sort in $S$, $K\in\mathrm{Rec}_{s}(\mathbf{T}_{\Sigma}(X))$, and $\left(\sbs{x}{L_{x}}_{x\in X_{t}}\right)_{t\in S}$ an $S$-sorted mapping from $X$ to $(\mathrm{Rec}_{t}(\mathbf{T}_{\Sigma}(X)))_{t\in S}$. Then
$$
\left(\left(\sbs{x}{L_{x}}_{x\in X_{t}}\right)_{t\in S}\right)^{\sharp\mathsf{p}}_{s}(K)\in\mathrm{Rec}_{s}(\mathbf{T}_{\Sigma}(X)).
$$
\end{propC}

\begin{cor}
\label{CRecSubs}
Let $s,t\in S$, $z\in X_{t}$, $L\in \mathrm{Rec}_{t}(\mathbf{T}_{\Sigma}(X))$, and $K\in\mathrm{Rec}_{s}(\mathbf{T}_{\Sigma}(X))$. Then 
$$\sbs{z}{L}^{\sharp\mathsf{p}}_{s}(K)\in \mathrm{Rec}_{s}(\mathbf{T}_{\Sigma}(X)).$$
\end{cor}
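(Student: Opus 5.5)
This is a direct specialization of Proposition~\ref{PRecSubs}: we only need to encode the single-variable substitution $\sbs{z}{L}$ as a global substitution whose components are all recognizable. Take the $S$-sorted mapping $\sbs{z}{L}=(\sbs{z}{L}_{t'})_{t'\in S}$ of Definition~\ref{DSubs}. Its values are $L$ at $(z,t)$ and $\{x\}$ at every other $(x,t')\in\coprod X$. We must check that it is an $S$-sorted mapping from $X$ to $(\mathrm{Rec}_{t'}(\mathbf{T}_{\Sigma}(X)))_{t'\in S}$, i.e., that each value is recognizable of the right sort. The value $L$ at $z$ lies in $\mathrm{Rec}_{t}(\mathbf{T}_{\Sigma}(X))$ by hypothesis. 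Each singleton $\{x\}$, for $x\in X_{t'}$, lies in $\mathrm{Rec}_{t'}(\mathbf{T}_{\Sigma}(X))$ by Proposition~\ref{PRecVar}(1), which applies because $S$, $\Sigma$ and $X$ are finite under the standing assumption of this subsection.

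Next we identify the two operators. Both $\sbs{z}{L}^{\sharp}$ and $\left(\left(\sbs{x}{L_{x}}_{x\in X_{t'}}\right)_{t'\in S}\right)^{\sharp}$, with $L_{x}$ given by $\sbs{z}{L}_{t'}(x)$, are the unique homomorphism $\mathbf{T}_{\Sigma}(X)\longrightarrow\mathbf{T}_{\Sigma}(X)^{\wp}$ extending the same $S$-sorted mapping $X\longrightarrow\mathrm{T}_{\Sigma}(X)^{\wp}$. By the uniqueness in Proposition~\ref{PPropUniv} they coincide. Their completely additive extensions $(\cdot)^{\sharp\mathsf{p}}$ are defined by the same union formula, so they coincide as well; in particular
$$
\sbs{z}{L}^{\sharp\mathsf{p}}_{s}(K)=\left(\left(\sbs{x}{L_{x}}_{x\in X_{t'}}\right)_{t'\in S}\right)^{\sharp\mathsf{p}}_{s}(K).
$$

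Finally, apply Proposition~\ref{PRecSubs} to $K\in\mathrm{Rec}_{s}(\mathbf{T}_{\Sigma}(X))$ and this recognizable-valued mapping. It gives that the right-hand side lies in $\mathrm{Rec}_{s}(\mathbf{T}_{\Sigma}(X))$, and hence so does $\sbs{z}{L}^{\sharp\mathsf{p}}_{s}(K)$. None of these steps is a real obstacle. The only point needing care is the sort bookkeeping: the singleton $\{x\}$ must be recognizable at the sort $t'$ of $x$, not at $t$ or at $s$. Proposition~\ref{PRecVar}(1) provides exactly this, sort by sort.
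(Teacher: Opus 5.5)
Your proposal is correct and is essentially the argument the paper intends; the paper gives no separate proof and simply presents the statement as a specialization of Proposition~\ref{PRecSubs}. That specialization is exactly yours: the $S$-sorted mapping $\sbs{z}{L}$ takes the value $L$ at $z$ and the singletons $\{x\}$ elsewhere, and these singletons are recognizable at the sort of $x$ by Proposition~\ref{PRecVar}(1).
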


\begin{cor}
\label{CRecOp}
Let $(\mathbf{s},s)$ in $S^{\star}\times S$, $\sigma\in\Sigma_{\mathbf{s},s}$, and $(L_{j})_{j\in \lvert{\mathbf{s}}\rvert}\in \prod_{j\in \lvert{\mathbf{s}}\rvert}\mathrm{Rec}_{s_{j}}(\mathbf{T}_{\Sigma}(X))$. Then 
$$\sigma^{\mathbf{T}_{\Sigma}(X)^{\wp}}((L_{j})_{j\in \lvert{\mathbf{s}}\rvert})\in \mathrm{Rec}_{s}(\mathbf{T}_{\Sigma}(X)).$$
\end{cor}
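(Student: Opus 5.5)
The plan is to reduce Corollary~\ref{CRecOp} to Corollary~\ref{CRecSubs} together with Proposition~\ref{PRecVar}. First dispose of the case $\mathbf{s}=\lambda$. Then $\sigma^{\mathbf{T}_{\Sigma}(X)^{\wp}}$ applied to the empty family is $\{\sigma^{\mathbf{T}_{\Sigma}(X)}\}$, which is $s$-recognizable by Proposition~\ref{PRecVar}(2).

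For $\mathbf{s}\neq\lambda$, I would express the language as an iterated substitution into the singleton $\{\sigma^{\mathbf{T}_{\Sigma}(X)}((x_{j})_{j\in\bb{\mathbf{s}}})\}$, with one fresh variable $x_{j}$ of sort $s_{j}$ per argument.

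\textbf{Fresh variables.} The finite $X$ may not contain $\bb{\mathbf{s}}$ pairwise distinct variables of the appropriate sorts, and it certainly may not contain ones that fail to occur in the $L_{j}$. So I would pass to an enlarged finite $S$-sorted set $Y=X\cup Z$. Here $Z$ consists of new pairwise distinct variables $z_{j}\in Z_{s_{j}}$ for $j\in\bb{\mathbf{s}}$, disjoint from $X$.

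\textbf{Transfer to $\mathbf{T}_{\Sigma}(Y)$.} Each $L_{j}\in\mathrm{Rec}_{s_{j}}(\mathbf{T}_{\Sigma}(X))$ must be shown to lie in $\mathrm{Rec}_{s_{j}}(\mathbf{T}_{\Sigma}(Y))$. Take a homomorphism $f\colon\mathbf{T}_{\Sigma}(X)\to\mathbf{B}$ with $\mathbf{B}$ finite witnessing recognizability. Form the finite algebra $\mathbf{B}'$ obtained from $\mathbf{B}$ by adjoining an absorbing element $\bot_{t}$ at each sort; every operation with an argument equal to some $\bot$ returns $\bot$. Extend $f$ by sending the new variables to $\bot$. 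A term then maps into $B$ iff it contains no new variable, in which case its value agrees with $f$. Hence $L_{j}$ is the preimage of the same set $M$ in $\mathbf{T}_{\Sigma}(Y)$.

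\textbf{Substitution in $\mathbf{T}_{\Sigma}(Y)$.} Set $K_{0}=\{\sigma^{\mathbf{T}_{\Sigma}(Y)}((z_{j})_{j\in\bb{\mathbf{s}}})\}$, which is $s$-recognizable by Proposition~\ref{PRecVar}(3) applied with $Y$. Then define $K_{j+1}=\sbs{z_{j}}{L_{j}}^{\sharp\mathsf{p}}_{s}(K_{j})$. Each step stays recognizable by Corollary~\ref{CRecSubs}.

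By Lemma~\ref{LSubFam}, each $z_{j}$ occurs exactly once in the term of $K_{0}$, and the $L_{j}$ contain no $z_{k}$. So the successive substitutions do not interfere, and by Remark~\ref{RDistr} we get
\[
K_{\bb{\mathbf{s}}}=\sigma^{\mathbf{T}_{\Sigma}(Y)^{\wp}}((L_{j})_{j\in\bb{\mathbf{s}}}).
\]
This set is contained in $\mathrm{T}_{\Sigma}(X)_{s}$ and coincides with $\sigma^{\mathbf{T}_{\Sigma}(X)^{\wp}}((L_{j})_{j})$.

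\textbf{Pulling back.} Finally, apply Proposition~\ref{PsRecSub} to the subalgebra inclusion $\mathbf{T}_{\Sigma}(X)\subseteq\mathbf{T}_{\Sigma}(Y)$ (Remark~\ref{RRest}). This returns the language to $\mathrm{Rec}_{s}(\mathbf{T}_{\Sigma}(X))$.

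I expect the main obstacle to be the transfer of recognizability from $\mathbf{T}_{\Sigma}(X)$ to $\mathbf{T}_{\Sigma}(Y)$, since no earlier result states it directly. The $\bot$-completion above is what I would try first. If it proved awkward, I would instead apply Proposition~\ref{PRecSubs} directly on $\mathbf{T}_{\Sigma}(Y)$ with a single global substitution, taking $L_{z_{j}}=L_{j}$ and $\{x\}$ for the old variables. This replaces the iteration but still requires the transfer step.
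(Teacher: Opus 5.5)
Your argument is correct and is essentially the route the paper intends: the paper gives Corollary~\ref{CRecOp} without proof, placing it as a consequence of the substitution closure of Proposition~\ref{PRecSubs} (alongside Corollary~\ref{CRecSubs}), that is, substituting the $L_{j}$ into the singleton of Proposition~\ref{PRecVar}(3). Your passage to $Y=X\cup Z$, the $\bot$-completion that moves each $L_{j}$ into $\mathrm{Rec}_{s_{j}}(\mathbf{T}_{\Sigma}(Y))$, and the pullback via Proposition~\ref{PsRecSub} supply a detail the paper leaves implicit, since $X$ need not contain pairwise distinct variables of the sorts $s_{j}$ (for instance $X=\varnothing^{S}$, as in Example~\ref{EList}).
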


Next proposition states that, for every sort $s\in S$ and $z\in X_{s}$, if the input language $L\subseteq \mathrm{T}_{\Sigma}(X)_{s}$ is recognizable, then its $z$-iteration is also recognizable.

\begin{propC}
[{\cite[Prop.~3.25]{CVCL20}}]
\label{PRecIt}
Let $s\in S$ and $z\in X_{s}$. If $L\in\mathrm{Rec}_{s}(\mathbf{T}_{\Sigma}(X))$, then $$L^{\star\, z}\in\mathrm{Rec}_{s}(\mathbf{T}_{\Sigma}(X)).$$
\end{propC}

\section{Regular many-sorted languages}

In this section, we prove the main result of the paper: for a finite set of sorts $S$, a finite $S$-sorted signature $\Sigma$, a finite $S$-sorted set $X$, and a sort $s\in S$, the $s$-recognizable and the $s$-regular languages of $\mathbf{T}_{\Sigma}(X)$ coincide. The inclusion of the $s$-regular languages into the $s$-recognizable ones follows from the closure properties established in Section~3. The reverse inclusion, which is the substantial part of the argument, generalizes Lemma~2.5.7 of G\'{e}cseg and Steinby~\cite{GS84}, itself adapted from the proof of McNaughton and Yamada~\cite{MY60} for the free monoid.

\begin{asm}
In this section, we assume that $S$ is a finite set of sorts, that $\Sigma$ is a finite $S$-sorted signature, and that $X$ is a finite $S$-sorted set.
\end{asm}

We first introduce the regular signature determined by the set of sorts $S$, the $S$-sorted signature $\Sigma$, and an $S$-sorted set.

\begin{defi}\label{DRegSig}
Let $Z$ be an $S$-sorted set. The \emph{regular} signature determined by $(S,\Sigma,Z)$ is the $S$-sorted signature $\mathrm{Reg}(S,\Sigma,Z)$ where, for every $(\mathbf{s},s)\in S^{\star}\times S$, the set $\mathrm{Reg}(S,\Sigma,Z)_{\mathbf{s},s}$ is given by
$$
\mathrm{Reg}(S,\Sigma,Z)_{\mathbf{s},s}=
\begin{cases}
\Sigma_{\mathbf{s},s}\amalg \{\varnothing_{s}\} &\text{if } \mathbf{s}=\lambda;\\
\Sigma_{\mathbf{s},s}\amalg \{(\cdot)^{\star\, z}\mid z \in Z_{s}\} &\text{if } \mathbf{s}=(s);\\
\Sigma_{\mathbf{s},s}\amalg (\{+_{s}\}\cup\{\sbs{z}{{\cdot}}^{\sharp\mathsf{p}}_{s}(\cdot)\mid z \in Z_{s}\}) &\text{if } \mathbf{s}=(s,s);\\
\Sigma_{\mathbf{s},s}\amalg \{\sbs{z}{{\cdot}}^{\sharp\mathsf{p}}_{s}(\cdot)\mid z \in Z_{t}\} &\text{if } \mathbf{s}=(t,s), \text{with } t\neq s;\\
\Sigma_{\mathbf{s},s} &\text{otherwise}.
\end{cases}
$$

That is, $\mathrm{Reg}(S,\Sigma,Z)_{\mathbf{s},s}$ is the expansion of $\Sigma$ by adding, for every sort $s\in S$, (1) an \emph{empty} constant of coarity $s$, namely $\varnothing_{s}$, (2) a binary \emph{sum} operation of coarity $s$, namely $+_{s}$, (3) for every $z\in Z_{s}$, the unary operation of coarity $s$ of \emph{$z$-iteration}, namely $(\cdot)^{\star\, z}$, and (4) for every $t\in S$, and every $z\in Z_{t}$, the operation of arity $(t,s)$ and coarity $s$ of \emph{$z$-substitution}, namely $\sbs{z}{{\cdot}}^{\sharp\mathsf{p}}_{s}(\cdot)$.
\end{defi}

\begin{defi}
For an $S$-sorted set $Z$, and a sort $s\in S$, a term in $\mathrm{T}_{\mathrm{Reg}(S,\Sigma,Z)}(Z)_{s}$, the $s$-th component of the underlying set of $\mathbf{T}_{\mathrm{Reg}(S,\Sigma,Z)}(Z)$, the free $\mathrm{Reg}(S,\Sigma,Z)$-algebra over $Z$, will be called a \emph{regular expression over $(S,\Sigma,Z)$} of type $s$.
\end{defi}

We first prove that, for every $S$-sorted set $Z$, the underlying set of the power algebra $\mathbf{T}_{\Sigma}(Z)^{\wp}$ has a natural structure of $\mathrm{Reg}(S,\Sigma,Z)$-algebra.

\begin{prop}\label{PRegTerm} Let $Z$ be an $S$-sorted set. The $S$-sorted set $\mathrm{T}_{\Sigma}(Z)^{\wp}$ is equipped with a structure of $\mathrm{Reg}(S,\Sigma,Z)$-algebra.
\end{prop}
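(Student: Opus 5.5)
To prove the statement, I would exhibit the structure explicitly: for every $(\mathbf{s},s)\in S^{\star}\times S$ and every formal operation in $\mathrm{Reg}(S,\Sigma,Z)_{\mathbf{s},s}$, I would give a mapping from $\mathrm{T}_{\Sigma}(Z)^{\wp}_{\mathbf{s}}$ to $\mathrm{T}_{\Sigma}(Z)^{\wp}_{s}$. Since $\mathrm{Reg}(S,\Sigma,Z)_{\mathbf{s},s}$ is defined as a coproduct of $\Sigma_{\mathbf{s},s}$ with the new symbols, it suffices to define the interpretation separately on each summand. The universal property of the coproduct then yields a single mapping $F_{\mathbf{s},s}\colon \mathrm{Reg}(S,\Sigma,Z)_{\mathbf{s},s}\longrightarrow \mathrm{Hom}(\mathrm{T}_{\Sigma}(Z)^{\wp}_{\mathbf{s}},\mathrm{T}_{\Sigma}(Z)^{\wp}_{s})$, as Definition~\ref{DAlg} requires.

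The assignments would be as follows. On the summand $\Sigma_{\mathbf{s},s}$, send $\sigma$ to $\sigma^{\mathbf{T}_{\Sigma}(Z)^{\wp}}$, the operation of the power algebra from Proposition~\ref{PSubsetAlg} applied to $\mathbf{T}_{\Sigma}(Z)$. The constant $\varnothing_{s}$ is interpreted as the empty subset of $\mathrm{T}_{\Sigma}(Z)_{s}$. The sum $+_{s}$ is the binary union $(K,L)\longmapsto K\cup L$ on $\mathrm{Sub}(\mathrm{T}_{\Sigma}(Z)_{s})$. For $z\in Z_{s}$, the unary symbol $(\cdot)^{\star\, z}$ is sent to $L\longmapsto L^{\star\, z}$ as in Definition~\ref{DIter}, taken with $X:=Z$; this is well typed because $z\in Z_{s}$ and $L\subseteq \mathrm{T}_{\Sigma}(Z)_{s}$. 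Finally, for $t\in S$ and $z\in Z_{t}$, the binary symbol of arity $(t,s)$ is sent to $(L,K)\longmapsto \sbs{z}{L}^{\sharp\mathsf{p}}_{s}(K)$ from Definition~\ref{DSubs}, again with $X:=Z$. Here $L\in\mathrm{T}_{\Sigma}(Z)^{\wp}_{t}$ and $K\in \mathrm{T}_{\Sigma}(Z)^{\wp}_{s}$, so the value lies in $\mathrm{T}_{\Sigma}(Z)^{\wp}_{s}$.

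The only point that needs care, and I expect it to be the main obstacle, is bookkeeping in the case split of Definition~\ref{DRegSig}. When $t=s$, both the sum and the $z$-substitutions for $z\in Z_{s}$ live in the same set $\mathrm{Reg}(S,\Sigma,Z)_{(s,s),s}$, and the definition takes a union $\{+_{s}\}\cup\{\dots\}$ rather than a coproduct. I would check that these symbols are pairwise distinct, or else tacitly treat them as tagged, so that the interpretation is well defined on that union. I would also check that every added symbol really has the claimed arity and coarity, so that each target $\mathrm{Hom}$-set matches. Once this is verified, the family $F=(F_{\mathbf{s},s})$ is a structure of $\mathrm{Reg}(S,\Sigma,Z)$-algebra on $\mathrm{T}_{\Sigma}(Z)^{\wp}$, and its reduct to $\Sigma$ is the power algebra $\mathbf{T}_{\Sigma}(Z)^{\wp}$. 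No finiteness assumptions are needed for any of this.
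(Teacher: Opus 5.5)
Your proposal is correct and matches the paper's proof, which likewise equips $\mathrm{T}_{\Sigma}(Z)^{\wp}$ with the power-algebra operations $\sigma^{\mathbf{T}_{\Sigma}(Z)^{\wp}}$ for $\sigma\in\Sigma_{\mathbf{s},s}$. It also interprets $\varnothing_{s}$ as $\varnothing$, $+_{s}$ as union, $(\cdot)^{\star\, z}$ as $L\mapsto L^{\star\, z}$, and the $z$-substitution symbol as $(L,K)\mapsto\sbs{z}{L}^{\sharp\mathsf{p}}_{s}(K)$. Your additional bookkeeping about defining the interpretation summand by summand, and about the symbols $+_{s}$ and the $z$-substitutions being distinct in $\mathrm{Reg}(S,\Sigma,Z)_{(s,s),s}$, is a harmless refinement that the paper leaves implicit.
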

\begin{proof}
 Let us denote by $\mathbf{T}^{\mathrm{Reg}}_{\Sigma}(Z)^{\wp}$ the $\mathrm{Reg}(S,\Sigma,Z)$-algebra defined as follows.
 
\textsf{(1)} The underlying $S$-sorted set of $\mathbf{T}^{\mathrm{Reg}}_{\Sigma}(Z)^{\wp}$ is $\mathrm{T}_{\Sigma}(Z)^{\wp}=(\mathrm{Sub}(\mathrm{T}_{\Sigma}(Z)_{s}))_{s\in S}$.

\textsf{(2)} For every $(\mathbf{s},s)\in S^{\star}\times S$ and every operation symbol $\sigma\in \Sigma_{\mathbf{s},s}$, the operation $\sigma^{\mathbf{T}^{\mathrm{Reg}}_{\Sigma}(Z)^{\wp}}$ is equal to $\sigma^{\mathbf{T}_{\Sigma}(Z)^{\wp}}$, as introduced in Proposition~\ref{PSubsetAlg}.

\textsf{(3)} For every $s\in S$, the constant $\varnothing^{\mathbf{T}^{\mathrm{Reg}}_{\Sigma}(Z)^{\wp}}_{s}$ is equal to $\varnothing \in \mathrm{T}_{\Sigma}(Z)^{\wp}_{s}$.

\textsf{(4)} For every $s\in S$, the operation ${+}^{\mathbf{T}^{\mathrm{Reg}}_{\Sigma}(Z)^{\wp}}_{s}$ is given by the union of languages, that is,
$$
{+}^{\mathbf{T}^{\mathrm{Reg}}_{\Sigma}(Z)^{\wp}}_{s}
\left\lbrace
\begin{array}{ccl}
\mathrm{T}_{\Sigma}(Z)^{\wp}_{s}\times \mathrm{T}_{\Sigma}(Z)^{\wp}_{s} & \longrightarrow &\mathrm{T}_{\Sigma}(Z)^{\wp}_{s} \\
 (L,K) & \longmapsto & L\cup K
\end{array}
 \right.
$$

\textsf{(5)} For every $s\in S$ and every $z\in Z_{s}$, the operation $((\cdot)^{\star\, z})^{\mathbf{T}^{\mathrm{Reg}}_{\Sigma}(Z)^{\wp}}$ is given by $z$-iteration, that is
$$
((\cdot)^{\star\, z})^{\mathbf{T}^{\mathrm{Reg}}_{\Sigma}(Z)^{\wp}}
\left\lbrace
\begin{array}{ccl}
\mathrm{T}_{\Sigma}(Z)^{\wp}_{s} & \longrightarrow &\mathrm{T}_{\Sigma}(Z)^{\wp}_{s} \\
L & \longmapsto & 
 L^{\star\, z} 
\end{array}
 \right.
$$

\textsf{(6)} For every $s,t\in S$ and every $z\in Z_{t}$, the operation $(\sbs{z}{{\cdot}}^{\sharp\mathsf{p}}_{s}(\cdot))^{\mathbf{T}^{\mathrm{Reg}}_{\Sigma}(Z)^{\wp}}$ is given by $z$-substitution, that is,
$$
(\sbs{z}{{\cdot}}^{\sharp\mathsf{p}}_{s}(\cdot))^{\mathbf{T}^{\mathrm{Reg}}_{\Sigma}(Z)^{\wp}}
\left\lbrace
\begin{array}{ccl}
\mathrm{T}_{\Sigma}(Z)^{\wp}_{t} \times \mathrm{T}_{\Sigma}(Z)^{\wp}_{s} & \longrightarrow &\mathrm{T}_{\Sigma}(Z)^{\wp}_{s} \\
(L,K) & \longmapsto & \sbs{z}{L}^{\sharp\mathsf{p}}_{s}(K) 
\end{array}
 \right.
$$
This completes the definition of $\mathbf{T}^{\mathrm{Reg}}_{\Sigma}(Z)^{\wp}$.
\end{proof}

Following the results introduced in Section~3, when the set of sorts $S$ is finite and the $S$-sorted set $Z$ is finite, the $S$-sorted set $(\mathrm{Rec}_{s}(\mathbf{T}_{\Sigma}(Z)))_{s\in S}$ is a closed subset of the $\mathrm{Reg}(S,\Sigma,Z)$-algebra introduced above.

\begin{cor} Let $Z$ be a finite $S$-sorted set.
 The $S$-sorted set $(\mathrm{Rec}_{s}(\mathbf{T}_{\Sigma}(Z)))_{s\in S}$ is a closed subset of the $\mathrm{Reg}(S,\Sigma,Z)$-algebra $\mathbf{T}^{\mathrm{Reg}}_{\Sigma}(Z)^{\wp}$. We denote by $\mathbf{Rec}^{\mathrm{Reg}}_{\boldsymbol{\cdot}}(\mathbf{T}_{\Sigma}(Z))$ the $\mathrm{Reg}(S,\Sigma,Z)$-algebra canonically associated to the closed subset $(\mathrm{Rec}_{s}(\mathbf{T}_{\Sigma}(Z)))_{s\in S}$ of $\mathbf{T}^{\mathrm{Reg}}_{\Sigma}(Z)^{\wp}$.
\end{cor}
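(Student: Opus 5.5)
We must check that, for every symbol of $\mathrm{Reg}(S,\Sigma,Z)$, the corresponding operation of $\mathbf{T}^{\mathrm{Reg}}_{\Sigma}(Z)^{\wp}$ defined in Proposition~\ref{PRegTerm} sends tuples of recognizable languages of the appropriate sorts to a recognizable language of the coarity sort. The plan is to treat the four kinds of symbols in turn, each time invoking a closure property already stated in Section~3 for the free algebra $\mathbf{T}_{\Sigma}(Z)$. All of them are available because $S$, $\Sigma$ and $Z$ are finite, which is exactly the standing assumption under which Corollaries~\ref{CRecSubs}, \ref{CRecOp} and Proposition~\ref{PRecIt} were stated (with $X$ replaced by $Z$).

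\textbf{Case $\sigma\in\Sigma_{\mathbf{s},s}$.} Given $(L_{j})_{j\in\bb{\mathbf{s}}}\in\prod_{j\in\bb{\mathbf{s}}}\mathrm{Rec}_{s_{j}}(\mathbf{T}_{\Sigma}(Z))$, the value $\sigma^{\mathbf{T}_{\Sigma}(Z)^{\wp}}((L_{j})_{j})$ lies in $\mathrm{Rec}_{s}(\mathbf{T}_{\Sigma}(Z))$ by Corollary~\ref{CRecOp}. This includes constants: for $\mathbf{s}=\lambda$ the value is the singleton $\{\sigma^{\mathbf{T}_{\Sigma}(Z)}\}$, which is also covered by Proposition~\ref{PRecVar}(2).

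\textbf{Cases $\varnothing_{s}$ and $+_{s}$.} The constant $\varnothing$ is in $\mathrm{Rec}_{s}$, and $\mathrm{Rec}_{s}$ is closed under binary union. Both facts are given by Proposition~\ref{PsRecCl}.

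\textbf{Cases $(\cdot)^{\star z}$ and $z$-substitution.} For $z\in Z_{s}$ and $L\in\mathrm{Rec}_{s}$, we have $L^{\star z}\in\mathrm{Rec}_{s}$ by Proposition~\ref{PRecIt}. For $z\in Z_{t}$, $L\in\mathrm{Rec}_{t}$ and $K\in\mathrm{Rec}_{s}$, we have $\sbs{z}{L}^{\sharp\mathsf{p}}_{s}(K)\in\mathrm{Rec}_{s}$ by Corollary~\ref{CRecSubs}. This covers both arities $(s,s)$ and $(t,s)$ with $t\neq s$.

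Since closure holds for every operation symbol, $(\mathrm{Rec}_{s}(\mathbf{T}_{\Sigma}(Z)))_{s\in S}$ is a closed subset in the sense of Definition~\ref{DAlgSub}. The associated subalgebra $\mathbf{Rec}^{\mathrm{Reg}}_{\boldsymbol{\cdot}}(\mathbf{T}_{\Sigma}(Z))$ is then obtained by restricting the operations. There is no real obstacle. The only point needing care is bookkeeping: the imported results are phrased for the fixed finite $X$, so one applies them with $Z$ in its place, and one checks that each regular symbol is matched with the correct sorts as given in Definition~\ref{DRegSig}.
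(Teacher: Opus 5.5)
Your proposal is correct and matches the paper's proof essentially verbatim: a symbol-by-symbol check using Corollary~\ref{CRecOp} for the operations of $\Sigma$, Proposition~\ref{PsRecCl} for $\varnothing_{s}$ and $+_{s}$, Proposition~\ref{PRecIt} for $z$-iteration, and Corollary~\ref{CRecSubs} for $z$-substitution. Your extra remarks are small additions to the same argument: that constants are also covered by Proposition~\ref{PRecVar}, and that the Section~3 results apply with the finite $Z$ in place of $X$.
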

\begin{proof}
We need to prove that the $S$-sorted set $(\mathrm{Rec}_{s}(\mathbf{T}_{\Sigma}(Z)))_{s\in S}$ is closed under all the operations coming from $\mathrm{Reg}(S,\Sigma,Z)$.

\textsf{(1)} Let $(\mathbf{s},s)\in S^{\star}\times S$ and let $\sigma\in \Sigma_{\mathbf{s},s}$. That $(\mathrm{Rec}_{s}(\mathbf{T}_{\Sigma}(Z)))_{s\in S}$ is closed under operation $\sigma^{\mathbf{T}^{\mathrm{Reg}}_{\Sigma}(Z)^{\wp}}$, i.e., under operation $\sigma^{\mathbf{T}_{\Sigma}(Z)^{\wp}}$, was proven in Corollary~\ref{CRecOp}.

\textsf{(2)} Let $s\in S$. That $\varnothing^{\mathbf{T}^{\mathrm{Reg}}_{\Sigma}(Z)^{\wp}}_{s}=\varnothing$ is an element in $\mathrm{Rec}_{s}(\mathbf{T}_{\Sigma}(Z))$ was proven in Proposition~\ref{PsRecCl}.

\textsf{(3)} Let $s\in S$. That $(\mathrm{Rec}_{s}(\mathbf{T}_{\Sigma}(Z)))_{s\in S}$ is closed under operation ${+}^{\mathbf{T}^{\mathrm{Reg}}_{\Sigma}(Z)^{\wp}}_{s}$, that is, closed under union, was proven in Proposition~\ref{PsRecCl}.

\textsf{(4)} Let $s\in S$ and $z\in Z_{s}$. The fact that $(\mathrm{Rec}_{s}(\mathbf{T}_{\Sigma}(Z)))_{s\in S}$ is closed under the operation $((\cdot)^{\star\, z})^{\mathbf{T}^{\mathrm{Reg}}_{\Sigma}(Z)^{\wp}}$ follows from Proposition~\ref{PRecIt}. 

\textsf{(5)} Let $s,t\in S$ and $z\in Z_{t}$. That $(\mathrm{Rec}_{s}(\mathbf{T}_{\Sigma}(Z)))_{s\in S}$ is closed under operation $(\sbs{z}{{\cdot}}^{\sharp\mathsf{p}}_{s}(\cdot))^{\mathbf{T}^{\mathrm{Reg}}_{\Sigma}(Z)^{\wp}}$ follows from Corollary~\ref{CRecSubs}. 
\end{proof}

\begin{rem}\label{RInt}
Let $Z$ be a finite $S$-sorted set. 
Consider the following slight modification of the $S$-sorted mapping $\{\cdot\}^{Z}=(\{\cdot\}^{Z}_{s})_{s\in S}$, introduced in Proposition~\ref{PSubFunct}. By abuse of notation, we will keep the same name.
$$
\{\cdot\}^{Z}_{s} 
\left\lbrace
\begin{array}{ccc}
Z_{s} & \longrightarrow &
\mathrm{Rec}_{s}(\mathbf{T}_{\Sigma}(Z))\\
z& \longmapsto &
\{z\}
\end{array}
\right.
$$
This is well-defined according to Proposition~\ref{PRecVar}. Now,  $\{\cdot\}^{Z}$ is an $S$-sorted mapping from $Z$ to $(\mathrm{Rec}_{s}(\mathbf{T}_{\Sigma}(Z)))_{s\in S}$, i.e., to the underlying $S$-sorted set of the $\mathrm{Reg}(S,\Sigma,Z)$-algebra $\mathbf{Rec}^{\mathrm{Reg}}_{\boldsymbol{\cdot}}(\mathbf{T}_{\Sigma}(Z))$. Thus, by the universal property of the free $\mathrm{Reg}(S,\Sigma,Z)$-algebra $\mathbf{T}_{\mathrm{Reg}(S,\Sigma,Z)}(Z)$, there exists a unique $\mathrm{Reg}(S,\Sigma,Z)$-ho\-mo\-mor\-phism $\{\cdot\}^{Z\sharp}\colon \mathbf{T}_{\mathrm{Reg}(S,\Sigma,Z)}(Z)\longrightarrow \mathbf{Rec}^{\mathrm{Reg}}_{\boldsymbol{\cdot}}(\mathbf{T}_{\Sigma}(Z))$ such that $\{\cdot\}^{Z\sharp}\circ \eta^{Z}=\{\cdot \}^{Z}$. 

For a sort $s\in S$, and a regular expression $R\in \mathrm{T}_{\mathrm{Reg}(S,\Sigma,Z)}(Z)_{s}$, we will write $\{R\}^{Z\sharp}_{s}$ instead of $\{\cdot\}^{Z\sharp}_{s}(R)$. Note that $\{R\}^{Z\sharp}_{s}$ is, by construction, an $s$-recognizable language of $\mathbf{T}_{\Sigma}(Z)_{s}$.
\end{rem}

We are now in position to define the notion of $s$-regular languages.
\begin{defi}\label{DReg} Let $s\in S$ and $L\subseteq \mathrm{T}_{\Sigma}(X)_{s}$. We say that $L$ is \emph{$s$-regular} if there exists a finite $S$-sorted set $Z$ with $X\subseteq Z$ and a regular expression $R\in \mathrm{T}_{\mathrm{Reg}(S,\Sigma,Z)}(Z)_{s}$ such that
$L= \{R\}^{Z\sharp}_{s}$. We will denote by $\mathrm{Reg}_{s}(\mathbf{T}_{\Sigma}(X))$ the set of all languages of $\mathbf{T}_{\Sigma}(X)_{s}$ that are $s$-regular.
\end{defi}

\begin{rem} In Definition~\ref{DReg}, the variables in $Z-X$ serve an auxiliary role. Specifically, variables not included in $X$ are used to carry out the defined iteration and substitution operations. However, because the language in question only contains variables from $X$, any variables from $Z-X$ will ultimately vanish when the regular expression is interpreted as a language. 
\end{rem}

As an immediate corollary of the previous definition, every $s$-regular language is $s$-recognizable.

\begin{cor}\label{CRegtoRec} For every $s\in S$, $\mathrm{Reg}_{s}(\mathbf{T}_{\Sigma}(X))\subseteq 
\mathrm{Rec}_{s}(\mathbf{T}_{\Sigma}(X))$.
\end{cor}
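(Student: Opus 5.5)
The plan is to move the recognizability that the construction gives us in $\mathbf{T}_{\Sigma}(Z)$ down to the subalgebra $\mathbf{T}_{\Sigma}(X)$. Let $s\in S$ and $L\in\mathrm{Reg}_{s}(\mathbf{T}_{\Sigma}(X))$. By Definition~\ref{DReg}, $L\subseteq\mathrm{T}_{\Sigma}(X)_{s}$, and there exist a finite $S$-sorted set $Z$ with $X\subseteq Z$ and a regular expression $R\in\mathrm{T}_{\mathrm{Reg}(S,\Sigma,Z)}(Z)_{s}$ such that $L=\{R\}^{Z\sharp}_{s}$.

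First, we check that $L$ is $s$-recognizable in $\mathbf{T}_{\Sigma}(Z)$. The homomorphism $\{\cdot\}^{Z\sharp}$ of Remark~\ref{RInt} takes values in $\mathbf{Rec}^{\mathrm{Reg}}_{\boldsymbol{\cdot}}(\mathbf{T}_{\Sigma}(Z))$. This algebra is well defined because $Z$ is finite, so the preceding corollary shows that $(\mathrm{Rec}_{t}(\mathbf{T}_{\Sigma}(Z)))_{t\in S}$ is a closed subset of $\mathbf{T}^{\mathrm{Reg}}_{\Sigma}(Z)^{\wp}$. This closure rests on Corollaries~\ref{CRecOp} and~\ref{CRecSubs} and Propositions~\ref{PsRecCl} and~\ref{PRecIt}, all applied to the finite $S$-sorted set $Z$. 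Moreover, the generators $\{z\}$ lie in $\mathrm{Rec}_{t}(\mathbf{T}_{\Sigma}(Z))$ by Proposition~\ref{PRecVar}. Hence $L=\{R\}^{Z\sharp}_{s}\in\mathrm{Rec}_{s}(\mathbf{T}_{\Sigma}(Z))$.

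Second, we transfer this to $\mathbf{T}_{\Sigma}(X)$. By Remark~\ref{RRest}, $\mathbf{T}_{\Sigma}(X)$ is a subalgebra of $\mathbf{T}_{\Sigma}(Z)$, so the inclusion $i\colon\mathbf{T}_{\Sigma}(X)\longrightarrow\mathbf{T}_{\Sigma}(Z)$ is an injective $\Sigma$-homomorphism. Since $L\subseteq\mathrm{T}_{\Sigma}(X)_{s}$, we have $i_{s}[L]=L\in\mathrm{Rec}_{s}(\mathbf{T}_{\Sigma}(Z))$. Proposition~\ref{PsRecSub} then yields $L\in\mathrm{Rec}_{s}(\mathbf{T}_{\Sigma}(X))$, which is the claimed inclusion.

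The argument is essentially bookkeeping, and the only delicate point is the change of ambient algebra from $\mathbf{T}_{\Sigma}(Z)$ to $\mathbf{T}_{\Sigma}(X)$. Recognizability is witnessed in $\mathbf{T}_{\Sigma}(Z)$, not in $\mathbf{T}_{\Sigma}(X)$. So we must use that $L$ contains only terms over $X$, which is built into Definition~\ref{DReg}, and then restrict along the subalgebra inclusion, exactly as Proposition~\ref{PsRecSub} allows.
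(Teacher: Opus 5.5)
Your proof is correct and follows the same route as the paper. You get recognizability of $\{R\}^{Z\sharp}_{s}$ in $\mathbf{T}_{\Sigma}(Z)$ from Remark~\ref{RInt}, then transfer it to the subalgebra $\mathbf{T}_{\Sigma}(X)$ via Proposition~\ref{PsRecSub}; you also spell out the finiteness of $Z$ and the closure facts behind Remark~\ref{RInt}, which the paper leaves implicit.
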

\begin{proof}
Let $L\subseteq \mathrm{T}_{\Sigma}(X)_{s}$ be a regular language in $\mathrm{Reg}_{s}(\mathbf{T}_{\Sigma}(X))$, then there exists a finite $S$-sorted set $Z$ with $X\subseteq Z$ and a regular expression $R\in \mathrm{T}_{\mathrm{Reg}(S,\Sigma,Z)}(Z)_{s}$ such that
$L= \{R\}^{Z\sharp}_{s}$. By Remark~\ref{RInt}, $\{R\}^{Z\sharp}_{s}\in \mathrm{Rec}_{s}(\mathbf{T}_{\Sigma}(Z))$. Now, since $X\subseteq Z$, we have that $\mathbf{T}_{\Sigma}(X)$ is a subalgebra of $\mathbf{T}_{\Sigma}(Z)$. By Proposition~\ref{PsRecSub}, we conclude that $L=\{R\}^{Z\sharp}_{s}\in \mathrm{Rec}_{s}(\mathbf{T}_{\Sigma}(X))$.
\end{proof}

Next, we introduce a small lemma that will be used later. It states that for each sort $s \in S$ and every term of sort $s$ in $\mathrm{T}_{\Sigma}(X)_{s}$, the singleton containing this term is $s$-regular. Indeed, the term itself, viewed as a regular expression, suffices for this proof.

\begin{lem}\label{LTermReg} Let $s\in S$ and $P\in \mathrm{T}_{\Sigma}(X)_{s}$, then $P\in\mathrm{T}_{\mathrm{Reg}(S,\Sigma,X)}(X)_{s}$ and $\{P\}^{X \sharp}_{s}=\{P\}$. Consequently, $\{P\}\in\mathrm{Reg}_{s}(\mathbf{T}_{\Sigma}(X))$.
\end{lem}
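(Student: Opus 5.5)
The proof proceeds by Artinian induction on $(\coprod\mathrm{T}_{\Sigma}(X),\leq_{\mathbf{T}_{\Sigma}(X)})$ (Proposition~\ref{PArtOrd}). Both claims, that $P\in\mathrm{T}_{\mathrm{Reg}(S,\Sigma,X)}(X)_{s}$ and that $\{P\}^{X\sharp}_{s}=\{P\}$, are proved simultaneously. The key observation is that $\Sigma_{\mathbf{s},s}$ is a coproduct summand of $\mathrm{Reg}(S,\Sigma,X)_{\mathbf{s},s}$ for every $(\mathbf{s},s)$ (Definition~\ref{DRegSig}). Consequently, every $\Sigma$-operation is also a $\mathrm{Reg}(S,\Sigma,X)$-operation. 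Strictly, the inclusion $\mathrm{T}_{\Sigma}(X)\subseteq\mathrm{T}_{\mathrm{Reg}(S,\Sigma,X)}(X)$ is up to the canonical injection of $\Sigma$ into the enlarged signature, which sends the letter $(\sigma)$ to the corresponding letter of $\coprod\mathrm{Reg}(S,\Sigma,X)$. I would state this identification explicitly at the outset and then treat it as an equality. Alternatively, the first claim follows without induction: $\mathrm{T}_{\mathrm{Reg}(S,\Sigma,X)}(X)$, viewed as a $\Sigma$-algebra by restricting operations, contains $X$, so it contains the $\Sigma$-subalgebra generated by $X$.

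\textsf{Base step.} Let $(P,s)$ be minimal. If $P=x$ with $x\in X_{s}$, then $x$ is a generator of $\mathbf{T}_{\mathrm{Reg}(S,\Sigma,X)}(X)$. Since $\{\cdot\}^{X\sharp}\circ\eta^{X}=\{\cdot\}^{X}$ (Remark~\ref{RInt}), we get $\{x\}^{X\sharp}_{s}=\{x\}$. If $P=\sigma^{\mathbf{T}_{\Sigma}(X)}$ with $\sigma\in\Sigma_{\lambda,s}$, then $\sigma$ is a constant of the regular signature, and $\{\cdot\}^{X\sharp}$ is a homomorphism into $\mathbf{Rec}^{\mathrm{Reg}}_{\boldsymbol{\cdot}}(\mathbf{T}_{\Sigma}(X))$, whose $\sigma$-operation is $\sigma^{\mathbf{T}_{\Sigma}(X)^{\wp}}$ (Proposition~\ref{PRegTerm}, item (2)). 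Hence the value is $\sigma^{\mathbf{T}_{\Sigma}(X)^{\wp}}$ applied to the empty family, which is $\{\sigma^{\mathbf{T}_{\Sigma}(X)}\}$.

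\textsf{Inductive step.} Write $P=\sigma^{\mathbf{T}_{\Sigma}(X)}((P_{j})_{j\in\bb{\mathbf{s}}})$ with $\mathbf{s}\neq\lambda$ (Proposition~\ref{PTermChar}). By the induction hypothesis, each $P_{j}$ is a regular expression of type $s_{j}$ and satisfies $\{P_{j}\}^{X\sharp}_{s_{j}}=\{P_{j}\}$. The same word $P$ is then the result of applying the $\mathrm{Reg}$-operation $\sigma$ to $(P_{j})_j$, so it lies in $\mathrm{T}_{\mathrm{Reg}(S,\Sigma,X)}(X)_{s}$. By the homomorphism property,
$$\{P\}^{X\sharp}_{s}=\sigma^{\mathbf{T}_{\Sigma}(X)^{\wp}}\big((\{P_{j}\})_{j\in\bb{\mathbf{s}}}\big)=\big\{\sigma^{\mathbf{T}_{\Sigma}(X)}((P_{j})_{j\in\bb{\mathbf{s}}})\big\}=\{P\},$$
where the middle equality is the explicit description of the power algebra operation (Proposition~\ref{PSubsetAlg}).

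Finally, taking $Z=X$ in Definition~\ref{DReg} (allowed since $X\subseteq X$ and $X$ is finite) shows that $\{P\}\in\mathrm{Reg}_{s}(\mathbf{T}_{\Sigma}(X))$. The only delicate point is bookkeeping: identifying $\Sigma$-symbols with their images in the coproduct signature, so that the term $P$ literally is a regular expression. There is no mathematical obstacle.
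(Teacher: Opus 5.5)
Your proposal is correct and follows essentially the same Artinian induction as the paper. Variables are handled via $\{\cdot\}^{X\sharp}\circ\eta^{X}=\{\cdot\}^{X}$, while constants and the inductive step use the homomorphism property together with Proposition~\ref{PRegTerm}(2) and the explicit power-algebra operation of Proposition~\ref{PSubsetAlg}, and the conclusion is obtained with $Z=X$. Your explicit treatment of the canonical injection of $\Sigma$ into $\mathrm{Reg}(S,\Sigma,X)$, which the paper leaves implicit, is a welcome bit of extra care.
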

\begin{proof}
The proof is done by induction on $(\coprod \mathrm{T}_{\Sigma}(X), \leq_{\mathbf{T}_{\Sigma}(X)})$.

\textsf{Base step of the Artinian induction.}

Let $(P,s)$ be a minimal element in $(\coprod\mathrm{T}_{\Sigma}(X),\leq_{\mathbf{T}_{\Sigma}(X)})$. We have, by Proposition~\ref{PArtOrd}, that $P$ has the form (1) $x$, for a unique variable $x\in X_{s}$ or (2) $\sigma^{\mathbf{T}_{\Sigma}(X)}$, for a unique constant operation symbol $\sigma\in\Sigma_{\lambda,s}$.

When $P=x$, for a variable $x\in X_{s}$, we have that 
$$\{P\}^{X \sharp}_{s}=\{x\}^{X \sharp}_{s}=(\{\cdot\}^{X\sharp}\circ \eta^{X})_{s}(x)=\{\cdot\}^{X}_{s}(x)=\{x\}=\{P\}.$$

When $P=\sigma^{\mathbf{T}_{\Sigma}(X)}$, for a constant operation symbol $\sigma\in\Sigma_{\lambda,s}$, we have that 
$$\{P\}^{X \sharp}_{s}=\{\sigma\}^{X \sharp}_{s}=
\sigma^{\mathbf{Rec}^{\mathrm{Reg}}_{\boldsymbol{\cdot}}(\mathbf{T}_{\Sigma}(X))} = \sigma^{\mathbf{T}_{\Sigma}(X)^{\wp}}=\{\sigma^{\mathbf{T}_{\Sigma}(X)}\}=\{P\}.$$

\textsf{Inductive step of the Artinian induction.}

Let $(P,s)$ be a non-minimal element in $(\coprod\mathrm{T}_{\Sigma}(X),\leq_{\mathbf{T}_{\Sigma}(X)})$. Then, by Proposition~\ref{PArtOrd}, there exists a unique $(\mathbf{s},s)\in (S^{\star}-\{\lambda\})\times S$, a unique operation symbol $\sigma\in\Sigma_{\mathbf{s},s}$ and a unique family of terms $(P_{j})_{j\in\bb{\mathbf{s}}}\in \mathrm{T}_{\Sigma}(X)_{\mathbf{s}}$ for which $P=\sigma^{\mathbf{T}_{\Sigma}(X)}((P_{j})_{j\in \bb{\mathbf{s}}})$. Assume that the statement holds for every immediate subterm of $P$, that is, for every $j\in\bb{\mathbf{s}}$, we have that $\{P_{j}\}^{X \sharp}_{s_{j}}=\{P_{j}\}$. 

The following chain of equalities holds
\allowdisplaybreaks
\begin{align*}
\{P\}^{X \sharp}_{s}&=
\{
\sigma^{\mathbf{T}_{\Sigma}(X)}((P_{j})_{j\in \bb{\mathbf{s}}})
\}^{X \sharp}_{s}
\tag{Def. of $P$}
\\
&=
\sigma^{\mathbf{Rec}^{\mathrm{Reg}}_{\boldsymbol{\cdot}}(\mathbf{T}_{\Sigma}(X))}
((
\{P_{j}\}^{X\sharp}_{s_{j}}
)_{j\in\bb{\mathbf{s}}})
\tag{Rem.~\ref{RInt}}
\\
&=
\sigma^{\mathbf{T}_{\Sigma}(X)^{\wp}}
((
\{P_{j}\}^{X\sharp}_{s_{j}}
)_{j\in\bb{\mathbf{s}}})
\tag{Prop.~\ref{PRegTerm}}
\\
&=
\sigma^{\mathbf{T}_{\Sigma}(X)^{\wp}}
((
\{P_{j}\}
)_{j\in\bb{\mathbf{s}}})
\tag{Induction}
\\&=
\{
\sigma^{\mathbf{T}_{\Sigma}(X)}((P_{j})_{j\in \bb{\mathbf{s}}})
\}
\tag{Prop.~\ref{PSubsetAlg}}
\\&=\{P\}.
\tag{Def. of $P$}
\end{align*}
This completes the proof.
\end{proof}

The next result is the converse of Corollary~\ref{CRegtoRec}: every $s$-recognizable language is $s$-regular. Its proof, which occupies the rest of this section, adapts to the many-sorted setting the argument of Lemma~2.5.7 in~\cite{GS84}.

\begin{prop}\label{PRectoReg} For every $s\in S$, $\mathrm{Rec}_{s}(\mathbf{T}_{\Sigma}(X))\subseteq 
\mathrm{Reg}_{s}(\mathbf{T}_{\Sigma}(X))$.
\end{prop}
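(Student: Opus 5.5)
Let $L\in\mathrm{Rec}_{s}(\mathbf{T}_{\Sigma}(X))$, witnessed by a finite $\Sigma$-algebra $\mathbf{B}$, a homomorphism $g\colon\mathbf{T}_{\Sigma}(X)\longrightarrow\mathbf{B}$ and $M\subseteq B_{s}$ with $g_{s}^{-1}[M]=L$. Since $L=\bigcup_{b\in M}g_{s}^{-1}[\{b\}]$, $M$ is finite and $\mathrm{Reg}_{s}$ is closed under $+_{s}$ (with $\varnothing_{s}$ covering $M=\varnothing$), it suffices to show that each $g_{s}^{-1}[\{b\}]$ is $s$-regular. The plan follows McNaughton--Yamada / G\'{e}cseg--Steinby: enlarge $X$ to a finite $Z=X\amalg\coprod\text{-copy of }B$, i.e.\ add, for every $t\in S$ and $b\in B_{t}$, a fresh variable $z_{b}\in Z_{t}$, and extend $g$ to $\bar g\colon\mathbf{T}_{\Sigma}(Z)\longrightarrow\mathbf{B}$ by $\bar g(z_{b})=b$. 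Fix an enumeration $b_{1},\dots,b_{n}$ of $\coprod B$ (all pairs $(b,t)$). For $k\le n$ and $(b,t)$ define
$$
T^{k}_{t}(b)=\{P\in\mathrm{T}_{\Sigma}(X\cup Z_{\le k})_{t}\mid \bar g_{t}(P)=b,\ (P,t)\notin\mathrm{Min},\ \text{every non-minimal proper subterm of $P$ has value in } \{b_{1},\dots,b_{k}\}\},
$$
where $Z_{\le k}$ collects the auxiliary variables $z_{b_{1}},\dots,z_{b_{k}}$; roughly, $T^{k}$ consists of ``one-layer'' terms whose intermediate states are among the first $k$, with those states also allowed to appear as variable placeholders. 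The inductive invariant is that every $T^{k}_{t}(b)$ is denoted by a regular expression over $(S,\Sigma,Z)$.

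For $k=0$, $T^{0}_{t}(b)$ consists of terms $\sigma((P_{j})_{j})$ with every $P_{j}$ minimal, i.e.\ a variable of $X$ or a constant, so since $S,\Sigma,X$ are finite it is a finite set, regular by Lemma~\ref{LTermReg} and $+$. For the step $k\to k+1$ with $b_{k+1}=(c,u)$ and $z=z_{c}\in Z_{u}$, I will prove the identity
$$
T^{k+1}_{t}(b)=\sbs{z}{\bigl(T^{k}_{u}(c)\cup\{z\}\bigr)^{\star\,z}\cup T^{k}_{u}(c)... }^{\sharp\mathsf{p}}_{t}\bigl(T^{k}_{t}(b)\bigr)\ \text{(suitably normalised)},
$$
more precisely $T^{k+1}_{t}(b)=\sbs{z}{K}^{\sharp\mathsf{p}}_{t}(T^{k}_{t}(b))$ with $K=\sbs{z}{\{z\}\cup\{\text{minimal terms of value } c\}}... $ — in practice I will take $K=(T^{k}_{u}(c))^{\star z}$ and handle the occurrences of $z$ left inside via one extra substitution of $\{z\}\cup T^{k}_{u}(c)$. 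The inclusion ``$\supseteq$'' uses Corollary~\ref{CSubInv} (substituting terms of value $c$ for $z$ does not change $\bar g$), Lemma~\ref{LSubSubt} to control the values of new subterms, and Lemma~\ref{LItAbs} together with Lemma~\ref{LSubComp} for closure of the iteration. The inclusion ``$\subseteq$'' uses Lemma~\ref{LCollapse}: given $R\in T^{k+1}_{t}(b)$, collapse the maximal subterms of sort $u$ and value $c$ into $z$, obtaining $P$ with no non-minimal subterm of value $c$ (by (4)(a)), all others of value among $b_{1},\dots,b_{k}$ (by (4)(b)), so $P\in T^{k}_{t}(b)$; the collapsed pieces $Q_{\alpha}$ are strictly smaller and lie in $T^{k+1}_{u}(c)$, and an Artinian induction on them shows $Q_{\alpha}\in (T^{k}_{u}(c))^{\star z}$ after the same collapse.

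Finally, with $k=n$, $g_{s}^{-1}[\{b\}]$ equals the non-minimal part of $T^{n}_{s}(b)$ with every remaining auxiliary variable $z_{c}$ erased, i.e.\ substituted by $\{z_{c}\}\ldots$; concretely $g_{s}^{-1}[\{b\}]$ is obtained from $T^{n}_{s}(b)\cup\{\text{minimal terms of value }b\}$ by successively applying $\sbs{z_{c}}{\varnothing}^{\sharp\mathsf{p}}$, i.e.\ removing terms still containing auxiliary variables (here Lemma~\ref{LSubVar} guarantees the results lie in $\mathrm{T}_{\Sigma}(X)$); I must check that the construction is arranged so auxiliary variables only appear where they were re-substituted, which is why the substitutions in the inductive step should be with languages over $X\cup Z_{\le k}$ and the placeholders eliminated at the end. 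The main obstacle is the step $k\to k+1$, namely the exact equality, in particular the ``$\subseteq$'' direction via Lemma~\ref{LCollapse} in the many-sorted setting, where the sort $u$ of the eliminated state differs from $t$ and nested occurrences of value $c$ must be unwound by the iteration; I would first try a nested Artinian induction on $(R,t)$ using conditions (3) and (4) of Lemma~\ref{LCollapse}.
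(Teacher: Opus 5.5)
\textbf{There is a genuine gap, and it lies in the definition of the family $T^{k}_{t}(b)$.} You couple the placeholders allowed at the leaves to the budget, taking terms over $X\cup Z_{\le k}$. This family is not closed under the decomposition your inductive step relies on.

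Collapsing $R\in T^{k+1}_{t}(b)$ at $b_{k+1}=(c,u)$ via Lemma~\ref{LCollapse} gives an outer term whose internal values lie among $b_{1},\dots,b_{k}$. But that outer term carries the placeholder $z=z_{c}=z_{b_{k+1}}$ at its leaves, so it is not in $T^{k}_{t}(b)$. In fact no $T^{k}$-language contains $z$ at all. Hence $\sbs{z}{K}^{\sharp\mathsf{p}}_{t}(T^{k}_{t}(b))=T^{k}_{t}(b)$ for every $K$, and $(T^{k}_{u}(c))^{\star z}=T^{k}_{u}(c)\cup\{z\}$. Every identity you propose substitutes for $z$ into $T^{k}_{t}(b)$, with or without the extra substitution, so each one reduces to $T^{k+1}_{t}(b)=T^{k}_{t}(b)$, which is false.

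For a concrete instance, take one sort, $X=\varnothing$, a constant $\mathtt{a}$ and a unary $\mathtt{f}$. Let $\mathbf{B}$ count the $\mathtt{f}$'s modulo $2$, and put $b_{1}=0$, $b_{2}=1$.
\begin{itemize}
\item Every non-minimal term of value $0$ over $\{z_{0}\}$ is $\mathtt{f}^{2m}(x)$ with $m\geq 1$. It has the proper non-minimal subterm $\mathtt{f}(x)$ of value $1$, so $T^{1}(0)=\varnothing$.
\item On the other hand, $\mathtt{f}(\mathtt{f}(\mathtt{a}))\in T^{2}(0)$, and its outer part is $\mathtt{f}(z_{1})$.
\end{itemize}

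Your remark that the substitutions should use languages over $X\cup Z_{\le k}$ points the wrong way. In the paper's family $L_{u}(C,K,l)$, as in the sets $T(K,h,i)$ of G\'{e}cseg--Steinby, the leaf set $C$ is a parameter independent of the budget. Its placeholders are always states already \emph{removed} from the budget. The recursion sends the outer and iterated factors to $L_{u}(C\cup\delta^{t,\{z\}},K^{t-1},l)$ and $L_{t}(C\cup\delta^{t,\{z\}},K^{t-1},z)$. A separate innermost factor $L_{t}(C,K^{t-1},z)$ then fills the leftover $z$-leaves with terms not containing $z$.

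\textbf{How to repair your plan.} Besides adopting that two-parameter family, you can fix your own plan in the spirit of your final filtering step. Allow all placeholders of $Z$ at every stage, and keep the minimal terms in $T^{k}$; Lemma~\ref{LCollapse} also collapses minimal terms of value $c$, so they must be available as fillers.

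Then your single-substitution identity $T^{k+1}_{t}(b)=\sbs{z}{(T^{k}_{u}(c))^{\star z}}^{\sharp\mathsf{p}}_{t}(T^{k}_{t}(b))$ does hold:
\begin{itemize}
\item ``$\supseteq$'' follows from the analogues of Claims~\ref{C2} and~\ref{C1} with all placeholders allowed.
\item ``$\subseteq$'' follows from the Artinian argument of Claims~\ref{C5} and~\ref{C6}. That argument now places each collapsed piece directly in $(T^{k}_{u}(c))^{\star z}$ by Lemma~\ref{LItAbs}.
\end{itemize}
No innermost layer is needed, because leftover $z$-leaves are now legitimate. Finally, $\sbs{z_{c}}{\varnothing}^{\sharp\mathsf{p}}$ keeps exactly the terms not containing $z_{c}$. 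Applying it for every placeholder to $T^{n}_{s}(b)$ recovers $g_{s}^{-1}[\{b\}]$.

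Your use of a global enumeration of $\coprod B$ in place of the sortwise budget $K\leq N$ is harmless. The paper's proof of Claim~\ref{C4} also removes a single state at a time.
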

\begin{proof}
Let $s\in S$ and $L\in \mathrm{Rec}_{s}(\mathbf{T}_{\Sigma}(X))$. Then there exists a finite $\Sigma$-algebra $\mathbf{N}$, a $\Sigma$-homomorphism $f\colon \mathbf{T}_{\Sigma}(X)\longrightarrow\mathbf{N}$, and a subset $M\subseteq N_{s}$ such that $f_{s}^{-1}[M]=L$.

If $M$ is empty, so is $L$. In this case, $L=\{\varnothing^{\mathbf{T}_{\mathrm{Reg}(S,\Sigma,X)}(X)}_{s}\}^{X\sharp}_{s}$. Therefore, we can assume that $M$ is not empty.

Without loss of generality, we will assume that $X\cap N=\varnothing^{S}$ and that $N=(n_{s})_{s\in S}\subseteq \mathbb{N}^{S}$. In other words, for each $s\in S$, the $s$-th component of $N$ is a natural number in $\mathbb{N}$. Since it will be used later, we will say that an $S$-sorted set of natural numbers $K=(k_{s})_{s\in S}$ in $\mathbb{N}^{S}$ is smaller than $N$, written $K\leq N$, if it satisfies that, for every $s\in S$, $k_{s}\leq n_{s}$. Following this, if $K\leq N$, then $K\subseteq N$. Furthermore, we will denote by $\bb{\bb{K}}$ the sum of all the components in $K$, that is, $\bb{\bb{K}}=\sum_{s\in S}k_{s}$. This is well-defined since $S$ is finite. Furthermore, if $t\in S$ and $k_{t}\neq 0$, we denote by $K^{t-1}$ the $S$-sorted set $(K^{t-1}_{s})_{s\in S}$
that has, for $s\in S$, as $s$-th component the set
$$
K^{t-1}_{s}=
\begin{cases}
 K_{s}&\text{if } s\neq t;\\
 K_{t}-1 & \text{if } s=t.
\end{cases}
$$
\begin{rem}\label{ROrd}
 Note that $K^{t-1}\leq K\leq N$ and $\bb{\bb{K^{t-1}}}=\bb{\bb{K}}-1$. Finally, for two $S$-sorted subsets $J,K\leq N$, we denote by $\mathrm{max}(J,K)$ the $S$-sorted set $(\mathrm{max}(J_{s},K_{s}))_{s\in S}$. Since $J,K\leq N$, it follows that $\mathrm{max}(J,K)\leq N$.
Note that, for every $s\in S$, $\mathrm{max}(J,K)_{s}=J_{s}\cup K_{s}$. Hence, for every $m\in\mathbb{N}$, $m\in\mathrm{max}(J,K)_{s}$ if and only if $m\in J_{s}$ or $m\in K_{s}$. Moreover, for $s,t\in S$ and $i\in\mathbb{N}$, an element $m\in\mathbb{N}$ belongs to $\delta^{t,i+1}_{s}$ if and only if $t=s$ and $m\leq i$. Finally, if $K_{t}=k_{t}$ and $m\in k_{t}$ with $m\neq k_{t}-1$, then $m\in k_{t}-1=K^{t-1}_{t}$. 
\end{rem}

For the $S$-sorted set of variables $Z = X \cup N$, we will prove the existence of a regular expression $R \in \mathrm{T}_{\mathrm{Reg}(S,\Sigma,Z)}(Z)_{s}$ such that $L = \{R\}^{Z\sharp}_{s}$, proving that $L$ is $s$-regular.

We define the $S$-sorted mapping $h\colon X\cup N\longrightarrow N$, such that, for every $t\in S$, $h_{t}$ is defined as 
$$
h_{t}
\left\lbrace
\begin{array}{ccl}
X_{t}\cup N_{t} & \longrightarrow &
N_{t}\\
z& \longmapsto &
\begin{cases}
 f_{t}(\eta^{X}_{t}(z))&\text{if } z\in X_{t};\\
 z & \text{if } z\in N_{t}.
\end{cases}
\end{array}
\right.
$$

Let $h^{\sharp}$ be the unique $\Sigma$-homomorphism from $\mathbf{T}_{\Sigma}(X\cup N)$ to $\mathbf{N}$ satisfying that $h^{\sharp}\circ \eta^{X\cup N}=h$. Since $h\!\!\upharpoonright_{X}= f\circ \eta^{X}$, we have that $h^{\sharp}\!\!\upharpoonright_{\mathrm{T}_{\Sigma}(X)}=f$. Note also that $h\!\!\upharpoonright_{N}= \mathrm{id}^{N}$.

Let $u\in S$, $C,K\subseteq N$, with $K\leq N$, and $l\in n_{u}$. We define the set $L_{u}(C,K,l)$ to be the set of terms $P$ in $\mathrm{T}_{\Sigma}(Z)_{u}$ satisfying that
\begin{enumerate}
\item[(a)] $P\in \mathrm{T}_{\Sigma}(X\cup C)_{u}$; 
 \item[(b)] for every $t\in S$, and every $Q\in\mathrm{T}_{\Sigma}(X\cup C)_{t}$, if
 $(Q,t)<(P,u)$ and 
 $(Q,t)\notin\mathrm{Min}$,
 then
 $h^{\sharp}_{t}(Q)\in K_{t}$; and 
 \item[(c)] $h^{\sharp}_{u}(P)=l$.
\end{enumerate}

Note that, by
Remark~\ref{RRest}, condition~(b) in the definition above may be verified equivalently in
$\mathrm{T}_{\Sigma}(X\cup C)$ or in
$\mathrm{T}_{\Sigma}(Z)$.

\begin{rem}
Before proceeding any further, we will describe the idea behind the auxiliary languages $L_{u}(C,K,l)$.
We must produce, from the recognizing homomorphism, a regular expression for the language; we do so by eliminating states one at a time, as in the classical state-elimination proof, but indexed by sort. The auxiliary languages $L_{u}(C,K,l)$ are the units on which this elimination operates. The sort $u$ and the target value $l\in n_{u}$ fix the output: we collect terms of sort $u$ whose image under the recognizing homomorphism is the state $l$. The set $C\subseteq N$ is the set of states currently allowed to appear at the leaves, in addition to the variables and constants; enlarging $C$ corresponds to having already opened up certain states for substitution. The bound $K\leq N$ is the crucial one: it is a sortwise budget, recording, for each sort $t$, the set $K_{t}$ of state values that the internal (proper, non-minimal) subterms are still permitted to take. Thus $L_{u}(C,K,l)$ collects exactly the computations that reach $l$ at sort $u$ while passing only through states within the current budget $K$. 
\end{rem}

Since $h^{\sharp}\!\!\upharpoonright_{\mathrm{T}_{\Sigma}(X)}=f$ and $f_{s}^{-1}[M]=L$, the following equality holds
\begin{equation}
 \textstyle
L=\bigcup_{l\in M} L_{s}(\varnothing^{S}, N, l).
\tag{L}\label{EqL}
\end{equation} 

By Equation~\eqref{EqL} and since $M$ is finite, in order to prove Proposition~\ref{PRectoReg}, it suffices to prove the following Main Claim. 

\begin{clm}[Main Claim]\label{CMain}
 For every $u\in S$, every $C\subseteq N$, every $K\subseteq N$ with $K\leq N$, and every $l\in n_{u}$, there exists a regular expression $R_{u}(C,K,l)$ in $\mathrm{T}_{\mathrm{Reg}(S,\Sigma,Z)}(Z)_{u}$ such that 
 \[
 \{R_{u}(C,K,l)\}^{Z\sharp}_{u}=L_{u}(C,K,l).
 \]
 In other words, the sets $L_{u}(C,K,l)$ are $u$-regular. 
\end{clm}
\begin{proof}
We will prove Main Claim~\ref{CMain} by induction on $\bb{\bb{K}}=\sum_{s\in S}k_{s}$. As the proof of Main Claim~\ref{CMain} shows, lowering the budget at one sort, by removing its top state, is what makes the elimination terminate.

\textsf{Base step of the induction.}

If $\bb{\bb{K}}=0$, then $K=(0)_{s\in S}$, that is, the $S$-sorted set of natural numbers that satisfies that, for every $s\in S$, $K_{s}=0$. Let $u\in S$, $C\subseteq N$, and $l\in n_{u}$, the set $L_{u}(C,(0)_{s\in S},l)$ only contains terms $P\in\mathrm{T}_{\Sigma}(X\cup C)_{u}$ without proper non-minimal subterms. Taking
into account Propositions~\ref{PTermChar} and~\ref{PArtOrd} and
Remark~\ref{RRestII}, a term $P\in L_{u}(C,(0)_{s\in S},l)$ has to be 
\begin{enumerate}
 \item $P=x$, for a unique variable $x\in X_{u}$ satisfying that $f_{u}(\eta^{X}_{u}(x))=l$;
 \item $P=l$ (viewed as a variable of sort $u$), if $l\in C_{u}$; or 
 \item $P=\sigma^{\mathbf{T}_{\Sigma}(X\cup C)}$, for a unique constant operation symbol $\sigma\in \Sigma_{\lambda,u}$ satisfying that $h^{\sharp}_{u}(\sigma^{\mathbf{T}_{\Sigma}(X\cup C)})=f_{u}(\sigma^{\mathbf{T}_{\Sigma}(X)})=\sigma^{\mathbf{N}}=l$.
 \item $P=\sigma^{\mathbf{T}_{\Sigma}(X\cup C)}((z_{j})_{j\in \bb{\mathbf{s}}})$, for a unique $\mathbf{s}\in S^{\star}-\{\lambda\}$, a unique operation symbol $\sigma\in \Sigma_{\mathbf{s},u}$, and a unique $(z_{j})_{j\in\bb{\mathbf{s}}}\in \mathrm{T}_{\Sigma}(X\cup C)_{\mathbf{s}}$
 such that, for every $j\in \bb{\mathbf{s}}$, 
 $(z_{j},s_{j})$ in $\mathrm{Min}$, and, finally,
 satisfying that $$h^{\sharp}_{u}(P)=\sigma^{\mathbf{N}}((h^{\sharp}_{s_{j}}(z_{j}))_{j\in\bb{\mathbf{s}}})=l.$$
\end{enumerate}

In any case, since $X$, $C$ and $\Sigma$ are finite, there are only a finite number of options for $P$ to be an element in $L_{u}(C,(0)_{s\in S},l)$. By Lemma~\ref{LTermReg}, applied to the finite $S$-sorted set $Z$, every $P\in L_{u}(C,(0)_{s\in S},l)\subseteq \mathrm{T}_{\Sigma}(Z)_{u}$ is a regular expression in $\mathrm{T}_{\mathrm{Reg}(S,\Sigma,Z)}(Z)_{u}$ with $\{P\}^{Z\sharp}_{u}=\{P\}$. Hence, the regular expression
$$\textstyle R_{u}(C,(0)_{s\in S},l)
=\sum_{P\in L_{u}(C,(0)_{s\in S},l)} P$$ 
satisfies that $\{R_{u}(C,(0)_{s\in S},l)\}^{Z\sharp }_{u}=L_{u}(C,(0)_{s\in S},l)$.

This completes the base step.

\textsf{Inductive step of the induction.}

Let $k\in\mathbb{N}$.
Assume that Main Claim~\ref{CMain} holds for every bound of size $k$, that is, for every $u\in S$, every $C\subseteq N$, every $K\subseteq N$ with $K\leq N$ and $\bb{\bb{K}}=k$, and every $l\in n_{u}$, there exists a regular expression $R_{u}(C,K,l)$ whose interpretation
$\{R_{u}(C,K,l)\}^{Z\sharp}_{u}$
is $L_{u}(C,K,l)$.

Now, let $u\in S$, $C\subseteq N$, $K\subseteq N$ with $K=(k_{s})_{s\in S}$, $K\leq N$ with $\bb{\bb{K}}=k+1$, and $l\in n_{u}$. Finally, let $t\in S$ be such that $k_{t}\neq 0$. Note that, since $0<k_{t}\leq n_{t}$, we have that $k_{t}-1\in n_{t}=N_{t}$; thus $k_{t}-1$ is a variable of sort $t$ in $Z$ and $h^{\sharp}_{t}(k_{t}-1)=k_{t}-1$.

We define $J^{t-1}_{u}(C,K,l)$ as the language
$$
\begin{adjustbox}{max width=\linewidth, keepaspectratio}
$
 \sbs{k_{t}-1}{{
L_{t}(C,K^{t-1},k_{t}-1)
}}^{\sharp\mathsf{p}}_{u}
\left(\sbs{k_{t}-1}{{
L_{t}(C\cup \delta^{t,\{k_{t}-1\}}, K^{t-1}, k_{t}-1)^{\star (k_{t}-1)}
}}^{\sharp\mathsf{p}}_{u}(
L_{u}(C\cup \delta^{t,\{k_{t}-1\}}, K^{t-1}, l)
)\right).
$
\end{adjustbox}
$$

We will prove the following equality
\begin{equation}
\textstyle
L_{u}(C,K,l)= \bigcup_{t\in S, k_{t}\neq 0} \left(
L_{t}(C,K^{t-1},l)
\cup
J^{t-1}_{u}(C,K,l) \right).
\tag{E}\label{EqE}
\end{equation}

Note that all the languages appearing on the right-hand side of Equation~\eqref{EqE} have, by induction, an associated regular expression. As a consequence, the language $L_{u}(C,K,l)$ will also admit a regular expression, since the equality in Equation~\eqref{EqE} uses the operations of union, substitution and iteration. In addition, there are only a finite number of indices describing the union. Thus, proving Equation~\eqref{EqE} then yields the desired regular expression $R_{u}(C,K,l)$.

\begin{rem}
    The recursion behind the inductive step reads as follows. A term counted by $L_{u}(C,K,l)$ either never reaches, at any internal node, the top state $k_{t}-1$ of some sort $t$ whose budget is nonzero, in which case it already lives in a language with strictly smaller budget, $L_{t}(C,K^{t-1},l)$; or it does reach that top state, and then it factors, through the occurrences of $k_{t}-1$, into an outer part and iterated inner parts, each of which again lives in a strictly smaller budget. The language $J^{t-1}_{u}(C,K,l)$ defined below captures exactly this second, factored case by means of a substitution and an iteration. Equation~\eqref{EqE} expresses $L_{u}(C,K,l)$ as the union of these two alternatives, ranging over the sorts $t$ at which a state can be removed; since every language on its right-hand side has strictly smaller budget, the induction hypothesis applies to all of them.
\end{rem}

In order to prove Equation~\eqref{EqE}, we start by proving the inclusion from right to left. To do so, we present two auxiliary claims.

\begin{clm}\label{C1}
 For every $u,w\in S$, every $l\in n_{u}$, every $i\in n_{w}$, every $J,K\leq N$ and every $C,D\subseteq N$, the following inclusion holds
 $$
 \sbs{i}{{
L_{w}(C,K,i)
}}^{\sharp\mathsf{p}}_{u}
\left(
L_{u}(D\cup \delta^{w,\{i\}}, J,l)
\right) \subseteq 
 L_{u}(C\cup (D-\delta^{w,\{i\}}),\mathrm{max}(K,J,\delta^{w,i+1}),l).
 $$
\end{clm}
\begin{proof}
First note that, since $i\in n_{w}$, we have $i+1\leq n_{w}$, so that $\mathrm{max}(K,J,\delta^{w,i+1})\leq N$ and the right-hand side is well defined. 

Let $P\in \sbs{i}{{
L_{w}(C,K,i)
}}^{\sharp\mathsf{p}}_{u}
\left(
L_{u}(D\cup \delta^{w,\{i\}}, J,l)
\right)$. By Lemma~\ref{LSubFam}, there exists $W$ in $L_{u}(D\cup \delta^{w,\{i\}}, J,l)$ and a family of terms $(U^{i}_{\alpha})_{\alpha\in\bb{W}_{i}}$ in $L_{w}(C,K,i)^{\bb{W}_{i}}$ such that 
$$
P=\sbs{i}{{
(U^{i}_{\alpha})_{\alpha\in\bb{W}_{i}}
}}^{\sharp}_{u}(W).
$$

Recalling the definition of $L_{u}(C\cup (D-\delta^{w,\{i\}}),\mathrm{max}(K,J,\delta^{w,i+1}),l)$, we must verify the membership condition of $P$ in it, namely
\begin{enumerate}
\item[(a)] $P\in \mathrm{T}_{\Sigma}(X\cup C\cup (D-\delta^{w,\{i\}}))_{u}$; 
 \item[(b)] for every $t\in S$, and every $Q\in\mathrm{T}_{\Sigma}(X\cup C\cup (D-\delta^{w,\{i\}}))_{t}$, if
 $(Q,t)<(P,u)$ and 
 $(Q,t)\notin\mathrm{Min}$,
 then
 $h^{\sharp}_{t}(Q)\in \mathrm{max}(K,J,\delta^{w,i+1})_{t}$; and 
 \item[(c)] $h^{\sharp}_{u}(P)=l$.
\end{enumerate}

Before doing so, we record that, since, for every $\alpha\in\bb{W}_{i}$, the term $U^{i}_{\alpha}$ belongs to $L_{w}(C,K,i)$, we have that $h^{\sharp}_{w}(U^{i}_{\alpha})=i=h^{\sharp}_{w}(i)$, the last equality because $i\in n_{w}=N_{w}$.

\begin{enumerate}
 \item[(a)] Since $W\in \mathrm{T}_{\Sigma}(X\cup D\cup \delta^{w,\{i\}})_{u}$ and, for every $\alpha\in \bb{W}_{i}$, the term $U^{i}_{\alpha}$ belongs to $\mathrm{T}_{\Sigma}(X\cup C)_{w}$, Lemma~\ref{LSubVar} yields
$
P\in
\mathrm{T}_{\Sigma}(X\cup C\cup(D-\delta^{w,\{i\}}))_{u}
$.
Here we use that $X\cap N=\varnothing^{S}$, so that
$i\notin X_{w}$.

\item[(b)] Let $t\in S$ and let $Q$ be a term in $\mathrm{T}_{\Sigma}(X\cup C\cup (D-\delta^{w,\{i\}}))_{t}$ satisfying that
 $(Q,t)<(P,u)$ and 
 $(Q,t)\notin\mathrm{Min}$. By Lemma~\ref{LSubSubt}, 
one of the following two alternatives holds.

\textsf{Alternative (b.1).} There exists an $\alpha\in\bb{W}_{i}$ such that
$(Q,t)\leq(U^{i}_{\alpha},w)$. If $Q=U^{i}_{\alpha}$, so that $w=t$, then
$h^{\sharp}_{t}(Q)=i\in i+1=\delta^{w,i+1}_{t}$. If
$(Q,t)<(U^{i}_{\alpha},w)$, then $Q$ is a proper non-minimal subterm of
$U^{i}_{\alpha}$. Since $U^{i}_{\alpha}\in L_{w}(C,K,i)$ we conclude that
$h^{\sharp}_{t}(Q)\in K_{t}$.

\textsf{Alternative (b.2).} There exist a $(W',t)<(W,u)$ and a family
$(V^{i}_{\beta})_{\beta\in\bb{W'}_{i}}$ whose members belong to
$\{U^{i}_{\alpha}\mid\alpha\in\bb{W}_{i}\}$ such that
$Q=\sbs{i}{(V^{i}_{\beta})_{\beta\in\bb{W'}_{i}}}(W')$. Since every
$V^{i}_{\beta}$ satisfies
$h^{\sharp}_{w}(V^{i}_{\beta})=i=h^{\sharp}_{w}(i)$, Corollary~\ref{CSubInv}
yields $h^{\sharp}_{t}(Q)=h^{\sharp}_{t}(W')$. We distinguish two
subcases, whether (b.2.i) $(W',t)\in\mathrm{Min}$ or (b.2.ii) $(W',t)\not\in\mathrm{Min}$.

\textsf{Subcase (b.2.i).} If $(W',t)\in\mathrm{Min}$, then either $W'=i$, in which case
$t=w$ and $h^{\sharp}_{t}(Q)=h^{\sharp}_{w}(i)=i\in\delta^{w,i+1}_{t}$,
or $W'\neq i$, in which case $\bb{W'}_{i}=0$, so that $Q=W'$ and $(Q,t)$
is minimal, contradicting our assumption; hence this latter possibility
does not occur. 

\textsf{Subcase (b.2.ii).} If $(W',t)\notin\mathrm{Min}$, then $W'$ is a proper
non-minimal subterm of $W$. Since $W\in L_{u}(D\cup\delta^{w,\{i\}},J,l)$, it follows that
$h^{\sharp}_{t}(W')\in J_{t}$ and, consequently,
$h^{\sharp}_{t}(Q)\in J_{t}$.

In every case we have shown that
$h^{\sharp}_{t}(Q)\in K_{t}\cup J_{t}\cup\delta^{w,i+1}_{t}$, which, by
Remark~\ref{ROrd}, is precisely $\mathrm{max}(K,J,\delta^{w,i+1})_{t}$.

\item[(c)] Taking into account that $W\in L_{u}(D\cup\delta^{w,\{i\}},J,l)$ and Corollary~\ref{CSubInv}, we directly conclude that
$h^{\sharp}_{u}(P)=h^{\sharp}_{u}(W)=l$.
\end{enumerate}

This proves Claim~\ref{C1}. 
\end{proof}

We now continue with the second auxiliary claim.

\begin{clm}\label{C2}
 For every $u\in S$, every $l\in n_{u}$, every $K\leq N$ and every $C\subseteq N$, the following inclusion holds
 $$
L_{u}(C\cup \delta^{u,\{l\}},K,l)^{\star l} \subseteq 
 L_{u}(C\cup \delta^{u,\{l\}},\mathrm{max}(K,\delta^{u,l+1}),l).
 $$
\end{clm}
\begin{proof}
Note first that, since $l\in n_{u}$, we have that $\mathrm{max}(K,\delta^{u,l+1})\leq N$. Let us recall from Definition~\ref{DIter} that 
$$
\textstyle
L_{u}(C\cup \delta^{u,\{l\}},K,l)^{\star l}=\bigcup_{i\in\mathbb{N}}(L_{u}(C\cup \delta^{u,\{l\}},K,l))^{i,l}.$$

Therefore, in order to prove the inclusion, it suffices to prove that for every $i\in \mathbb{N}$, the following inclusion holds
$$(L_{u}(C\cup \delta^{u,\{l\}},K,l))^{i,l}\subseteq L_{u}(C\cup \delta^{u,\{l\}},\mathrm{max}(K,\delta^{u,l+1}),l).$$

We will prove the last statement by induction on $i$.

\textsf{Base step of the induction}

When $i=0$, we have that $(L_{u}((C\cup \delta^{u,\{l\}}),K,l))^{0,l}=\{l\}$. Note that
\begin{enumerate}
 \item[(a)] $l\in \mathrm{T}_{\Sigma}(X\cup C\cup \delta^{u,\{l\}})_{u}$;
 \item[(b)] $l$ has no proper subterms of any sort;
 \item[(c)] $h^{\sharp}_{u}(l)=l$.
\end{enumerate}

Thus, $l\in L_{u}(C\cup \delta^{u,\{l\}}, \mathrm{max}(K,\delta^{u,l+1}),l)$.

\textsf{Inductive step of the induction}

Assume that the statement holds for $i\in \mathbb{N}$, that is, assume that 
$$(L_{u}(C\cup \delta^{u,\{l\}},K,l))^{i,l}\subseteq L_{u}(C\cup \delta^{u,\{l\}}, \mathrm{max}(K,\delta^{u,l+1}),l).$$ 

Let us prove the statement for $i+1$. Following Definition~\ref{DIter}, we have that $(L_{u}(C\cup \delta^{u,\{l\}},K,l))^{i+1,l}$ is given by
$$
(L_{u}(C\cup \delta^{u,\{l\}},K,l))^{i,l}\cup \sbs{l}{(L_{u}(C\cup \delta^{u,\{l\}},K,l))^{i,l}}^{\sharp\mathsf{p}}_{u}(
L_{u}(C\cup \delta^{u,\{l\}},K,l)
).
$$

By the induction hypothesis, the first member of the above union is contained in $L_{u}(C\cup \delta^{u,\{l\}}, \mathrm{max}(K,\delta^{u,l+1}),l)$. For the second member of the above union, the following sequence of inclusions holds
\begin{flushleft}
 $\sbs{l}{(L_{u}(C\cup \delta^{u,\{l\}},K,l))^{i,l}}^{\sharp\mathsf{p}}_{u}(
L_{u}(C\cup \delta^{u,\{l\}},K,l)
)$
\allowdisplaybreaks
\begin{align*}
 & \subseteq\sbs{l}{L_{u}(C\cup \delta^{u,\{l\}},
\mathrm{max}(K,\delta^{u,l+1})
,l)}^{\sharp\mathsf{p}}_{u}(
L_{u}(C\cup \delta^{u,\{l\}},K,l)
)\tag{Ind. \& Rem.~\ref{RMon}}
\\
 &\subseteq
L_{u}(C\cup \delta^{u,\{l\}}, \mathrm{max}(K,\delta^{u,l+1}),l). 
\tag{Claim~\ref{C1}}
\end{align*}
\end{flushleft}

This completes the proof of Claim~\ref{C2}.
\end{proof}

We are now in position to prove the inclusion from right to left of Equation~\eqref{EqE}.

\begin{clm}[Right to left]\label{C3} The following inclusion holds
 $$\textstyle
 \bigcup_{t\in S, k_{t}\neq 0} \left(
L_{t}(C,K^{t-1},l)
\cup
J^{t-1}_{u}(C,K,l) \right) \subseteq L_{u}(C,K,l).$$ 
\end{clm}

\begin{proof}
Let $t$ be a sort in $S$ for which $k_{t}\neq 0$.
By definition, $L_{u}(C,K^{t-1},l)\subseteq L_{u}(C,K,l)$. Thus, it suffices to prove that $J_{u}^{t-1}(C,K,l)\subseteq L_{u}(C,K,l)$. First, the following chain of inclusions holds
\begin{flushleft}
 $
\sbs{k_{t}-1}{{
L_{t}(C\cup \delta^{t,\{k_{t}-1\}}, K^{t-1}, k_{t}-1)^{\star (k_{t}-1)}
}}^{\sharp\mathsf{p}}_{u}(
L_{u}(C\cup \delta^{t,\{k_{t}-1\}}, K^{t-1}, l)
) $
\allowdisplaybreaks
\begin{align*}
& \subseteq
\sbs{k_{t}-1}{{
L_{t}(C\cup \delta^{t,\{k_{t}-1\}}, K, k_{t}-1)
}}^{\sharp\mathsf{p}}_{u}
 (
L_{u}(C\cup \delta^{t,\{k_{t}-1\}}, K^{t-1}, l)
)
\tag{Cl.~\ref{C2} \& Rem.~\ref{RMon}}\\
&\subseteq 
L_{u}(C\cup \delta^{t,\{k_{t}-1\}}, K, l).
\tag{Claim~\ref{C1}}
\end{align*}
\end{flushleft}

Consequently, the following sequence of inclusions holds
\begin{flushleft}
$J^{t-1}_{u}(C,K,l)$
\allowdisplaybreaks
\begin{align*}
& = 
\begin{adjustbox}{max width=.95\linewidth, keepaspectratio}
 $\sbs{k_{t}-1}{{
L_{t}(C,K^{t-1},k_{t}-1)
}}^{\sharp\mathsf{p}}_{u}
\left(\sbs{k_{t}-1}{{
L_{t}(C\cup \delta^{t,\{k_{t}-1\}}, K^{t-1}, k_{t}-1)^{\star (k_{t}-1)}
}}^{\sharp\mathsf{p}}_{u}(
L_{u}(C\cup \delta^{t,\{k_{t}-1\}}, K^{t-1}, l)
)\right)$
\end{adjustbox}\\
 &
 \subseteq
 \sbs{k_{t}-1}{{
L_{t}(C,K^{t-1},k_{t}-1)
}}^{\sharp\mathsf{p}}_{u}(
L_{u}(C\cup \delta^{t,\{k_{t}-1\}}, K, l)
)\tag{See above \& Rem.~\ref{RMon}}
\\&\subseteq 
L_{u}(C,K,l).\tag{Claim~\ref{C1}}
\end{align*}
\end{flushleft}

This completes the proof of Claim~\ref{C3}. 
\end{proof}

Thus, the inclusion from right to left of Equation~\eqref{EqE} holds. Let us now move to prove the other inclusion, from left to right. Its proof will also require two auxiliary claims.

\begin{clm}[Left to right]\label{C4} The following inclusion holds
 $$\textstyle
 L_{u}(C,K,l) \subseteq 
 \bigcup_{t\in S, k_{t}\neq 0} \left(
L_{t}(C,K^{t-1},l)
\cup
J^{t-1}_{u}(C,K,l) \right).$$ 
\end{clm}

\begin{proof}
Let $P\in L_{u}(C,K,l)$. Since $\bb{\bb{K}}=k+1\geq 1$, there exists a sort $t\in S$ with $k_{t}\neq 0$. We fix one such $t$ and we will prove that 
\[ 
P\in L_{u}(C,K^{t-1},l)\cup J_{u}^{t-1}(C,K,l).
\tag{P}\label{EqP}
\]

If $(P,u)\in\mathrm{Min}$, the defining conditions for $P$ being a term in $L_{u}(C,K^{t-1},l)$ hold:
\begin{enumerate}
 \item[(a)] $P\in\mathrm{T}_{\Sigma}(X\cup C)_{u}$, since $P\in L_{u}(C,K,l)$;
 \item[(b)] holds vacuously, since $P$ has no proper subterms;
\item[(c)] $h^{\sharp}_{u}(P)=l$, since $P\in L_{u}(C,K,l)$.
\end{enumerate}

Thus, if $(P,u)\in\mathrm{Min}$, then $P\in L_{u}(C,K^{t-1},l)$. We may therefore assume that
$(P,u)\notin\mathrm{Min}$. In this case we will prove that $P\in
J_{u}^{t-1}(C,K,l)$. 

To simplify the notation, we write
\begin{align*}
 z&=k_{t}-1,
 &
 L&=L_{t}(C\cup \delta^{t,\{z\}}, K^{t-1},l),
 &
 L'&=L_{t}(C,K^{t-1},z).
\end{align*}
Following these abbreviations, we have that 
\[J_{u}^{t-1}(C,K,l)=\sbs{z}{L'}^{\sharp\mathsf{p}}_{u}
(\sbs{z}{L^{\star\, z}}^{\sharp\mathsf{p}}_{u}
(L_{u}(C\cup\delta^{t,\{z\}},K^{t-1},l))).
\]

In order to prove that $(P,u)\notin\mathrm{Min}$ implies that $P\in
J_{u}^{t-1}(C,K,l)$, we will need two auxiliary claims. For the proof of these claims, we will use the lemmas of Subsection~\ref{SSI} applied to the free $\Sigma$-algebra $\mathbf{T}_{\Sigma}(Y)$ on the $S$-sorted set $Y=X\cup C \cup \delta^{t,\{z\}}$, with the
homomorphism $h^{\sharp}\!\!\upharpoonright_{\mathrm{T}_{\Sigma}(Y)}$ from
$\mathbf{T}_{\Sigma}(Y)$ to $\mathbf{N}$ and the variable $z\in Y_{t}$,
for which $h^{\sharp}_{t}(z)=z$; note that
$\mathrm{T}_{\Sigma}(X\cup C)\subseteq\mathrm{T}_{\Sigma}(Y)$. For
legibility we keep writing $h^{\sharp}$ for this restriction, which is
harmless by Remark~\ref{RRest}.

\begin{clm}\label{C5} 
Let $v\in S$ and let
$T\in\mathrm{T}_{\Sigma}(X\cup C)_{v}$ be such that
$(T,v)\notin\mathrm{Min}$ and every proper non-minimal subterm of $(T,v)$
is a proper non-minimal subterm of $(P,u)$. Let $V$
and $(Q^{z}_{\alpha})_{\alpha\in\bb{V}_{z}}$ be provided by
Lemma~\ref{LCollapse} applied to $T$. Then $$V\in
L_{v}(C\cup\delta^{t,\{z\}},K^{t-1},h^{\sharp}_{v}(T)).$$
\end{clm}
\begin{proof}
The conditions for $V$ being a term in $L_{v}(C\cup\delta^{t,\{z\}},K^{t-1},h^{\sharp}_{v}(T))$ hold:
\begin{enumerate}
 \item[(a)] $V\in\mathrm{T}_{\Sigma}(X\cup C)_{v}$ by
construction. Thus, $V\in\mathrm{T}_{\Sigma}(X\cup C\cup\delta^{t,\{z\}})_{v}$;
\item[(b)] let $(W,w)<(V,v)$ with
$(W,w)\notin\mathrm{Min}$. By condition~(4b) of
Lemma~\ref{LCollapse}, there exists a non-minimal $(W',w)<(T,v)$
with $h^{\sharp}_{w}(W')=h^{\sharp}_{w}(W)$; by the hypothesis on
$T$, $(W',w)$ is a proper non-minimal subterm of $(P,u)$ and,
since $P\in L_{u}(C,K,l)$, we obtain $h^{\sharp}_{w}(W)\in K_{w}$. If
$w\neq t$, then $K^{t-1}_{w}=K_{w}$ and we are done. If $w=t$, then,
by condition~(4a) of Lemma~\ref{LCollapse},
$h^{\sharp}_{t}(W)\neq z=k_{t}-1$, so that
$h^{\sharp}_{t}(W)\in k_{t}-1=K^{t-1}_{t}$; and
\item[(c)] $h^{\sharp}_{v}(V)=h^{\sharp}_{v}(T)$ by condition~(2) of
Lemma~\ref{LCollapse}.
\end{enumerate}

This proves Claim~\ref{C5}.
\end{proof}

Note that Claim~\ref{C5} can be applied to the case $v=u$ and $T=P$, and to every non-minimal proper subterm $T$ of $P$. Now we introduce the second auxiliary claim.

\begin{clm}\label{C6} For every $Q\in\mathrm{T}_{\Sigma}(X\cup C)_{t}$
with $(Q,t)<(P,u)$ and $h^{\sharp}_{t}(Q)=z$, the following statement holds
$$
Q\in\sbs{z}{L'}^{\sharp\mathsf{p}}_{t}(L^{\star\, z}).
$$
\end{clm}
\begin{proof}
The proof is done by Artinian induction on $(\coprod\mathrm{T}_{\Sigma}(Z),\leq_{\mathbf{T}_{\Sigma}(Z)})$, where, we recall, $Z=X\cup N$ and $C\leq N$.

\textsf{Base step of the Artinian induction.}

Let $(Q,t)$ be a minimal element in $(\coprod\mathrm{T}_{\Sigma}(Z),\leq_{\mathbf{T}_{\Sigma}(Z)})$ with $(Q,t)<(P,u)$ and
$h^{\sharp}_{t}(Q)=z$. Then the defining conditions for $Q$ being a term in $L'$ hold:
\begin{enumerate}
 \item[(a)] Since $P\in\mathrm{T}_{\Sigma}(X\cup C)_{u}$ and $(Q,t)<(P,u)$, then $Q\in\mathrm{T}_{\Sigma}(X\cup C)_{t}$;
 \item[(b)] holds vacuously since $Q$ has no proper subterms; and
 \item[(c)] $h^{\sharp}_{t}(Q)=z$ by assumption.
\end{enumerate}

Thus, $Q\in L'$. Moreover, since $z\in L^{0\, z}\subseteq L^{\star\, z}$
and $\sbs{z}{L'}^{\sharp}_{t}(z)=L'$, we
conclude that
$
Q\in L'=\sbs{z}{L'}^{\sharp}_{t}(z)\subseteq
\sbs{z}{L'}^{\sharp\mathsf{p}}_{t}(L^{\star\, z}).
$

\textsf{Inductive step of the Artinian induction.}

Let $(Q,t)$ be a non-minimal element in $(\coprod\mathrm{T}_{\Sigma}(Z),\leq_{\mathbf{T}_{\Sigma}(Z)})$ with $(Q,t)<(P,u)$ and
$h^{\sharp}_{t}(Q)=z$. Assume that the statement holds for every
$Q'\in\mathrm{T}_{\Sigma}(Z)_{t}$ with $(Q',t)<(Q,t)$ and
$h^{\sharp}_{t}(Q')=z$.

Since $(Q,t)\notin\mathrm{Min}$, we can apply Lemma~\ref{LCollapse}, in
$\mathbf{T}_{\Sigma}(Y)$ and with respect to $h^{\sharp}$ and $z$, to the
term $Q$, obtaining a term $V\in\mathrm{T}_{\Sigma}(Y)_{t}$ and a family
$(Q'^{z}_{\alpha})_{\alpha\in\bb{V}_{z}}$ satisfying conditions~(1)--(4) of
that lemma. Since every proper non-minimal subterm of $(Q,t)$ is a proper
non-minimal subterm of $(P,u)$, Claim~\ref{C5} applies to $T=Q$ and yields
$V\in L_{t}(C\cup\delta^{t,\{z\}},K^{t-1},z)=L$.

On the other hand, by condition~(3) of Lemma~\ref{LCollapse}, every
member $Q'^{z}_{\alpha}$ of the family satisfies $(Q'^{z}_{\alpha},t)<(Q,t)$ and
$h^{\sharp}_{t}(Q'^{z}_{\alpha})=z$; hence, by the induction hypothesis,
$Q'^{z}_{\alpha}\in\sbs{z}{L'}^{\sharp\mathsf{p}}_{t}(L^{\star\, z})$, for every
$\alpha\in\bb{V}_{z}$. Therefore, by condition~(1) of
Lemma~\ref{LCollapse} and Lemma~\ref{LSubFam},
$$
Q=\sbs{z}{(Q'^{z}_{\alpha})_{\alpha\in\bb{V}_{z}}}(V)\in
\sbs{z}{\sbs{z}{L'}^{\sharp\mathsf{p}}_{t}(L^{\star\, z})}
^{\sharp}_{t}(V)\subseteq
\sbs{z}{\sbs{z}{L'}^{\sharp\mathsf{p}}_{t}(L^{\star\, z})}
^{\sharp\mathsf{p}}_{t}(L),
$$
the last inclusion because $V\in L$. Applying Lemma~\ref{LSubComp}, then
Lemma~\ref{LItAbs} together with Remark~\ref{RMon}, we obtain
$$
Q\in
\sbs{z}{L'}^{\sharp\mathsf{p}}_{t}\left(
\sbs{z}{L^{\star\, z}}^{\sharp\mathsf{p}}_{t}(L)\right)
\subseteq
\sbs{z}{L'}^{\sharp\mathsf{p}}_{t}(L^{\star\, z}).
$$

This proves Claim~\ref{C6}.
\end{proof}

We are now in position to prove that $(P,u)\notin\mathrm{Min}$ implies that $P\in
J_{u}^{t-1}(C,K,l)$.

Since
$(P,u)\notin\mathrm{Min}$, we may apply Lemma~\ref{LCollapse}, in
$\mathbf{T}_{\Sigma}(Y)$ and with respect to $h^{\sharp}$ and $z$, to the
term $P$, obtaining a $W\in\mathrm{T}_{\Sigma}(Y)_{u}$ and a family
$(Q^{z}_{\alpha})_{\alpha\in\bb{W}_{z}}$ such that
$P=\sbs{z}{(Q^{z}_{\alpha})_{\alpha\in\bb{W}_{z}}}(W)$ and, for every
$\alpha\in\bb{W}_{z}$, $(Q^{z}_{\alpha},t)<(P,u)$ and
$h^{\sharp}_{t}(Q^{z}_{\alpha})=z$. Claim~\ref{C5}, applied to $T=P$, yields
$W\in L_{u}(C\cup\delta^{t,\{z\}},K^{t-1},l)$, taking into account that
$h^{\sharp}_{u}(P)=l$. Moreover, by Claim~\ref{C6}, for every $\alpha\in\bb{W}_{z}$ we have
$Q^{z}_{\alpha}\in\sbs{z}{L'}^{\sharp\mathsf{p}}_{t}(L^{\star\, z})$. Hence, by
Lemma~\ref{LSubFam},
$$
\begin{adjustbox}{max width=\linewidth, keepaspectratio}
$
P=\sbs{z}{(Q^{z}_{\alpha})_{\alpha\in\bb{W}_{z}}}(W)\in
\sbs{z}{\sbs{z}{L'}^{\sharp\mathsf{p}}_{t}(L^{\star\, z})}
^{\sharp}_{u}(W)\subseteq
\sbs{z}{\sbs{z}{L'}^{\sharp\mathsf{p}}_{t}(L^{\star\, z})}
^{\sharp\mathsf{p}}_{u}
\left(L_{u}(C\cup\delta^{t,\{z\}},K^{t-1},l)\right)
$
\end{adjustbox},
$$
and, applying Lemma~\ref{LSubComp}, then Lemma~\ref{LItAbs} together with
Remark~\ref{RMon},
$$
P\in
\sbs{z}{L'}^{\sharp\mathsf{p}}_{u}\left(
\sbs{z}{L^{\star\, z}}^{\sharp\mathsf{p}}_{u}
\left(L_{u}(C\cup\delta^{t,\{z\}},K^{t-1},l)\right)\right)
=J^{t-1}_{u}(C,K,l),
$$
recalling that $z=k_{t}-1$. Therefore,  Statement~\eqref{EqP} holds.

This completes the proof of Claim~\ref{C4}.
\end{proof}

Thus, the inclusion from left to right of Equation~\eqref{EqE} holds. 
All in all, we conclude that 
Equation~\eqref{EqE} holds. This completes the proof of Main Claim~\ref{CMain}.
\end{proof}

We can now finish the proof of Proposition~\ref{PRectoReg}. By Main Claim~\ref{CMain}, let us consider, for every $l\in M$, the regular expression $R_{s}(\varnothing^{S},N,l)$. Since $M$ is finite and nonempty, the regular expression
$
\textstyle
R=\sum_{l\in M}R_{s}(\varnothing^{S},N,l)
\in\mathrm{T}_{\mathrm{Reg}(S,\Sigma,Z)}(Z)_{s}
$ 
satisfies that
$$
\textstyle
\{R\}^{Z\sharp}_{s}
=\bigcup_{l\in M}\{R_{s}(\varnothing^{S},N,l)\}^{Z\sharp}_{s}
=\bigcup_{l\in M}L_{s}(\varnothing^{S},N,l)=L.
$$
Since $Z$ is a finite $S$-sorted set with $X\subseteq Z$, by
Definition~\ref{DReg} we conclude that $L$ is $s$-regular, i.e., that
$L\in\mathrm{Reg}_{s}(\mathbf{T}_{\Sigma}(X))$. 
\end{proof}

\begin{rem}
The correspondence with the proof of Lemma~2.5.7 in~\cite{GS84} is as
follows. The languages $L_{u}(C,K,l)$ are the many-sorted counterparts of
the sets $T(K,h,i)$ of~\cite{GS84}, the size $\bb{\bb{K}}$ of the
bound playing the role of the number $h$, and the Equality~\eqref{EqL} corresponds
to the equality $T(\mathbf{A})=\bigcup(T(\varnothing,k,i)\mid i\in A')$.
Equation~\eqref{EqE} is the many-sorted counterpart of the equation
$$
\begin{adjustbox}{max width=\linewidth, keepaspectratio}
$
T(K,h+1,i)=T(K,h,i)\cup
T(K,h,h+1)\cdot_{h+1}T(K\cup h+1,h,h+1)^{\ast h+1}
\cdot_{h+1}T(K\cup h+1,h,i)
$
\end{adjustbox}
$$
of~\cite{GS84}, with the novelty that the state released at each
inductive step must be chosen sortwise, whence the union over the sorts
$t$ with $k_{t}\neq 0$. The different parts of the proof of Proposition~\ref{PRectoReg} develop in detail the
inclusion that in~\cite{GS84} is declared to be obvious from the
construction, while Claim~\ref{C4} corresponds to the factorization
$$
t\in\{p_{1},\ldots,p_{d}\}\cdot_{h+1}\{q_{11},\ldots,q_{1e_{1}}\}
\cdot_{h+1}\cdots
\cdot_{h+1}\{q_{j1},\ldots,q_{je_{j}}\}
\cdot_{h+1}\{r\}.
$$ 

In the notation of Claim~\ref{C4}, the term
$W$ plays the role of the outermost factor $r$, the membership of the
terms $Q^{z}_{\alpha}$ in
$\sbs{z}{L'}^{\sharp\mathsf{p}}_{t}(L^{\star\, z})$, established in Claim~\ref{C6}
by induction on the Artinian order, accounts for the intermediate factors
$q_{11},\ldots,q_{je_{j}}$, and the innermost factors
$p_{1},\ldots,p_{d}$ correspond to the terms of $L'$ substituted at the
last stage.
\end{rem}

Combining Corollary~\ref{CRegtoRec} with Proposition~\ref{PRectoReg} one
obtains the Kleene theorem for free many-sorted algebras.

\begin{thm}[Free many-sorted Kleene]\label{TKleene} For every $s\in S$, 
$\mathrm{Rec}_{s}(\mathbf{T}_{\Sigma}(X))=\mathrm{Reg}_{s}(\mathbf{T}_{\Sigma}(X))$.
\end{thm}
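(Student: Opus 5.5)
The theorem is the conjunction of the two inclusions already established, so the plan is simply to combine them. Fix $s\in S$. For $\mathrm{Reg}_{s}(\mathbf{T}_{\Sigma}(X))\subseteq\mathrm{Rec}_{s}(\mathbf{T}_{\Sigma}(X))$ I invoke Corollary~\ref{CRegtoRec}. Its content is that an $s$-regular language is of the form $\{R\}^{Z\sharp}_{s}$ for a finite $Z\supseteq X$. Since $(\mathrm{Rec}_{t}(\mathbf{T}_{\Sigma}(Z)))_{t\in S}$ is a closed subset of $\mathbf{T}^{\mathrm{Reg}}_{\Sigma}(Z)^{\wp}$, by Corollaries~\ref{CRecOp}, \ref{CRecSubs} and Propositions~\ref{PsRecCl}, \ref{PRecIt}, the value $\{R\}^{Z\sharp}_{s}$ lies in $\mathrm{Rec}_{s}(\mathbf{T}_{\Sigma}(Z))$. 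Restricting to the subalgebra $\mathbf{T}_{\Sigma}(X)$ via Proposition~\ref{PsRecSub} then gives $s$-recognizability in $\mathbf{T}_{\Sigma}(X)$.

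For the reverse inclusion I invoke Proposition~\ref{PRectoReg}. Its proof starts from a recognizing homomorphism $f\colon\mathbf{T}_{\Sigma}(X)\longrightarrow\mathbf{N}$ with $f_{s}^{-1}[M]=L$. It enlarges the variables to $Z=X\cup N$, with the states of $\mathbf{N}$ used as auxiliary variables, and decomposes $L=\bigcup_{l\in M}L_{s}(\varnothing^{S},N,l)$. It then shows by induction on the budget $\bb{\bb{K}}$ that each $L_{u}(C,K,l)$ is denoted by a regular expression. The base case is a finite sum of terms, via Lemma~\ref{LTermReg}. The inductive step uses Equation~\eqref{EqE}, which expresses $L_{u}(C,K,l)$ through union, $z$-substitution and $z$-iteration of languages with smaller budget.

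The main obstacle, already handled there, is the left-to-right inclusion of Equation~\eqref{EqE}. A term reaching the released top state $k_{t}-1$ at sort $t$ must be factored into an outer part and iterated inner parts, each respecting the reduced budget $K^{t-1}$. This is exactly what Lemma~\ref{LCollapse} together with Claims~\ref{C5} and~\ref{C6} accomplish. With both inclusions in hand, antisymmetry of $\subseteq$ yields $\mathrm{Rec}_{s}(\mathbf{T}_{\Sigma}(X))=\mathrm{Reg}_{s}(\mathbf{T}_{\Sigma}(X))$ for every $s\in S$, under the standing finiteness assumptions on $S$, $\Sigma$ and $X$.
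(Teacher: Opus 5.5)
Your proposal is correct and is exactly the paper's argument: the theorem is obtained by combining Corollary~\ref{CRegtoRec} (regular implies recognizable, via closure of $(\mathrm{Rec}_{t}(\mathbf{T}_{\Sigma}(Z)))_{t\in S}$ under the regular operations and restriction to the subalgebra $\mathbf{T}_{\Sigma}(X)$) with Proposition~\ref{PRectoReg} (recognizable implies regular, via the budget-indexed languages $L_{u}(C,K,l)$ and Equation~\eqref{EqE}). Your summaries of both ingredients accurately reflect how the paper establishes them.
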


We illustrate the theorem and the role of the sort discipline with a small example.

\begin{exa}\label{EList}
Take as set of sorts $S=\{n,l\}$, intended to denote numerals and lists of numerals, and the signature $\Sigma$ with a constant symbol denoting the numeral zero
$\mathtt{z}\in\Sigma_{\lambda,n}$, a unary successor operation on numerals
$\mathtt{s}\in\Sigma_{n,n}$, a constant symbol denoting the empty list
$\mathtt{nil}\in\Sigma_{\lambda,l}$, and a constructor operation that takes a numeral and a list of numerals and returns a list of numerals,
$\mathtt{cons}\in\Sigma_{nl,l}$.
Taking $X=\varnothing^{S}$, the free algebra $\mathbf{T}_{\Sigma}(\varnothing^{S})$ has, at sort $n$, the unary numerals of the form $\mathtt{s}^{k}(\mathtt{z})$, for $k\in \mathbb{N}$, and, at sort $l$, the finite lists
$\mathtt{cons}(t_{0},\mathtt{cons}(t_{1},\ldots,\mathtt{cons}(t_{m-1},\mathtt{nil})\ldots))$, with $m\in\mathbb{N}$
and each $t_{i}$ a numeral, for every $i\in m$. Both sorts are inhabited, and $\mathtt{cons}$ binds the two together: a term of sort $l$ different from $\mathtt{nil}$ cannot be built without terms of sort $n$.

Consider the language $L\subseteq \mathrm{T}_{\Sigma}(\varnothing^{S})_{l}$ of those lists all of whose entries are even, that is, of the form $\mathtt{s}^{2k}(\mathtt{z})$, for some $k\in \mathbb{N}$. The defining condition lives at sort $n$, but the language itself lives at sort $l$; recognizing $L$ therefore forces a property at one sort to be tracked across $\mathtt{cons}$ into another.

The language $L$ is $l$-recognizable. Indeed, let $\mathbf{A}$ be the $\Sigma$-algebra with $A_{n}=\{\mathsf{even},\mathsf{odd}\}$ and $A_{l}=\{\top,\bot\}$, and operations
$\mathtt{z}^{\mathbf{A}}=\mathsf{even}$,
$\mathtt{s}^{\mathbf{A}}(\mathsf{even})=\mathsf{odd}$, $\mathtt{s}^{\mathbf{A}}(\mathsf{odd})=\mathsf{even}$,
$\mathtt{nil}^{\mathbf{A}}=\top$, and
$\mathtt{cons}^{\mathbf{A}}(\mathsf{even},\top)=\top$, with $\mathtt{cons}^{\mathbf{A}}$ returning $\bot$ on every other pair. Then $\mathbf{A}$ is finite at both sorts, and the unique homomorphism $f\colon\mathbf{T}_{\Sigma}(\varnothing^{S})\longrightarrow\mathbf{A}$ satisfies $f_{l}^{-1}[\{\top\}]=L$. Hence $L\in\mathrm{Rec}_{l}(\mathbf{T}_{\Sigma}(\varnothing^{S}))$.

The language $L$ is $l$-regular. Indeed, by Theorem~\ref{TKleene} this is automatic, but it is instructive to exhibit the expression and see the sortwise operators at work. Consider the $S$-sorted set $Z$ of variables $y\in Z_{n}$ and $z\in Z_{l}$, so $\varnothing^{S}\subseteq Z$. The even numerals are denoted at sort $n$ by iterating $\mathtt{s}(\mathtt{s}(y))$ from $\mathtt{z}$,
\[
E \;=\; \sbs{y}{\mathtt{z}}^{\sharp\mathsf{p}}_{n}\bigl((\mathtt{s}(\mathtt{s}(y)))^{\star\,y}\bigr)
\;\subseteq\;\mathrm{T}_{\Sigma}(\varnothing^{S})_{n},
\]
and the even lists are obtained at sort $l$ by iterating $\mathtt{cons}(y,z)$ from $\mathtt{nil}$ and then substituting $E$ for $y$,
\[
L \;=\; \sbs{z}{\mathtt{nil}}^{\sharp\mathsf{p}}_{l}\Bigl(\sbs{y}{E}^{\sharp\mathsf{p}}_{l}\bigl((\mathtt{cons}(y,z))^{\star\,z}\bigr)\Bigr).
\]
The expression for $L$ at sort $l$ calls the expression $E$ at sort $n$ through the $y$-substitution that plants $E$ into the first argument of $\mathtt{cons}$. This is exactly the sort-indexed interplay recorded by the  $Z$-sorted set $K$ and by the union over sorts in Equation~\eqref{EqE}: the $l$-component is built only after the $n$-component has been settled, and the two iteration steps are charged to different components of $K$.
\end{exa}

\noindent\textbf{Acknowledgements.} The second and fourth authors were supported by the PID2024\allowbreak-159495NB\allowbreak-I00 grant from the Ministerio de Ciencia e Innovaci\'{o}n, Spain, and the CIAICO/\allowbreak2023/\allowbreak007 grant from the Conselleria d'Educaci\'{o}, Universitats i Ocupaci\'{o}, Generalitat Valenciana. The second author was supported by the grant CIACIF/\allowbreak2022/\allowbreak489 from the Conselleria d'Educaci\'{o}, Universitats i Ocupaci\'{o}, Generalitat Valenciana co-funded by the European Social Fund.  The fourth author has held a Specially Appointed Professor position at Nantong University during the completion of this work. 

We express our gratitude to Juan Climent Vidal for his mentorship and enduring friendship. Special thanks to Asunci\'{o}n de Montesa for her categorical knowledge. 

\bibliographystyle{alphaurl}
\bibliography{MSKleene}

\end{document}